\declaretheorem{proposition}
\declaretheorem[sibling=proposition]{lemma}
\declaretheorem[sibling=proposition]{theorem}
\declaretheorem[sibling=proposition]{corollary}
\declaretheorem[sibling=proposition]{remark}
\declaretheorem[style=remark]{example}
\declaretheorem[style=remark]{claim}
\newcommand{\minu}{\triangleleft}
\newcommand{\N}{\mathbb{N}}
\newcommand{\set}[1]{\{ #1 \}}
\newcommand{\pow}[2]{\mathcal{P}_{#1}(#2)}
\newcommand{\prettyexists}[2]{\exists #1 \colon #2}
\newcommand{\prettyforall}[2]{\forall #1 \colon #2}
\newcommand{\srcs}[1]{V_{#1}}
\newcommand{\srces}[1]{\text{\sc srces}(#1)}
\newcommand{\src}[1]{\text{\sc src}(#1)}
\newcommand{\trg}[1]{\text{\sc trg}(#1)}
\newcommand{\ord}[1]{\text{\sc ord}(#1)}
\newcommand{\inn}[1]{\text{\sc in}(#1)}
\newcommand{\precn}[1]{\text{\sc pre}(#1)}
\newcommand{\outn}[1]{\text{\sc out}(#1)}
\newcommand{\ind}[1]{\text{\sc in-deg}(#1)}
\newcommand{\outd}[1]{\text{\sc out-deg}(#1)}
\newcommand{\gr}[1]{\mathcal{G}_{#1}}
\newcommand{\ndegordgre}[2]{\mathbf{N}^{#1, {\geq}#2}} %_{#5 \set{#3,#4}}}
\newcommand{\ndegordeq}[2]{\mathbf{N}^{#1, {}#2}} %_{#5 \set{#3,#4}}}
\newcommand{\T}{\mathcal{T}}
\renewcommand{\H}{\mathcal{H}}
\newcommand{\acallgamma}{\mathbf{C}}
\newcommand{\acall}{\mathbf{D}}
\newcommand{\acallpar}[1]{\mathbf{D}^{#1}}
\newcommand{\kerne}[1]{\text{\sc ker}(#1)}
\newcommand{\ndegall}{\mathbf{N}}
\newcommand{\ndegn}[1]{\mathbf{N}^{{\geq}#1}}
\newcommand{\ap}[3]{\mathbf{P}_{\set{#1, #2} #3}}
\newcommand{\apaX}[2]{\mathbf{C}^{#1}_{#2}}
\newcommand{\zerovector}{\mathbf{0}}
\newcommand{\supp}[1]{\text{\sc supp}(#1)}
\newcommand{\para}[1]{\medskip\subsubsection*{\bf #1}}
\newcommand{\parainproof}[1]{\medskip\subsubsection*{\emph{\em #1}}}
\def\BState{\State\hskip-\ALG@thistlm}
\algrenewcommand{\algorithmiccomment}[1]{\qquad$\rightarrow$ #1}
\newcommand{\tuple}[1]{\langle #1 \rangle}
\newcommand{\atoms}{\text{\sc Atoms}}
\newcommand{\A}{\mathcal{A}}
\newcommand{\G}{\mathcal{G}}
\newcommand{\C}{\mathcal{C}}
\newcommand{\X}{\mathcal{X}}
\newcommand{\setof}[2]{\set{#1 \,  : \, #2}}
\newcommand{\kNRA}[1]{\ensuremath{#1\textsf{-\sc NRA}}\xspace}
\newcommand{\kCFG}[1]{\ensuremath{#1\textsf{-\sc CFG}}\xspace}
\newcommand{\conf}[2]{{#1}(#2)}
\newcommand{\trans}[1]{ \xrightarrow{#1}  }
\newcommand{\Par}[1]{\text{\sc Par}(#1)}
\newcommand{\emptyword}{\varepsilon}
\newcommand{\langof}[1]{L(#1)}
\newcommand{\langoffull}[5]{L_{\conf {#2}{#3}\, \conf{#4}{#5}}(#1)}
\newcommand{\dom}[1]{\text{\sc dom}(#1)}
\newcommand{\size}[1]{|#1|}
\newcommand{\lang}[1]{L(#1)}
\newcommand{\langpar}[2]{L_{#1}(#2)}
\newcommand{\langh}[1]{H_{#1}}
\newcommand{\langw}[1]{W_{#1}}
\newcommand{\yield}[1]{\text{\sc yield}(#1)}
\newcommand{\bez}[2]{#1 - #2}
\newcommand{\poj}[1]{\text{\sc Sing}(#1)}
\newcommand{\pair}[2]{\langle #1,#2\rangle}
\newcommand{\restr}[2]{{#2}_{|#1}}
\newcommand{\Gr}{\text{\sc Perm}}
\begin{document}
%
% paper title
% Titles are generally capitalized except for words such as a, an, and, as,
% at, but, by, for, in, nor, of, on, or, the, to and up, which are usually
% not capitalized unless they are the first or last word of the title.
% Linebreaks \\ can be used within to get better formatting as desired.
% Do not put math or special symbols in the title.

\title{Parikh's theorem for infinite alphabets%
\thanks{This work was partially supported by NCN grants 2016/21/D/ST6/01368, 2017/27/B/ST6/02093  and
2019/35/B/ST6/02322.}
}

% author names and affiliations
% use a multiple column layout for up to three different
% affiliations

\author{
\IEEEauthorblockN{Piotr Hofman}
\IEEEauthorblockA{University of Warsaw}
\and
\IEEEauthorblockN{Marta Juzepczuk}
\IEEEauthorblockA{University of Warsaw}
\and
\IEEEauthorblockN{S{\l}awomir Lasota}
\IEEEauthorblockA{University of Warsaw}
\and
\IEEEauthorblockN{Mohnish Pattathurajan}
\IEEEauthorblockA{University of Warsaw}
}

% conference papers do not typically use \thanks and this command
% is locked out in conference mode. If really needed, such as for
% the acknowledgment of grants, issue a \IEEEoverridecommandlockouts
% after \documentclass

% for over three affiliations, or if they all won't fit within the width
% of the page, use this alternative format:
% 
%\author{\IEEEauthorblockN{Michael Shell\IEEEauthorrefmark{1},
%Homer Simpson\IEEEauthorrefmark{2},
%James Kirk\IEEEauthorrefmark{3}, 
%Montgomery Scott\IEEEauthorrefmark{3} and
%Eldon Tyrell\IEEEauthorrefmark{4}}
%\IEEEauthorblockA{\IEEEauthorrefmark{1}School of Electrical and Computer Engineering\\
%Georgia Institute of Technology,
%Atlanta, Georgia 30332--0250\\ Email: see http://www.michaelshell.org/contact.html}
%\IEEEauthorblockA{\IEEEauthorrefmark{2}Twentieth Century Fox, Springfield, USA\\
%Email: homer@thesimpsons.com}
%\IEEEauthorblockA{\IEEEauthorrefmark{3}Starfleet Academy, San Francisco, California 96678-2391\\
%Telephone: (800) 555--1212, Fax: (888) 555--1212}
%\IEEEauthorblockA{\IEEEauthorrefmark{4}Tyrell Inc., 123 Replicant Street, Los Angeles, California 90210--4321}}

% use for special paper notices
%\IEEEspecialpapernotice{(Invited Paper)}

\IEEEoverridecommandlockouts
\IEEEpubid{\makebox[\columnwidth]{978-1-6654-4895-6/21/\$31.00~
\copyright2021 IEEE \hfill} \hspace{\columnsep}\makebox[\columnwidth]{ }}

% make the title area
\maketitle

% As a general rule, do not put math, special symbols or citations
% in the abstract
\begin{abstract}
We investigate commutative images of languages recognised by register automata and grammars.
Semi-linear and rational sets can be naturally extended to this setting by allowing for orbit-finite unions 
instead of only finite ones.
We prove that commutative images of languages of  one-register automata are not always semi-linear, 
but they are always rational.
We also lift the latter result to grammars: commutative images of one-register context-free languages
are rational,
and in consequence commutatively equivalent to register automata.
We conjecture analogous results for automata and grammars with arbitrarily many registers.
\end{abstract}

%\thanks{This work was supported by NCN grant UMO-2016/21/D/ST6/01368.}

% !TEX root = main.tex

%
%\begin{enumerate}
%\item
%Define automata with $k$ pushdown stacks ($k$PDA), for $k\geq 1$,
%and the language recognised by a $k$PDA.
%For $k > 2$, define a reduction from the non-emptiness problem for $k$PDA
%to the non-emptiness problem for $2$PDA.\\
%\item
%Let $L\subseteq A^*$ be a fixed regular language over an alphabet $A$.
%Let $K\subseteq A^*$ be the smallest language satisfying the following conditions:
%\begin{itemize}
%\item $\varepsilon \in K$
%\item $K \cdot K \subseteq K$
%\item $u \cdot K \cdot v \subseteq K$ for every words $u, v\in A^*$ such that $u v \in L$.
%\end{itemize}
%Is $K$ always regular? Is $K$ always context-free?
%\end{enumerate}

\section{Introduction}

\emph{Register automata}, introduced over 25 year ago by Francez and Kaminski~\cite{KF94},
are nondeterministic finite-state devices equipped with a finite number of
registers that can store data values from an infinite data domain.
A register automaton inputs a string of data values (a data word) and
%When processing an input string, an automaton 
compares each consecutive input data value to its registers; 
based on this comparison and on the current control state,
it chooses a next control state and possibly stores the input value in one of its registers.
%As in the original model of Francez and Kaminski, 
The only allowed comparisons of data values considered in this paper are equality tests.
An automaton can also guess a fresh data value not previously seen in the input, and store it in a register
(we thus consider nondeterministic register automata \emph{with guessing}~\cite{S06}).
Likewise one defines register context-free grammars~\cite{lmcs14}, \cite[Sect.5]{atombook}. 

Register automata lack most of the good properties known from the classical theory of finite automata,
like determinisation or closure properties.
In particular, no satisfactory characterisation in terms of rational (regular) expressions is known.
Indeed, all known generalisations of Kleene's theorem for register automata
either apply to a restricted subclass of the model,
or introduce an involved syntax significantly extending the concept 
of rational expressions~\cite{regexp-Kaminski,regexp-Domagoj,regexp-Kurz}. 

Register automata are expressively equivalent to \emph{orbit-finite automata}~\cite{lics11,lmcs14},
a natural extension of finite automata where one allows for input alphabets and state spaces which are infinite, but finite 
up to permutation of the data domain (= orbit-finite).
Along the same lines, 
in this paper we focus on a natural extension of rational expressions, which differ from the classical ones
just by allowing for \emph{orbit-finite unions}.
In other words, we consider the class of \emph{rational languages}, defined as
the smallest class of languages closed under concatenation, star,
%iteration $\_^*$,
and orbit-finite unions. In particular, the class contains the empty language, all finite and all orbit-finite languages.

Languages of register automata are not rational in general, even in case of deterministic one-register
automata.
%, as shown in Section~\ref{sec:rat}.
%Example~\ref{ex:L1L2} below.
Kleene theorem may be however recovered when commutative images
(Parikh images) are considered:
we prove that
the language of every one-register automaton is Parikh-equivalent to (i.e., has the same Parikh image as) 
a rational language.
%Example~\ref{ex:L1L2} illustrates.
%
\begin{example} \label{ex:L1L2}
Fix the data domain $\atoms = \set{0,1,2,\ldots}$ and
consider the language $L_1$ 
%over the alphabet $\N$
consisting of all nonempty words over $\atoms$ where every two consecutive letters are different:
\[
L_1 \ = \ \setof{ a_1 a_2 \ldots a_n \in \atoms^* }{a_1 \neq a_2 \neq \ldots \neq a_n}.
\]
The language is recognised by a deterministic one-register automaton but it is not rational
(cf.~Section~\ref{sec:rat}).
It is however Parikh-equivalent to a larger language $L_2$, where
the non-equality constraint is imposed at every second position only:
\[
L_2 \ = \ \setof{ a_1 a_2 \ldots a_n \in \atoms^* } {a_1 \neq a_2, \ a_3 \neq a_4, \ \ldots},
\]
which is defined by the rational (regular) expression
\begin{align} \label{eq:L2rat}
L_2 \ = \ \Big(\bigcup_{a,b \in\atoms, a\neq b} a b \Big)^* \, \big(\emptyword  \,\, \cup \, \bigcup_{a\in\atoms} a\big)
\end{align}
and is thus rational
(the formal definition of rational languages will be given in Section~\ref{sec:rat}).
Indeed, every  $w \in L_2$ can be transformed, by swapping letters, to a word in $L_1$.
Let $w = a_1 a_2 \ldots a_{n}$. % (the case of even length is similar).
If $a_{2} = a_{3}$ we swap non-equal letters $a_{3}$ and $a_{4}$ 
thus achieving $a_{1} \neq a_{2} \neq a_{3} \neq a_4$.
Next, if $a_{4} = a_{5}$ we swap analogously $a_{5}$ and $a_{6}$, and so on.
Continuing in this way 
%thus achieving $a_{2n-3} \neq a_{2n-2} \neq a_{2n-1} \neq a_{2n} \neq a_{2n+1}$.
%And so on, until
%finally, if $a_2 = a_3$, we swap $a_1$ and $a_2$ thus 
we finally arrive at a word in $L_1$.
\qed
\end{example}

\para{Contribution}
We contribute to understanding commutative images of languages of one-register automata and grammars,
by investigating sets of \emph{data vectors} obtainable as Parikh images of these languages.
Parikh images of rational languages we call rational as well.
Here are our contributions:
\begin{itemize}
\item[(1)] 
We show that Parikh images of languages of one-register automata are not \emph{semilinear sets} of data vectors in general.
By seminilinear sets we naturally mean orbit-finite unions of linear sets, 
which in turn are determined by a base and an orbit-finite set of periods, like classically.
%Thus Parikh images of register automata are not semilinear in general.
\item[(2)]
We prove that languages of one-register automata have rational Parikh images.
The crucial part of the proof resorts to a graph-theoretical characterisation of these Parikh images,
and uses a necessary condition for a Hamiltonian cycle in directed graphs
(\cite{Ghouila-Houri,Kuhn-Osthus}).
\item[(3)]
Finally, we extend (2) to context-free grammars by showing that
one-register context-free languages have rational Parikh images.
The result is obtained by a novel type of transformations of derivation trees.
\end{itemize}

We conjecture that the restriction on the number of registers in (2) and (3) can be dropped; 
the combinatoric complexity we have encountered already in one-register case
makes it however difficult to envisage a generalisation of our approach to the general case.
According to (1), one-register automata and grammars fail to have semilinear Parikh images in general.
However, as a direct corollary of (3) we recover an analog of the Parikh's classical theorem~\cite{Parikh66}:
%
%\begin{theorem} \label{cor:Parikh}
one-register context-free grammars are 
Parikh-equivalent to register automata (but not to one-register automata).
%\end{theorem}
%
\para{Related research}
Register automata have been intensively studied with respect to their foundational 
properties~\cite{KF94,SI00,regexp-Kaminski,NSV04}. 
Following the seminal paper of Francez and Kaminski~\cite{KF94}, 
subsequent extensions of the model allow for
comparing data values with respect to some fixed relations such as a total order, or introduce alternation, variations on the allowed form of nondeterminism, etc.
The model is well known to satisfy almost no semantic equivalences that hold for classical finite automata.
%in particular register automata admit no satisfactory characterizations 
%in terms of regular expressions~\cite{regexp-Kaminski,regexp-Domagoj} or logic~\cite{NSV04,CM14}.
Here are few positive results: simulation of two-way nondeterministic automata
by one-way alternating automata with guessing~\cite{atombook}; 
Myhill-Nerode-style characterisation of languages of deterministic 
automata~\cite{MN-Kaminski,lics11,lmcs14}; and the well-behaved class of languages definable by 
orbit-finite monoids~\cite{datamonoids}, characterised in terms of logic~\cite{rigidMSO}
and a syntactic subclass of deterministic automata~\cite{BS20}.
Register automata have been also intensively studied with respect to  
their applications to XML databases and logics~\cite{DL09,NSV04,CM14} (see~\cite{S06} for a survey).
Register context-free grammars are equivalent to register pushdown automata~\cite{atombook,CL15}.

Other extensions of finite-state machines to infinite alphabets include:
abstract reformulation or register automata, known as orbit-finite automata, or nominal automata, or automata over atoms)~\cite{lics11,lmcs14,atombook}; symbolic automata~\cite{symbautom};
pebble automata~\cite{MSV03}; and data automata~\cite{dataautomata,classautomata}.

%These extensions found applications in various contexts, notably in investigations on finite-state manipulation of semi-structured data~\cite{}.

% !TEX root = main.tex

 \section{Preliminaries}    \label{sec:pre}

\para{Sets with atoms} 

Our definitions rely on basic notions and results of the theory of \emph{sets with atoms}~\cite{atombook}, 
also known as nominal sets~\cite{Pitts:book}. 
This paper is a 
%In this section we recall what is necessary to follow our arguments; this is 
part of a uniform 
abstract approach to register automata in the realm of orbit-finite sets with atoms, developed in~\cite{lics11,lmcs14,atombook}.

Fix a countably infinite set $\atoms$, whose elements we call \emph{atoms}.
We reserve initial alphabet letters $a,b,  \ldots$ to range over atoms.
Informally speaking, a set with atoms is a set that can have atoms, or other sets with atoms,
as elements.
Formally, we define the universe of sets with atoms by a suitably adapted cumulative hierarchy of sets,
by transfinite induction: % on ordinals $\gamma$:
the only set of \emph{rank} 0 is the empty set; and for a cardinal $\gamma$, a set of rank $\gamma$ may contain, as elements, 
sets of rank smaller than $\gamma$ as well as atoms.
In particular, nonempty subsets $X\subseteq\atoms$ have rank 1.

Denote by $\Gr$ the group of all permutations of $\atoms$.
Atom permutations $\pi: \atoms\to\atoms$ act on sets with atoms by consistently renaming all atoms in
a given set. 
Formally, by another transfinite induction we define $\pi(X) = \setof{\pi(x)}{x\in X}$.
Via standard set-theoretic encodings of pairs or finite sequences we obtain, in particular, 
the pointwise action on pairs $\pi (x,y)=(\pi(x),\pi(y))$,
and likewise on finite sequences.  %: $\pi\cdot (x_1, \ldots, x_n)=(\pi\cdot x_1, \ldots, \pi\cdot x_n)$ .
Relations and functions from $X$ to $Y$ are considered as subsets of $X\times Y$;
for instance, in case of $f:\atoms\to\atoms$, we have $\pi(f)(a)=\pi(f(\pi^{-1}(a)))$.
%If a set $X$ is \emph{pure}, i.e., no atom belongs to $X$, its elements, elements of its elements, etc., 
%then $\pi(X) = X$ for every $\pi\in \Gr$; for instance $\pi(n) = n$ for $n\in\N$.

We restrict to sets with atoms that only depend on finitely many atoms, in the following sense. 
A \emph{support} of $x$ is any set $S\subseteq\atoms$ such that the following implication holds for all $\pi\in \Gr$:
%\begin{quote}
    if $\pi(s)=s$ for all $s\in S$, then $\pi(x)=x$.
%\end{quote}
An element (or set) $x$ is \emph{finitely supported} if it has some finite support; in this case
%Finite supports of a fixed element are closed under intersections, and hence every finitely supported 
$x$ has {\em the least support}, denoted $\supp x$, called \emph{the support} of $x$ 
(cf.~\cite[Sect.~6]{atombook}).
%(c.f.~\cite[Prop.~2.3]{Pitts:book}, \cite[Cor.~9.4]{lmcs14}).
Sets supported by $\emptyset$ we call \emph{equivariant}.
For instance, given $a,b\in\atoms$, the support of the set
\[
L_{ab} \ = \ \setof{a_1 a_2 \ldots a_n \in \atoms^*}{n\geq 2, \ a_1 \neq a, \ a_n = b}
\]
is $\set{a,b}$;
% (indeed, for every $\pi\in \Gr$ such that $\pi(a) = a$ and $\pi(b) = b$, we have $\pi(L) = L$).
the projection function $\pi_1 : \atoms^2 \to \atoms : \pair a b \mapsto a$
is equivariant; 
% (indeed, every $\pi \in \Gr$ preserves the graph $\setof{\pair{\pair a b} a}{a,b\in\atoms}$ of $\pi_1$);
the support of a sequence $\tuple{a_1 \ldots a_n}\in\atoms^*$, encoded as a set in a standard way, is
the set of atoms $\set{a_1, \ldots, a_n}$ appearing in it;
and the support of a function $f: \atoms\to\N$ such that
$\dom f = \setof{a\in\atoms}{f(a)>0}$ is finite, is exactly $\dom f$.

From now on, we shall only consider sets with atoms that are hereditarily finitely
supported (called briefly \emph{legal}), i.e., ones that have a finite support, whose every element has some finite support,
and so on.
%A set $X$ in the cumulative hierarchy is called \emph{legal} (or \emph{set with atoms})
%if it is hereditarily finitely-supported: it is finitely-supported itself, all its elements are finitely supported, etc.
%From now on all sets are silently assumed to be legal.

\para{Orbit-finite sets} 

Two (elements of) sets with atoms $x, y$  are \emph{in the same orbit} if $\pi(x) = y$ for some $\pi \in \Gr$.
This equivalence relation splits every set with atoms $X$ into equivalence classes, which we call \emph{orbits in}  $X$.
A (legal) set is \emph{orbit-finite} if it splits into finitely many orbits.
Examples of orbit-finite sets are: $\atoms$ (1 orbit); 
$\atoms - \set{a}$ for some $a\in\atoms$ (1 orbit);
$\atoms^2$ (2 orbits: diagonal and non-diagonal);
$\atoms^3$ (5 orbits, corresponding to equality types of triples);
$\set{1,\ldots, n} \times \atoms$
($n$ orbits, as $\pi(i) = i$ for every $i\in\N$ and $\pi \in \Gr$, according to the standard set-theoretic definition of natural numbers);
%atom permutations fix elements of $H$);
the set of $n$-element subsets of atoms $\pow n \atoms = \setof{X\subseteq \atoms}{\size X = n}$ (1 orbit).
%
%This paper is an attempt to generalize Parikh's theorem from finite to orbit-finite alphabets.

Given a  family $(X_i)_{i\in I}$ of sets indexed by an orbit-finite set $I$,
the union $\bigcup_{i\in I} X_i$ we call \emph{orbit-finite union} of sets $X_i$.
(Formally, not only each set $X_i$ is assumed to be legal, but also the indexing function
$i\mapsto X_i$.)
As an example, consider $(L_{ab})_{b\in\atoms}$.
The indexing function $b\mapsto L_{ab}$ is supported by $\set{a}$, and
so is the union:
\[
\bigcup_{b\in\atoms}  L_{ab} \ = \ 
\setof{a_1 a_2 \ldots a_n \in \atoms^*}{n\geq 2, \ a_1 \neq a}.
\]
Orbit-finite sets are closed under Cartesian products, subsets, and orbit-finite unions:
if each of $X_i$ is orbit-finite, their union $\bigcup_{i\in I} X_i$ is orbit-finite 
too~\cite[Sect.~3]{atombook}.
%%
%\begin{lemma}[\todooo{citation}\cite{}] \label{lem:o-f}
%Orbit-finite union of orbit-finite sets is orbit-finite.
%\end{lemma}

\para{Data words and vectors}
By a finite multiset over a set $\Sigma$ 
we mean any function $v : \Sigma \to \N$ such that $v(\alpha) = 0$ for all $\alpha\in \Sigma$ except finitely many.
We define the \emph{domain} of $v$ as $\dom{v} = \setof{\alpha\in\Sigma}{v(\alpha)>0}$, and
its \emph{size}  as $\size v = \sum_{\alpha\in \dom v} v(\alpha)$
(the same notation is used for the size of a set).
%such that $v(\alpha) = 0$ for almost all $\alpha\in\Sigma$.
%
The \emph{Parikh image} (commutative image)
of a word $w\in \Sigma^*$ is the multiset $\Par{w} : \Sigma\to\N$, where
$\Par{w}(\alpha)$ is the number of appearances of a letter $\alpha\in\Sigma$ in $w$.
For a language $L \subseteq \Sigma^*$, its Parikh image is
$\Par L = \setof{ \Par w }{w\in L}$.
Two languages $L, L'\subseteq \Sigma^*$ are \emph{Parikh-equivalent} if they have the same Parikh images:
$\Par{L} = \Par{L'}$.
%Two devices recognizing data languages (e.g.~register automata or register grammars) 
%are Parikh-equivalent if their languages are so, and they are \emph{equivalent} if
%their languages are equal.
We write $\size w$ for the \emph{length} of $w$, hence
$\size v = \size w$ if $v = \Par w$.
We order multisets pointwise: $v\sqsubseteq v'$ if $v(\alpha)\leq v'(\alpha)$ for all $\alpha\in\Sigma$.
The zero (empty) multiset $\zerovector$ satisfies $\zerovector(\alpha) = 0$ for every $\alpha\in\Sigma$.
A singleton, written $\set{\alpha}$, maps $\alpha$ to $1$ and all other letters to $0$.
Addition of multisets is pointwise: $(v + v')(\alpha) = v(\alpha) + v'(\alpha)$ for every $\alpha\in\Sigma$;
likewise subtraction $v-v'$, for $v' \sqsubseteq v$. 

When $\Sigma$ is an orbit-finite alphabet,
words $w\in\Sigma^*$ we traditionally call \emph{data words}, languages $L \subseteq \Sigma^*$
%of data words 
we call \emph{data languages}, and
finite multisets $v : \Sigma\to \N$ we call \emph{data vectors}.
Orbit-finiteness of a set of data words (or data vectors) is equivalent to bounded length (or size) of its elements:
\begin{lemma}%[Appendix~\ref{sec:pre-app}] 
\label{lem:b-o-f}
A set $X$ of data words or data vectors over an orbit-finite alphabet $\Sigma$ is orbit-finite if, and only if,
$\setof{\size v}{v\in X}\subseteq \N$ is bounded.
\end{lemma}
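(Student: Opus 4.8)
The plan is to prove the two implications separately, relying on the characterisation of orbit-finiteness by a finite number of orbits together with the closure properties of orbit-finite sets recalled above.

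For the ``only if'' direction I would first observe that the size function is orbit-invariant. Every permutation $\pi\in\Gr$ preserves the length of a data word, since it merely renames letters; and for a data vector $v:\Sigma\to\N$ the atom action satisfies $(\pi v)(\alpha)=v(\pi^{-1}(\alpha))$, so $\size{\pi v}=\size v$ as well. Hence if $X$ splits into finitely many orbits $O_1,\dots,O_k$, the set $\setof{\size v}{v\in X}$ contains at most $k$ values and is therefore bounded.

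For the ``if'' direction, suppose $\size v\leq N$ for all $v\in X$. I would exhibit an orbit-finite superset of $X$ and invoke closure under subsets. For data words, the set $\Sigma^{\leq N}=\bigcup_{i\leq N}\Sigma^i$ contains $X$; each $\Sigma^i$ is orbit-finite because orbit-finite sets are closed under Cartesian products, and a finite union of orbit-finite sets is orbit-finite, being a special case of an orbit-finite union. Thus $\Sigma^{\leq N}$, and with it $X$, is orbit-finite. For data vectors, I would use that every data vector of size $\leq N$ is the Parikh image of some word in $\Sigma^{\leq N}$ and that the map $w\mapsto\Par w$ is equivariant; since the equivariant image of an orbit-finite set is orbit-finite, the set $\Par{\Sigma^{\leq N}}$ of all data vectors of size $\leq N$ is orbit-finite, and $X$ is one of its subsets.

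I do not expect a genuine obstacle here; the only points requiring a little care are checking that the atom action on data vectors preserves size, so that ``size'' is well defined on orbits, and quoting the closure of orbit-finite sets under equivariant images, both routine consequences of the theory recalled in Section~\ref{sec:pre} (see \cite[Sect.~3]{atombook}). If one prefers to avoid the image operation, the data-vector case can instead be handled directly by noting that a data vector of size $\leq N$ is determined by a multiset of at most $N$ atoms together with their multiplicities, which is plainly an orbit-finite amount of data.
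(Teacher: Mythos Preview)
Your proposal is correct and follows essentially the same route as the paper: invariance of size on orbits for the ``only if'' direction, and for the ``if'' direction, orbit-finiteness of $\Sigma^{\leq N}$ via closure under products and finite unions, plus the equivariant Parikh map to handle data vectors. The paper's proof is slightly terser but makes exactly the same moves.
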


\para{One-register automata}
For defining register automata we consider input alphabets of the form $\Sigma = H\times \atoms$, 
where $H$ is a finite set. 
We use three fixed variables $x,y,x'$ to represent register values and input atoms.
A \emph{nondeterministic register automaton} with one register (\kNRA 1) $\A$ consists of: 
%\begin{itemize}
%\item a positive integer $r\in\N$ (the number of registers),
a finite set $H$ (finite component of the alphabet),
a finite set of control locations $Q$, 
subsets $I, F \subseteq Q$ of initial resp.~accepting locations, and
a finite set $\Delta$ of transition rules of the form
\begin{align} \label{eq:trrule}
(\conf q x, \pair h y, \varphi, \conf {q'}{x'})
\end{align}
\noindent
where $q,q'\in Q$, $h\in H$, and  
$\varphi(x,y,x')$ is a Boolean combination of equalities involving the variables $x,y,x'$,
%drawn as:
%%
%\vspace{-3mm}
%\begin{figure}[H] %[btp]
%\begin{center}
%\includegraphics[width=2.2cm]{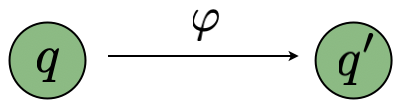}
%\end{center}
%\end{figure}
%\vspace{-5mm}
%%
%
%%\end{itemize}
%%
specifying relation between current register value ($x$), input atom ($y$),
and next register value ($x'$) resulting from a transition.

%We write $\atoms^{(r)}$ for the set of non-repeating $r$-tuples of atoms.
A configuration $\pair q a \in Q\times \atoms$ of $\A$, written $\conf q a$, 
consists of a control location $q\in Q$ and a register value
$a \in \atoms$.
%It is initial (resp.~accepting) if location $q$ is so, irrespectively of $a$.
For all atoms $a,b,a'$ such that $(a, b, a') \models \varphi$, a rule~\eqref{eq:trrule} induces a transition 
\[
 {\conf q {a}} \trans{\tuple{h,b}} {\conf {q'} {a'}}
 \]
from a configuration $\conf q {a}$ to a configuration $\conf {q'} {a'}$.
The semantics of \kNRA 1 is defined as in case of classical NFA, with configurations considered as states
and $\Sigma = H\times\atoms$ as an alphabet.
A \emph{run} of $\A$ over a data word $w=\tuple{h_1,b_1} \tuple{h_2,b_2} \ldots \tuple{h_n,b_n} \in \Sigma^*$ is any sequence
\begin{align} \label{eq:run}
{\conf {q_0} {a_0}} \trans{\tuple{h_1,b_1}} {\conf {q_1}{a_1}} \trans{\tuple{h_2,b_2}}\ldots 
\trans{\tuple{h_n,b_n}} \conf{q_n}{a_n}.
\end{align}
%
% where $q_0 \in I$;
 %$\conf {q_0}{a_0}$ is an initial configuration; 
 %it is \emph{accepting} if $q_n\in F$.
% the ending configuration  $\conf {q_n}{a_n}$ is accepting.
%A data word $w$ is \emph{accepted} by $\A$ if there is an accepting run of $\A$ on $w$. 
%
Let $\langoffull \A q a {q'} {a'}$ be the set of data words admitting a run starting
in $\conf {q_0} {a_0} = \conf q a$ and ending in $\conf {q_n} {a_n} = \conf {q'} {a'}$.
The \emph{language} recognised by $\A$, denoted $\lang{\A}$, is defined as:
\begin{align} \label{eq:L}
\langof \A \ = \ \bigcup_{q\in I, q'\in F, a,a'\in \atoms} \langoffull \A q a {q'} {a'}.
\end{align}

\begin{remark} \rm
The definition allows for \emph{guessing}, i.e., an automaton may nondeterministically guess, and store
in its register, an atom not yet seen in the input (cf.~\cite{S06}). 
In particular, the initial register value is guessed nondeterministically.
\end{remark}

\begin{example} \label{ex:1NRA}
Let $H$ be a singleton, omitted below;
we thus consider $\atoms$ as an alphabet.
The \kNRA 1 consisting of  $Q = F = \set{q,p}$, $I = \set{q}$, and two transition rules:
\begin{align*}
(\conf q x, y, y = x', \conf {p}{x'})
\qquad
(\conf p x, y, x\neq y = x', \conf {p}{x'})
\end{align*}
recognises $L_1$ from Example~\ref{ex:L1L2}, and can be drawn as:
\vspace{-2mm}
\begin{figure}[H] %[btp]
\begin{center}
\includegraphics[width=4.8cm]{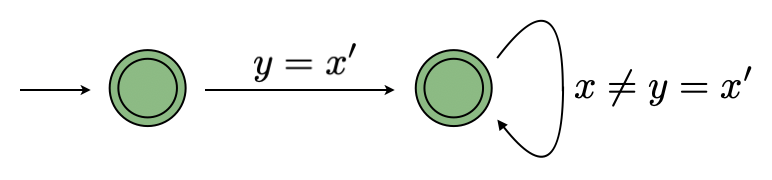}
\end{center}
\end{figure}
\vspace{-4mm}
%
%\noindent
%The automaton is  deterministic except for the initial guessing.
\end{example}
%
%\[
%\xymatrix{
%{}\ar[r]&
%%*+[o][F-]{p}\ar@/^/[rr]^(.2){\top}^(.8){\downarrow 1}\ar@(dr,dl)[]^(.2){\top}  & & 
%*+<9pt>[o][F=]{\phantom{a}} 
%\ar[rr]^(.5){y=x'} %\ar@(dr,dl)[]^(.3){x_1\neq x = x'_1}
%%\ar@/_/[rr]_(.4){x_1\neq x=x'_1}
%& & 
%*+<9pt>[o][F=]{\phantom{a}}
%\ar@(ur,dr)[]^(.5){x\neq y=x'}
%%\ar@/^/[ll]^(.2){x\neq x_1}
%}
%\]
%\qed

%The above restrictions define a \emph{normal form} of \NRA.

%\begin{claim}
%Every \NRA can be transformed into an equivalent \NRA in  normal form with the same number of registers.
%\end{claim}
%
%In the sequel we always silently assume that \NRA are in normal form.
%
%
%
%\begin{lemma}\label{GuesstoNo}
%Every \NRA with guessing is Parikh-equi\-va\-lent to a \NRA without guessing
%with one more register.
%\end{lemma}
%%
%\begin{proof}
%Consider a \NRA $\A$ and assume, without loosing generality,
%that every atom guessed and stored in some register is eventually seen later in the input.
%Indeed, a fresh atom which is never seen in the input is useless, and \NRA can use
%nondeterminism to guess the uselessness in advance and avoid guessing at all.
%
%The simulating \NRA $\A'$ without guessing, upon a transition of $\A$ that guesses a fresh atom
%and stores it in register $i$,
%performs a dummy transition that just reads an atom from the input and stores it in register $i$.
%
%In addition, locations of $\A'$ maintain, for each register $i$, 
%the request that atom residing in register $i$ still has to be seen in the input.
%This request is released when a transition of $\A$ is simulated containing the constraint $x_i = x$. 
%
%\todooo{todo: formalize}
%\end{proof}

\para{One-register context-free grammars}

For technical convenience we restrict to production rules of arity at most 2 
(higher arities can be treated similarly, but inessentially increase the combinatorial complexity of
Section~\ref{sec:anti}, see the comment in Section~\ref{sec:conc}). 
Unary production rules are easily simulated using binary and nullary ones. 

A \emph{context-free grammar} with one register (\kCFG 1) $\G$ consists of: 
%\begin{itemize}
%\item a positive integer $r\in\N$ (the number of registers),
two finite sets $H$ and $Q$ of terminals and nonterminals,
an initial nonterminal $q_0 \in Q$,
and two finite sets	 $\Delta_2$ and $\Delta_0$ of binary and nullary \emph{production rules}, of the forms
\begin{align} \label{eq:trrulecfg}
\conf q x \trans{\varphi} \conf p y\,\, \conf {p'}{y'} \ \in \ \Delta_2, \qquad \conf q x \trans{} \varepsilon \ \in \ \Delta_0,
%(q, \varphi, p, p') \in \Delta_2, \qquad (q, h, \psi) \in \Delta_0,
\end{align}
where $q\in Q$, $p,p'\in Q\cup H$, and
$\varphi(x, y, y')$ is a Boolean combination of equalities involving the three (still fixed) variables.
%We draw the production rules as:
%
%\vspace{-3mm}
%\begin{figure}[h] %[btp]
%\begin{center}
%\includegraphics[width=5.5cm]{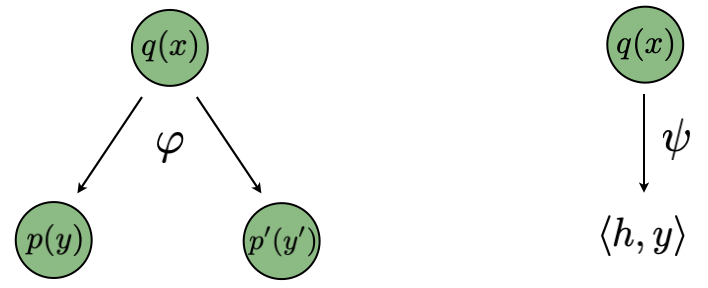}
%\end{center}
%\end{figure}
%\vspace{-5mm}
%
Similarly as before, a configuration $\conf q a \in Q\times \atoms$ of $\A$ 
consists of a nonterminal $q\in Q$ and a register value $a \in \atoms$.
Elements of $\Sigma = H\times\atoms$ we denote either as $\conf h a$ or as $\pair h a$.
%It is initial if nonterminal $q$ is so, irrespectively of $a$.
Production rules~\eqref{eq:trrulecfg} induce \emph{productions} 
\begin{align} \label{eq:prod}
 {\conf q {a}} \trans{} {\conf {p} {b}} \,\, {\conf {p'}{b'}},
 \qquad
 {\conf q {a}} \trans{} {\varepsilon},
 \end{align}
the former one under the condition $(a,b,b')\models \varphi$.
We denote by $\Pi_2$ and $\Pi_0$, respectively, the (infinite) sets of productions induced by
rules from $\Delta_2$ and $\Delta_0$.
 
The semantics of \kCFG 1 is defined as for classical CFG,
with configurations considered as nonterminals, alphabet $\Sigma = H\times\atoms$,
and productions $\Pi_2 \cup \Pi_0$.
Derivation trees $\T$ of $\G$ are labeled by configurations, alphabet letters $\pair h a = \conf h a\in \Sigma$, 
or the empty word $\varepsilon$, in a way consistent with productions~\eqref{eq:prod}:

\vspace{-2mm}
\begin{figure}[H] %[btp]
\begin{center}
\includegraphics[width=4.4cm]{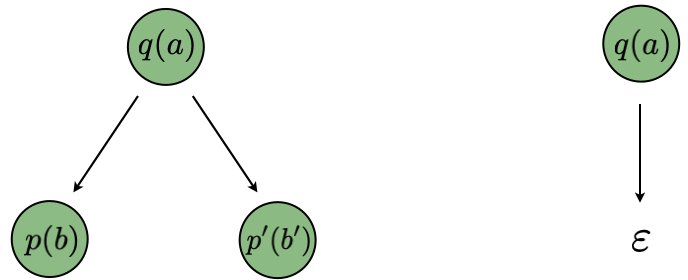}
\end{center}
\end{figure}
\vspace{-3mm}

\noindent
\emph{Complete} derivation trees have all leaves labeled by elements of $\Sigma \cup \set{\varepsilon}$.
We write $\langpar {\conf q a} \G\subseteq \Sigma^*$ for the language of  
 yields of all complete derivation trees $\T$ with root labeled by $\conf q a$,
as usual, where
$\yield \T \in \Sigma^*$ is obtained as concatenation of labels of the leaves of $\T$.
The language $\lang \G\subseteq \Sigma^*$ generated by $\G$ is defined as the union
(as in case of \kNRA 1, the initial register value is guessed nondeterministically):
\[
\lang \G \ = \ \bigcup_{a \in\atoms} \langpar {\conf {q_0} a} \G.
\]
\begin{example} \label{ex:1G}
The \kCFG 1 consisting of
nonterminals $Q = \set{q,p}$, terminals $H = \set{l,r}$, 
initial nonterminal $q$, and rules
\begin{align*}
\conf q x \trans{x \neq  y = y'} \pair l y  \conf {p} {y'} \ \ 
\conf {p} {x}  \trans{x = y = y'} \conf q y  \pair r {y'} \ \ 
 \conf q x \trans{} \varepsilon
\end{align*}
generates palindrome-like words of the form
\[
\pair l {a_1} \pair l {a_2} \ldots \pair l {a_n} \,
\pair r {a_n} \ldots \pair r {a_2} \pair r {a_1} 
\]
where $n\geq 0$ and $a_1 \neq a_2 \neq \ldots \neq a_n$.
\end{example}

\begin{remark} \rm
An alphabet $H\times\atoms$ and configurations $Q\times\atoms$ are orbit-finite. 
\kNRA 1 and \kCFG 1 are thus special cases of 
the abstract notions of orbit-finite automata and context-free grammars 
(cf.~\cite[Sect.~5]{atombook}),
where alphabets, state spaces and nonterminals may be  arbitrary orbit-finite sets.
\end{remark}

\para{Normal forms}
In the sequel we assume,
w.l.o.g., that each constraint $\varphi$ appearing 
in a transition rule~\eqref{eq:trrule} 
defines a single orbit of $\atoms^{3}$. 
In other terms, $\varphi$ contains either equality or disequality of  every pair of variables.
This can be easily achieved by splitting every constraint into a number of single-orbit ones.
For the automaton from Example~\ref{ex:1NRA} we get:
\vspace{-1mm}
\begin{figure}[H] %[btp]
\begin{center}
\includegraphics[width=4.8cm]{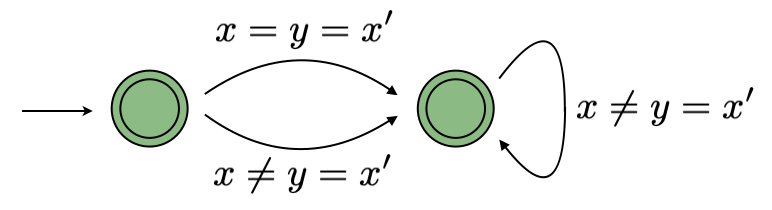}
\end{center}
\end{figure}
\vspace{-3mm}
%
%\[
%\xymatrix{
%{}\ar[r]&
%%*+[o][F-]{p}\ar@/^/[rr]^(.2){\top}^(.8){\downarrow 1}\ar@(dr,dl)[]^(.2){\top}  & & 
%*+<9pt>[o][F=]{\phantom{a}} 
%\ar@/^/[rr]^(.5){x=y=x'} %\ar@(dr,dl)[]^(.3){x_1\neq x = x'_1}
%\ar@/_/[rr]_(.5){x\neq y=x'}
%& & 
%*+<9pt>[o][F=]{\phantom{a}}
%\ar@(ur,dr)[]^(.5){x\neq y=x'}
%%\ar@/^/[ll]^(.2){x\neq x_1}
%}
%\]

%\noindent
There are thus just five possible constraints $\varphi$ and, correspondingly, five \emph{types} of transition rules. 
The first two types preserve register value ($x = x'$):
\newcommand{\ee}{{\small $\langle 1\rangle$}\xspace}
\newcommand{\nne}{{\small $\langle 2\rangle$}\xspace}
\renewcommand{\ne}{{\small $\langle 3\rangle$}\xspace}
\newcommand{\en}{{\small $\langle 4\rangle$}\xspace}
\newcommand{\nnn}{{\small $\langle 5\rangle$}\xspace}
\newcommand{\fee}{\varphi_1}
\newcommand{\fnne}{\varphi_2}
\newcommand{\fne}{\varphi_3}
\newcommand{\fen}{\varphi_4}
\newcommand{\fnnn}{\varphi_5}
\begin{enumerate} 
\item[\ee] $\fee \equiv x = x' = y$ (register value equal to input atom);
\item[\nne] $\fnne \equiv x = x' \neq y$ (register value different from input).
\end{enumerate}
The remaining types describe an update of register value:
\begin{enumerate}
\item[\ne] $\fne \equiv x \neq y = x'$ (register updated with input atom); %register value is different from input atom, and updated with this atom);
\item[\en] $\fen \equiv x = y \neq x'$ (register updated freshly); % value is equal to input atom );
\item[\nnn] $\fnnn \equiv x \neq y \neq x' \neq x$ (register updated freshly). % value is different from input atom, ).
\end{enumerate}
In the sequel we distinguish between \emph{register-preserving}
(types \ee, \nne) and \emph{register-updating} (types \ne--\nnn) constraints $\varphi$, 
transition rules, and transitions.

Likewise we assume, w.l.o.g., that each constraint
$\varphi$ appearing in a production rule of a \kCFG 1 defines a single orbit of 
$\atoms^3$.
%Thus there are just five types of rules in $\Delta_2$. For instance, 
The grammar in Example~\ref{ex:1G} is in normal form.

% !TEX root = main.tex
 
\section{Rational sets} \label{sec:rat}

%We investigate below sets of data vectors which are Parikh images of languages of \kNRA 1. 
In this section we define rational sets of data words and data vectors, 
prove their closure under substitutions, and formulate our main results.
%semilinear sets (corresponding to rational sets of star height one) 
%are not enough to capture Parikh images of \kNRA 1.
%In the next section we prove that Parikh images of \kNRA 1 are rational. 

\para{Orbit-finite unions}
Consider a family of sets $\X$.
We say that $\X$ is \emph{closed under orbit-finite unions} if for every orbit-finite
family $(X_i)_{i\in I}$ of sets $X_i \in \X$, 
the union
$
\bigcup_{i\in I} X_i
$
belongs to $\X$. 
We instantiate below this abstract definition to families $\X$ of
sets of data words and data vectors.

\para{Rational data languages}
%We start with a generalisation of regular expressions to orbit-finite alphabets.
%The lifting amounts to relaxing finite sums to orbit-finite ones.
%In the following we fix an orbit-finite alphabet $\Sigma$ and consider data languages $L\subseteq \Sigma^*$.
We consider data languages over a fixed orbit-finite alphabet $\Sigma$.

As usual, we define concatenation of two data languages
$
L L' = \setof{ w w'}{w\in L, w'\in L'},
$
and the Kleene star (iteration):
%the powers $L^n$, $n\geq 0$, inductively: $L^0 = \set{\varepsilon}$ and $L^{n+1} = L L^n$;
%and the iteration $L^*$ as the union of all powers:
$
L^* = \setof{w_1  \ldots  w_n}{n\geq 0, w_1, \ldots, w_n\in L}.
%L^* \ = \ \bigcup_{n\geq 0} L^n.
$
%For a fixed alphabet $\Sigma$, 
Let \emph{rational data languages} % over $\Sigma$ 
be the smallest class of data languages
that contains all singleton languages $\set{w}$, for $w\in\Sigma^*$,
and is closed under concatenation, iteration, and orbit-finite unions.
%The latter means that for every orbit-finite set $I$
%and a finitely supported indexed family $(L_i)_{i\in I}$ of rational languages $L_i\subseteq \Sigma$ 
%(formally, we mean here a finitely supported function $i \mapsto L_i$ from $I$ to 
%$\powfs{\Sigma^*}$), their union
%\[
%\bigcup_{i\in I} L_i
%\]
%is also rational.
In particular the empty language, all finite languages and all orbit-finite ones are rational.
%Note that by Lemma~\ref{lem:o-f}, orbit-finite unions preserve orbit-finiteness of an alphabet.
For finite  $\Sigma$ we obtain the classical rational (regular) sets.
As expected, without the Kleene star we obtain
% star-free data languages are 
exactly sets of words of bounded length, 
or equivalently, due to Lemma~\ref{lem:b-o-f}, orbit-finite languages.

When convenient, we may speak of \emph{rational expressions}, by which we mean formal derivations
of rational languages according to the closure rules listed above. 

\begin{example} \label{ex:cont}
Continuing Example~\ref{ex:L1L2}, the language $L_2$  is rational, as it can be presented by a rational expression:
\[
L_2 \ = \ \Big(\bigcup_{a,b \in\atoms, a\neq b} \set{ab} \Big)^* \, \big(\set{\emptyword} \, \cup \bigcup_{a\in\atoms} \set{a}\big).
\]
For readability we omit brackets $\set{}$ in the sequel, as in~\eqref{eq:L2rat}.
On the other hand, one easy shows that the language $L_1$ %from Example~\ref{ex:L1L2}
 is not rational.  %(cf.~Lemma~\ref{lem:L1} in Appendix~\ref{sec:rat-app}).
% 
% for sufficiently long words $w\in L$,
%
%
%We claim there exists at least one star expression $R'$ with the following property: 
%Every atom in the expression $R'$ is different from the support of the indexing set of the immediate outer orbit-finite union. If not, $L$ cannot be infinite. Choose a word $w \in L$ in which $R'$ is used at least once. Let $w = xuy$ where $u$ is the infix of $w$ corresponding to the repetition of $R'$. Since all atoms in $u$ are not in the support, we form a new word $w' = xuu'y$ by performing $R'$ with $u' = \theta.u$ where $\theta$ is a permutation that swaps first and last atom of $u$. But $w'$ does not belongs to $L_1$ because last atom of $u$ is equal to first atom of $u'$. Therefore, such a $R$ cannot exist.       
%%
\qed
\end{example}

\para{Rational sets of data vectors}
We consider sets of data vectors over a fixed orbit-finite alphabet $\Sigma$.
Let addition of two sets $X,Y$ of data vectors be defined by Minkowski sum
\[
X + Y = \setof{x+y}{x\in X, y\in Y},
\]
and let $X^*$ contain all finite sums of elements of $X$:
\[
X^* = \setof{x_1 + \ldots + x_n}{n\geq 0, x_1, \ldots, x_n\in X}.
\]
We define \emph{rational sets} of data vectors as the smallest class of 
sets of data vectors that contains all singletons $\set{x}$ and is closed under addition, star, and orbit-finite unions.
In particular, the empty set, all finite sets and all orbit-finite sets of data vectors are rational.
\begin{example} \label{ex:ParL1}
Continuing Example~\ref{ex:cont},
the Parikh image of $L_1$ (and $L_2$) is rational (for readability we keep omitting brackets $\set{}$):
%(cf.~\eqref{eq:L2rat} in Example~\ref{ex:L1L2}):
%
\begin{align*} %\label{eq:L2rat}
\Par{L_1} \ = \ \Big(\bigcup_{a,b \in\atoms, a\neq b} a + b \Big)^* \, + \, \big(\zerovector  \,\, \cup \bigcup_{a\in\atoms} a\big).
\end{align*}
\end{example}
%
%The following fact will be useful later:
%%
%\begin{claim} \label{claim:onlyif}
%Parikh image of a rational data language is a rational set of data vectors.
%\end{claim}
%
\begin{claim} \label{claim:iff}
(1) Rational sets of data vectors are exactly Parikh images of rational data languages.
(2) $\Par L$  is  rational if, and only if,
$L$ is Parikh-equivalent to a rational data language.
\end{claim}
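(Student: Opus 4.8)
The plan is to show that the commutative image $\Par{\cdot}$ is a homomorphism for the three constructors defining rational sets and that it admits an equivariant section; part~(2) is then an immediate corollary of~(1).

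For the direction asserting that $\Par{L}$ is a rational set of data vectors whenever $L$ is a rational data language, I would argue by structural induction on a rational expression denoting $L$. The base case is $\Par{\set{w}} = \set{\Par{w}}$, a singleton. For the inductive step I would use the elementary identities $\Par{L_1 L_2} = \Par{L_1} + \Par{L_2}$ (with $+$ the Minkowski sum), $\Par{L^*} = \big(\Par{L}\big)^*$, and $\Par{\bigcup_{i\in I} L_i} = \bigcup_{i\in I} \Par{L_i}$. The only point requiring care is the last: I must check that $\big(\Par{L_i}\big)_{i\in I}$ is again a legal family over an orbit-finite index set. This holds because the Parikh map is equivariant (it commutes with atom permutations) and sends legal sets to legal sets, so the composition of the indexing function $i\mapsto L_i$, which is supported by some finite set $C$ of atoms, with $\Par{\cdot}$ is again $C$-supported over the same orbit-finite $I$; each $\Par{L_i}$ is rational by the induction hypothesis, hence so is the union.

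For the reverse direction I would define an explicit equivariant translation $\Phi$ from rational expressions over data vectors to rational expressions over data languages by structural recursion: on a singleton, $\Phi(\set{x})$ presents the finite language $\setof{w}{\Par{w} = x}$ of all orderings of $x$ (a finite, hence orbit-finite, hence rational language, which is nonempty and all of whose words have Parikh image $x$, so $\Par{\Phi(\set{x})} = \set{x}$, and $x\mapsto\setof{w}{\Par{w}=x}$ is equivariant); and on the compound constructors $\Phi(E_1 + E_2) = \Phi(E_1)\,\Phi(E_2)$, $\Phi(E^*) = \Phi(E)^*$, and $\Phi\big(\bigcup_{i\in I} E_i\big) = \bigcup_{i\in I} \Phi(E_i)$. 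A routine induction, using the same three identities, shows that $\Par{\Phi(E)}$ equals the set of data vectors denoted by $E$; and since $\Phi$ is built entirely from equivariant operations, it sends $C$-supported expressions to $C$-supported expressions, so $\Phi(E)$ uses only legal orbit-finite unions and is a genuine rational data language. Together with the previous paragraph this proves~(1).

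Part~(2) is then immediate: if $L$ is Parikh-equivalent to a rational data language $L'$ then $\Par{L} = \Par{L'}$ is rational by~(1); conversely, if $\Par{L}$ is rational then by~(1) there is a rational data language $L'$ with $\Par{L'} = \Par{L}$, i.e.\ $L$ is Parikh-equivalent to $L'$. The whole argument is the classical Parikh-image computation carried out verbatim; I expect the only step needing a little care to be the legality bookkeeping for the orbit-finite unions produced by the two translations, which is handled uniformly by the equivariance of $\Par{\cdot}$ and of the section $x\mapsto\setof{w}{\Par{w}=x}$.
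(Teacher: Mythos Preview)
The paper does not supply a proof of this claim; it is stated as an elementary observation and left to the reader. Your argument is correct and is precisely the natural one: structural induction on rational expressions in both directions, using that $\Par{\cdot}$ is an equivariant monoid homomorphism from $(\Sigma^*,\cdot,\varepsilon)$ to the commutative monoid of data vectors, so it carries concatenation to Minkowski sum, Kleene star to star, and (legal) orbit-finite unions to (legal) orbit-finite unions. Your choice of section $x\mapsto\setof{w}{\Par w=x}$ for the converse direction is fine and equivariant, and the legality bookkeeping you flag is exactly the only nontrivial point; you handle it correctly. Part~(2) is indeed an immediate consequence of~(1), as you observe.
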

%
%The converse of Claim~\ref{claim:onlyif} is not true, as witnessed by the language $L_1$ from Example~\ref{ex:L1}
%which itself is not rational (cf.~Claim~\ref{claim:L1}) but,
%due to Parikh-equivalence to a rational language $L_2$ (cf.~Examples~\ref{ex:L2} and~\ref{ex:L1L2}).
%its Parikh images is:
%%
%\begin{claim} \label{claim:L1-Par}
%Parikh image of the language $L_1$ is rational.
%\end{claim}
%

\begin{remark} \rm
The classical notion of rational sets in an arbitrary monoid (\cite[Chapter VII]{Eilenberg74})
can be generalised along the same lines as above to sets with atoms, by considering orbit-finite unions instead
of finite ones.
In this paper we stick to monoids of data words and data vectors, over an orbit-finite alphabet.
\end{remark}

\para{Substitutions}
Consider a language $L$ over an orbit-finite alphabet $\Sigma$ and
a (legal) family of languages $K = (K_\sigma)_{\sigma\in\Sigma}$ over an alphabet $\Gamma$,
indexed by $\Sigma$. We typically use the anonymous function notation
\[
\sigma \quad \mapsto \quad K_\sigma.
\]
The \emph{substitution} $L(K)$ is the language over $\Gamma$ containing all words obtained
from some word $\sigma_1 \sigma_2 \ldots \sigma_n \in L$, by replacing every letter $\sigma_i$
by some word from $K_{\sigma_i}$:
\[
L(K) \ = \ \bigcup_{\sigma_1 \sigma_2 \ldots \sigma_n \in L}
K_{\sigma_1} K_{\sigma_2} \ldots K_{\sigma_n}.
\]
\begin{example}
As usual we use the shorthand $L^+ = L^* L$.
Consider the language $L_1$ from Example~\ref{ex:L1L2} and
$\Sigma = \Gamma = \atoms$.
By the equivariant substitution $K_a = (aa)^+$,  or 
$
a \ \mapsto \ (a a)^+,
$
we obtain the language $L_1(K) = \big(\bigcup_{a\in\atoms} aa\big)^+$ containing words, where
all maximal constant infixes have even length. 
\end{example}
\begin{lemma} % [Appendix~\ref{sec:rat-app}] 
\label{lem:subst}
If $L$ and all languages $K_\sigma$ have rational Parikh images (resp.~are rational)
then the substitution $L(K)$ has also rational Parikh image (resp.~is rational).
\end{lemma}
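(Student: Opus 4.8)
The plan is to prove the statement by structural induction on the rational expression witnessing that $L$ is rational (resp.~has rational Parikh image), carrying the family $K = (K_\sigma)_{\sigma\in\Sigma}$ along unchanged. For the statement about being rational, the base case is $L = \set{w}$ with $w = \sigma_1 \ldots \sigma_n$; then $L(K) = K_{\sigma_1} K_{\sigma_2} \ldots K_{\sigma_n}$ is a concatenation of finitely many rational languages, hence rational. For the inductive step I would check the three closure operations. If $L = L_1 L_2$ then $L(K) = L_1(K) L_2(K)$; if $L = M^*$ then $L(K) = (M(K))^*$; and if $L = \bigcup_{i\in I} L_i$ for an orbit-finite index set $I$, then $L(K) = \bigcup_{i\in I} L_i(K)$, an orbit-finite union of rational languages, hence rational. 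Each of these identities is a routine unfolding of the definition of substitution, and each uses exactly the matching closure rule for rational data languages. The Parikh-image version follows the same skeleton, using instead the closure of rational sets of data vectors under Minkowski sum, star, and orbit-finite unions, together with the fact that $\Par{\cdot}$ is a monoid homomorphism, so $\Par{L_1 L_2} = \Par{L_1} + \Par{L_2}$ and $\Par{M^*} = (\Par M)^*$; alternatively, by Claim~\ref{claim:iff}(1) one can deduce the Parikh statement directly from the language statement.

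The one point that genuinely needs care — and which I expect to be the main obstacle — is legality and orbit-finiteness of the indexing in the union case. When $L = \bigcup_{i\in I} L_i$ with $I$ orbit-finite and the assignment $i \mapsto L_i$ legal, I must argue that $i \mapsto L_i(K)$ is again a legal (finitely supported) function on the same orbit-finite index set $I$. Since $L_i(K)$ is built from $L_i$ and the fixed family $K$ by a fixed set-theoretic construction, the map $i \mapsto L_i(K)$ is supported by $\supp(i\mapsto L_i) \cup \supp K$, which is finite; hence the union $\bigcup_{i\in I} L_i(K)$ is again an orbit-finite union and the closure rule applies. A subtler version of the same issue appears even in the base case, because $K$ is an infinite (orbit-finite) family, so a single letter $\sigma$ — itself ranging over the orbit-finite $\Sigma$ — already hides an orbit-finite union: to apply the construction to $\set{w}$ one is really taking $\bigcup_{(\sigma_1,\ldots,\sigma_n)}$ over the orbit of the tuple, and one must confirm this bookkeeping stays within the legal, orbit-finite world. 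The appropriate framing is to note that substitution commutes with orbit-finite unions in both arguments, reduce the base case to a finite concatenation once the word $w$ is fixed, and handle the ``ranging over $\Sigma$'' part as just another instance of the orbit-finite-union step.

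Concretely, I would organize the write-up as: (i) record the homomorphism identities $L(K)(K') $-type facts and the three distributivity laws $ (L_1 L_2)(K) = L_1(K)L_2(K)$, $(M^*)(K) = (M(K))^*$, $\big(\bigcup_i L_i\big)(K) = \bigcup_i L_i(K)$, each with a one-line justification from the definition; (ii) prove legality/orbit-finiteness of $i \mapsto L_i(K)$ from finiteness of the combined support; (iii) run the induction on the rational expression for $L$, invoking the matching closure property of rational data languages at each node, with the base case $\set{\sigma_1\ldots\sigma_n}$ reduced to an $n$-fold concatenation of the rational languages $K_{\sigma_j}$; and (iv) obtain the Parikh-image statement either by repeating the induction with $\Par{\cdot}$ and the Minkowski/star/union closure of rational data-vector sets, or — more economically — by quoting Claim~\ref{claim:iff}: if $L$ and each $K_\sigma$ are Parikh-equivalent to rational languages $L'$ and $K'_\sigma$, then $L(K)$ is Parikh-equivalent to $L'(K')$, which is rational by the language case, so $\Par{L(K)}$ is rational.
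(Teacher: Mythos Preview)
Your proposal is correct and follows essentially the same route as the paper: structural induction on a rational expression for $L$, with the only non-routine point being that in the orbit-finite-union case the map $i \mapsto L_i(K)$ is supported by $\supp(i \mapsto L_i) \cup \supp(\sigma \mapsto K_\sigma)$. Your worry about the base case is unnecessary, though: once $w = \sigma_1\ldots\sigma_n$ is fixed (as it is at that point of the induction), $L(K) = K_{\sigma_1}\cdots K_{\sigma_n}$ is just a finite concatenation of rational languages with no hidden orbit-finite union --- the paper in fact takes the base case to be a single-letter singleton $\set{\sigma}$, where $L(K) = K_\sigma$ directly.
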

\begin{proof} % [Proof of Lemma~\ref{lem:subst}]
Intuitively speaking, it is enough to replace syntactically, in the rational expression defining $\Par L$,
every appearance of a letter $\sigma$ by an expression defining $\Par {K_\sigma}$.

Formally, we proceed by induction on a derivation of $L$.
By Claim~\ref{claim:iff}(2) we assume, w.l.o.g., that languages $L$ and $K_\sigma$ are rational.
If $L = \set{\sigma}$ is a singleton, $\sigma\in\Sigma$, then $L(K) = K_\sigma$ and hence is rational.
The cases of $L = L_1  L_2$, or $L = (L')^*$,
are both immediate, as both the operations preserve rationality,
and $L_1, L_2$ and $L'$ are rational by induction assumption.
Finally, when $L = \bigcup_{x\in X} L_x$, 
%for an orbit-finite set $X$ and a finitely supported mapping $x\mapsto L_x$,
by induction assumption we know rationality of the languages $L_x(K)$ for $x\in X$.
As 
\[
L(K) \ = \ \bigcup_{x\in X} L_x(K) 
\]
and the mapping $x \mapsto L_x(K)$ is supported by the union of the supports of $x\mapsto L_x$
and $\sigma \mapsto K_\sigma$, we deduce that $L(K)$ is an orbit-finite union of rational sets and
hence rational.
\end{proof}

%\medskip

\para{Main results}
As our main contribution, we prove rationality of Parikh images of \kNRA 1 and \kCFG 1:

\begin{theorem} \label{thm:1ARA}
Parikh images of \kNRA 1 languages are rational.
\end{theorem}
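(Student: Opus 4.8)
The plan is to analyse the structure of a run~\eqref{eq:run} of a \kNRA 1 and to show that its Parikh image can be captured by a rational expression over data vectors. First I would decompose any run into maximal \emph{register-preserving segments} separated by register-updating transitions. Within a single segment the register value is some fixed atom $a$, and every transition reads either $\pair h a$ (type \ee) or $\pair h b$ with $b \neq a$ (type \nne); hence the Parikh image contributed by a segment of the form ``state $q$, register $a$, to state $q'$, register $a$'' is, for each choice of the finite control data, of the shape $\big(\sum\text{fixed singletons }\conf h a\big) + \big(\bigcup_{b\neq a}\conf h b\big)^* + (\text{bounded correction})$ — essentially the same kind of expression as in Example~\ref{ex:ParL1}. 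The number of register-updating transitions is not bounded, so the run consists of an unbounded sequence of such segments glued by the five transition types, and the sequence of (control location, segment) pairs together with the ``register change'' data forms a word in a \emph{classical} regular language over a finite alphabet obtained by projecting away all atoms except bounded-support markers.

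The key technical device I would use is the substitution lemma (Lemma~\ref{lem:subst}): build a finite NFA $\A'$ over a finite alphabet whose letters encode (i) the control location, (ii) the transition type \ee--\nnn, and (iii) which of the two atoms involved (old register value, new register value, input atom) coincide, and show $\langof{\A'}$ is a classical regular, hence rational, language; then define an equivariant (more precisely, finitely supported) substitution $K$ sending each such finite letter to the appropriate orbit-finite ``budget'' of data vectors — for register-preserving letters, the expression above; for register-updating letters, a single fresh/equal data vector indexed by the relevant atoms. The delicate point is that the budget for a register-preserving segment depends on the \emph{identity} of the current register atom $a$ and on whether the \emph{next} register atom equals $a$, equals some input atom already read in the segment, or is entirely fresh; so the finite alphabet must carry enough information for the substitution to be well-defined and equivariant, and one must check that the Parikh image of $\langof \A$ is obtained from $\langof{\A'}$ by exactly this substitution. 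By Lemma~\ref{lem:subst}, rationality of $\Par{\langof{\A'}}$ (it is classical, so Parikh-rational by the classical Parikh theorem, or directly orbit-finite-times-star) together with rationality of each $\Par{K_\sigma}$ yields rationality of $\Par{\langof\A} = \Par{\langof{\A'}(K)}$.

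The main obstacle I expect is not the segment analysis but the \emph{gluing of consecutive segments sharing a register atom}: when two register-preserving segments are adjacent to a type-\nne-like update, or when a type-\ne transition sets the new register to an atom that already occurred as an input atom earlier, the atoms linking different parts of the word are correlated in ways a naive letter-by-letter substitution cannot express, because substitution treats the images $K_\sigma$ independently. Concretely, the expression of Example~\ref{ex:ParL1} already shows that a chain of updates $x \neq y = x'$ produces a word $a_1 a_2 \cdots a_n$ with only \emph{consecutive} disequalities, whose Parikh image forgets the chain structure — so one has to argue that, \emph{at the level of Parikh images}, all these correlations collapse: any multiset realisable by a run with some atom-sharing pattern is also realisable by a (possibly different, longer) run in which the shared atoms are chosen independently, exactly as in the $L_1$-versus-$L_2$ argument of Example~\ref{ex:L1L2}. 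Making this ``Parikh-collapse'' precise is where I anticipate the real work: one likely needs a normalisation argument on runs showing that any data vector in $\Par{\langof\A}$ is the image of a run that factors through a bounded ``skeleton'' over finitely many atoms, after which the unbounded remainder is a free orbit-finite star — and it is presumably here that the graph-theoretic Hamiltonian-cycle condition mentioned in contribution~(2) of the introduction enters, to certify that a candidate multiset of ``segment pieces'' can actually be arranged into a single connected run rather than several disconnected ones.

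Finally I would assemble the pieces: (1) define $\A'$ and the finite alphabet, prove $\langof{\A'}$ regular; (2) define the substitution $K$ and verify each $K_\sigma$ is rational with rational Parikh image (using Example~\ref{ex:ParL1}-style expressions); (3) prove the Parikh-collapse equality $\Par{\langof\A} = \Par{\langof{\A'}(K)}$, with the forward inclusion by reading off a skeleton from a run and the backward inclusion by the swapping/normalisation argument, invoking the Hamiltonicity criterion for the connectedness step; (4) conclude by Lemma~\ref{lem:subst} that $\Par{\langof\A}$ is rational. Steps (1),(2),(4) are routine given the earlier results; the burden of the proof is concentrated entirely in step~(3).
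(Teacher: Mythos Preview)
Your outer layer --- factoring out register-preserving segments and handling them by substitution --- matches the paper's first reduction (Lemma~\ref{lem:Lqap} and the derivation of Lemma~\ref{lem:1ARA} from Lemma~\ref{lem:LF}). The gap is in how you propose to handle the chain of register updates. Factoring through a \emph{finite}-alphabet NFA $\A'$ together with a finitely supported substitution $K$ cannot give the equality $\Par{\langof\A}=\Par{\langof{\A'}(K)}$ you want: such a substitution chooses each letter's image independently from a set with fixed finite support, so already for $L_1$ of Example~\ref{ex:L1L2} the natural choice yields $\Par{\atoms^*}\supsetneq\Par{L_1}$ (the vector assigning $2$ to a single atom lies in the former but not the latter). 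You recognise this and defer everything to a ``Parikh-collapse'' in step~(3), but your sketch for that step --- normalise runs, bounded skeleton, free star, Hamiltonicity for connectedness --- is not a plan so much as a list of hoped-for phenomena.

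The paper never passes through a finite alphabet for the update chain. Instead it runs a Kleene-style induction removing one register-updating transition \emph{rule} at a time (proof of Lemma~\ref{lem:LF}), reducing first to \emph{altering loops} (Lemma~\ref{lem:LF2}) and then to \emph{anti-paths}: words over the orbit-finite alphabet $\atoms\times\pow 2\atoms$ in which consecutive letters satisfy a \emph{disequality} constraint $a_{i+1}\notin\set{b_i,c_i}$. The crucial point your proposal never isolates is that disequality chains have rational Parikh images while equality chains do not. This is proved (Lemma~\ref{lem:above}) by associating to each candidate data vector $v$ a \emph{source graph} on the atoms occurring as sources in $v$; since every source excludes at most two others, out-degrees are at least $n-3$, and for large $n$ Theorem~\ref{thm:Hamilton} supplies a Hamiltonian cycle, which is lifted back to an anti-cycle realising $v$. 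The resulting characterisation is a short list of non-degeneracy conditions, shown rational by a decomposition that does indeed have the ``fixed-order base plus star'' shape you guessed (Lemma~\ref{lem:nondegrat}). So your instincts about where the difficulty lies and which tool resolves it are correct, but without the anti-path reduction and the source-graph construction, step~(3) is a restatement of the difficulty rather than a proof.
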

\begin{theorem} \label{thm:1G}
Parikh images of  \kCFG 1 languages  are rational.
\end{theorem}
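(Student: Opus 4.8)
The goal is to derive Theorem~\ref{thm:1G} (rationality of Parikh images of \kCFG 1 languages) by reducing it, via tree surgery on derivation trees, to Theorem~\ref{thm:1ARA} (the already-proved automaton case) together with the substitution closure of Lemma~\ref{lem:subst}. The starting observation is that a complete derivation tree $\T$ of a \kCFG 1 carries, along each root-to-leaf branch, a sequence of configurations $\conf{q}{a}$ that behaves exactly like a run of a \kNRA 1: each production rule~\eqref{eq:trrulecfg} of type $\conf q x \trans\varphi \conf p y \conf{p'}{y'}$ either transmits the register atom downward unchanged (register-preserving $\varphi$) or rebinds it (register-updating $\varphi$), and the constraint relating parent and child registers is one of the same five single-orbit types listed in the normal form. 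So each branch of $\T$ is, abstractly, a word over $H\times\atoms$ accepted by an auxiliary one-register automaton extracted from $\G$, and the yield of $\T$ is an interleaving of the terminal letters produced along these branches. The task is to turn "interleaving of branch-words" into a rational operation built from concatenation, star, orbit-finite union, and substitution — all of which preserve rational Parikh images.

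**Key steps.** First I would make precise the branch decomposition: pick the \emph{spine} of $\T$ — say the left-most root-to-leaf path — and observe that $\yield\T$ is the concatenation, over the spine nodes in order, of (the left terminal letter produced at that node, if any) followed by (recursively) the yield of the right subtree hanging off that spine node. This gives a recursion that mirrors a context-free grammar over "spine letters" whose nonterminals are configurations $\conf q a$, whose right-hand sides are again spines; iterating, one sees that $\Par{\langof\G}$ is generated by a finite alternation of (i) an orbit-finite union over the initial atom $a\in\atoms$ and over the first spine rule, (ii) a Kleene-star-like operation collecting the contributions of a repeated spine pattern, and (iii) a substitution plugging in the Parikh images of the subtrees. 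The technical heart is to bound the combinatorics: because there are only finitely many control locations $Q$, finitely many $H$, and five constraint types, the "spine automaton" is a genuine \kNRA 1, so by Theorem~\ref{thm:1ARA} the set of spine-words has rational Parikh image; and the subtrees hanging off the spine are themselves complete derivation trees of $\G$ rooted at configurations $\conf{p'}{b'}$, so by a well-founded induction (on, say, the maximal nesting depth of such subtree-languages, which is finite because there are finitely many nonterminals) their Parikh images are rational. Assembling: $\Par{\langof\G}$ is obtained from the rational set of spine Parikh images by a substitution $\pair{h}{b}\mapsto (\text{a letter})$ and $\conf{p'}{b'}\mapsto \Par{\langpar{\conf{p'}{b'}}\G}$, hence rational by Lemma~\ref{lem:subst}. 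One must be slightly careful that the substitution is \emph{legal} (the indexing map is finitely supported), which holds because the dependence on atoms is entirely through the register values and the constraints are equivariant.

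**The main obstacle.** The delicate point — and, I expect, where the paper's "novel type of transformations of derivation trees" really does its work — is that a naive spine decomposition does not immediately yield a \kNRA 1 spine language, because in a derivation tree a single branch can alternate uncontrollably between going-left and going-right, so the atoms appearing deep in a right subtree are correlated with atoms on the spine in ways a one-register automaton cannot track. The fix is presumably to reorganise the tree so that register atoms are threaded along a single designated path and "reset" (via a type-\nnn or type-\en step) whenever control would otherwise need to remember an old atom — i.e., to trade exact language equality for Parikh equality, exactly as in Example~\ref{ex:L1L2} where the rigid chain $a_1\neq a_2\neq\cdots$ was relaxed to independent blocks. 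Making this transformation preserve the multiset of terminal letters while producing a genuinely \kNRA 1-recognisable family of branch languages is the crux; once it is in place, Theorem~\ref{thm:1ARA} and Lemma~\ref{lem:subst} close the argument by the induction sketched above.
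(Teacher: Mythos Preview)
Your spine-decomposition idea is natural and indeed close in spirit to the paper's use of ``side-effect languages'' (Lemma~\ref{lem:p}), where a root-to-node path in a derivation tree is read as a run of a \kNRA 1 over the extended alphabet $(Q\cup H)\times\atoms$. The gap is in the induction. You propose to handle the subtrees hanging off the spine ``by a well-founded induction (on, say, the maximal nesting depth of such subtree-languages, which is finite because there are finitely many nonterminals)''. But there is no such well-founded measure: a subtree rooted at $\conf{p'}{b'}$ is a complete derivation tree of the very same grammar $\G$, and the languages $\langpar{\conf{p'}{b'}}\G$ are exactly the objects whose Parikh-rationality you are trying to establish. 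Finiteness of $Q$ does not help --- if $\G$ contains a recursive rule $\conf q x \trans\varphi \conf q y\,\conf q{y'}$ then the subtree language coincides (up to an atom permutation) with the language you started from, and your recursion never terminates. You correctly sense this in your ``main obstacle'' paragraph, but you do not supply the fix; you only conjecture that some tree reorganisation will do the job.

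The paper breaks the circularity by a different route. It introduces a notion of \emph{width} of a derivation tree (the largest $n$ for which the tree contains $n$ pairwise-incomparable nodes each of register-updating height $\geq n$) and proves, via a Ramsey-type argument about \emph{compatible} productions (Lemma~\ref{lem:Ramsey}), that every derivation tree can be cut-and-pasted into a Parikh-equivalent tree of bounded width (Lemma~\ref{lem:tobw}). In a bounded-width tree the set $\H$ of ``high'' nodes is a skeleton with at most $n$ leaves and $O(n)$ maximal unary segments; each segment is a \kNRA 1 run (hence its side-effect language has rational Parikh image by Lemma~\ref{lem:1ARA}), and everything outside $\H$ has bounded \emph{height}, whose Parikh image is rational by a genuinely finite induction on height (Lemma~\ref{lem:h}). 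The width bound obtained from the Ramsey/compatibility surgery is precisely the well-founded measure that your sketch lacks.
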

\noindent
%Consider a fixed \kNRA 1 $\A = \tuple{H, Q, I, F, \Delta}$.
We actually prove a refined version of Theorem~\ref{thm:1ARA}
(needed also for proving Theorem~\ref{thm:1G})
which, due to~\eqref{eq:L}, implies Theorem~\ref{thm:1ARA}:
%For a \kNRA 1 $\A$, let $\langoffull \A q a {q'} {a'}$ be the set of data words input by a run starting
%in a configuration $\conf q a$ and ending in a configuration $\conf {q'} {a'}$.
%
\begin{lemma} \label{lem:1ARA}
For every \kNRA 1 $\A$, the languages $\langoffull \A q a {q'} {a'}$ have rational Parikh images.
\end{lemma}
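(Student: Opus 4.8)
The plan is to reduce the infinite object $\langoffull{\A}{q}{a}{q'}{a'}$ to a finite combinatorial one. Fix the \kNRA 1 $\A$ and source/target configurations $\conf{q}{a}$, $\conf{q'}{a'}$. A run~\eqref{eq:run} visits a finite sequence of control locations but an unbounded set of register values; however, because there is only one register, the register value changes \emph{only} at register-updating transitions (types \ne, \en, \nnn), and between two consecutive updates the register holds a single atom $b$, during which the automaton behaves like a classical NFA over the alphabet $\set{\conf{h}{b} : h\in H} \cup (H\times(\atoms\setminus\set{b}))$ — the register-preserving transitions \ee, \nne being, respectively, ``read the stored atom'' and ``read anything else''. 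So the first step is to \emph{decompose} every run into maximal \emph{register-constant segments}, indexed by the atom held during that segment, glued together at register-updating transitions. The Parikh image of the whole run is the sum of the Parikh images contributed by the segments.

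The second step is to abstract each segment. Within a segment the stored atom $b$ is fixed and the only distinctions the automaton can make among input letters are: equal to $b$ or not equal to $b$. Hence the contribution of a segment starting in location $p$, ending in location $p'$, with stored atom $b$, is captured by a classical NFA $\A_{p,p'}$ over the two-letter alphabet $\{=, \neq\}$ (a letter ``$=$'' standing for some $\conf{h}{b}$, a letter ``$\neq$'' for some $\conf{h}{c}$ with $c\neq b$), together with a choice, at each position, of which $h\in H$ and (in the ``$\neq$'' case) which concrete atom $c\neq b$ is read. By classical Parikh/Kleene, the set of pairs $(m,n)$ of $=$-counts and $\neq$-counts realizable by $\A_{p,p'}$ is a semilinear subset of $\N^2$, hence given by a (finite, classical) rational expression $E_{p,p'}$. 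Now, \emph{substituting} back: an $=$-letter corresponds to the orbit-finite union $\bigcup_{h\in H}\conf{h}{b}$ — a singleton in the atom $b$ — while a $\neq$-letter corresponds to $\bigcup_{h\in H, c\in\atoms\setminus\set{b}}\conf{h}{c}$. Applying Lemma~\ref{lem:subst} to $E_{p,p'}$ with these rational substitutions yields, for each fixed $b$, a rational set $S_{p,p',b}$ of data vectors describing one segment; and taking the orbit-finite union over $b\in\atoms$ (the indexing being supported by the support of $b$ together with the constants of $\A$) keeps it rational.

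The third step is to \emph{reassemble}. The ``skeleton'' of a run — the sequence of locations together with the types of the register-updating transitions between consecutive segments and the equality pattern these impose on the sequence of stored atoms $b_0, b_1, b_2, \ldots$ — ranges over a set that is \emph{not} bounded in length, so this is where the real work lies and where I expect the main obstacle. Two stored atoms $b_i, b_j$ in different segments may be forced equal (e.g. via an \en-transition out of $b_i$ and back), forced distinct, or unconstrained, and an unboundedly long run can reuse atoms in intricate ways; naive iteration of a single ``segment gadget'' under a star would over- or under-count these equality constraints. The idea to handle this is exactly the graph-theoretic characterisation the introduction promises (contribution~(2)): build a directed multigraph whose vertices are the (finitely many) location–type records and whose edges are segment-contributions, so that admissible skeletons correspond to certain walks/Eulerian-type substructures in this graph, and invoke the Hamiltonicity/connectivity criteria of~\cite{Ghouila-Houri,Kuhn-Osthus} to show that, \emph{once the segment multiset is large enough}, the equality constraints among stored atoms can always be satisfied by a suitable assignment of fresh atoms — so that above a computable threshold the Parikh image is simply a Minkowski sum/star of the per-segment rational sets $S_{p,p',b}$, while the bounded-length remainder contributes an orbit-finite (hence rational) set. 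Combining: $\langoffull{\A}{q}{a}{q'}{a'}$ has Parikh image equal to a finite union of sets of the form (orbit-finite correction term) $+$ $\big(\bigcup_{b}S_{p,p',b}\big)^{*}$-style combinations, each rational by the closure properties of Section~\ref{sec:rat} and Lemma~\ref{lem:subst}; hence rational. $\qed$
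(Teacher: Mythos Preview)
Your first two steps are essentially what the paper does: the segment languages $L_{qap}$ (register-preserving runs from $\conf{q}{a}$ to $\conf{p}{a}$) are rational for exactly the reason you give (Lemma~\ref{lem:Lqap}), and the full language $\langoffull{\A}{q}{a}{q'}{a'}$ is indeed recovered from the ``skeleton'' by a substitution (Lemma~\ref{lem:subst}). So far so good.

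The gap is in Step~3, and it is not just missing detail. You propose a graph whose vertices are the finitely many location--type records, and you want to apply Ghouila-Houri there. But Ghouila-Houri is a degree condition for Hamiltonicity; applied to a \emph{fixed finite} graph it gives a yes/no answer independent of the run, so it cannot be what makes things work ``once the segment multiset is large enough''. The actual obstacle is not arranging locations but arranging \emph{atoms}: consecutive stored atoms must be distinct, and at each register-updating transition of type \en, \ne, or \nnn the input atom is further constrained relative to the old and new register values. The constraints are \emph{disequalities}, and the question is whether a given multiset of letters can be linearly ordered so that all these disequalities hold. Your formula $\big(\bigcup_{b} S_{p,p',b}\big)^{*}$ drops exactly this constraint.

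What the paper does is first strip the locations entirely by a Kleene-style induction on the number of register-updating transition \emph{rules} (Lemma~\ref{lem:LF} via Lemma~\ref{lem:LF2}), reducing to a pure atom problem: the language of \emph{anti-paths} over $\atoms\times\pow{2}{\atoms}$, i.e., words $\tuple{a_1,\set{b_1,c_1}}\ldots\tuple{a_n,\set{b_n,c_n}}$ with $a_{i+1}\notin\set{b_i,c_i}$ (Lemma~\ref{lem:anti0}). Only then does a graph enter, and its vertices are \emph{atoms} (the sources $a_i$ appearing in a given data vector $v$), not locations: the \emph{source graph} $\gr{v}$ has an edge $d\to e$ when some letter with source $d$ has $e\notin$ its target. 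Because targets have size~2, out-degrees are at least $n-3$ in an $n$-vertex source graph, and \emph{this} is where Ghouila-Houri bites: for vectors of sufficiently large order the source graph is dense enough to be Hamiltonian, and a Hamiltonian cycle unrolls into an anti-cycle realising $v$ (Lemma~\ref{lem:above}). The small-order case and the non-degeneracy side conditions are handled separately (Lemmas~\ref{lem:belown} and~\ref{lem:nondegrat}). Without the anti-path abstraction and the atom-indexed source graph, your Step~3 does not go through.
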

\noindent
%, as $\langof \A$ is an orbit-finite union:
%\[
%\langof \A \ = \ \bigcup_{q\in I, q'\in F, a,a'\in \atoms} \langoffull \A q a {q'} {a'}.
%\]

Before proving Theorem~\ref{thm:1G} and Lemma~\ref{lem:1ARA} in Sections~\ref{sec:1CFG}--\ref{sec:anti},
we first demonstrate that semi-linear sets are not sufficient to capture Parikh images 
of \kCFG 1, or even \kNRA 1.

\section{Semi-linear sets}   \label{sec:sl}

Analogously to rational sets, we lift semi-linear sets to orbit-finite alphabets.
Consider data vectors over a fixed orbit-finite alphabet $\Sigma$.
A \emph{linear set} is then any set of the form
\[
N \ = \ g + P^*
\]
for a data vector $g$ and an orbit-finite set $P$ of data vectors, and a semi-linear set is any orbit-finite union of linear sets:
\begin{align} \label{eq:sl}
\bigcup_{i\in I} \, N_i \quad = \quad 
\bigcup_{i\in I} \, g_i + {P_i}^*.
\end{align}
In particular, $I$ is orbit-finite, and the function $i \mapsto (g_i, P_i)$ mapping $i\in I$
to a  data vector $g_i$ (base) and an orbit-finite set $P_i$ of data vectors (periods) is legal.
By definition, semi-linear sets are a subset of rational sets of \emph{star-height 1}
(star-height is defined as usual, as the maximal nesting depth of stars).

\begin{example}
Parikh image of $L_1$  (cf.~Example~\ref{ex:ParL1}, $\Sigma=\atoms$) 
is semi-linear:   (with all sets $P_i$ equal): 
%Indeed, it is enuogh to take:
\[
I \ = \ \big(\zerovector \,\, \cup \bigcup_{a\in\atoms} \! a\big) \qquad
g_i \ = \ i \qquad
P_i  \ = \! \bigcup_{a,b \in\atoms, a\neq b} \!\!\! a + b.
% \, + \, \big(\emptyword \, \cup \bigcup_{a\in\atoms} a\big).
%\big(\bigcup_{a,b \in\atoms, a\neq b} a b \big)^* 
%\quad \cup \quad
%\big(\bigcup_{a,b \in\atoms, a\neq b} a b \big)^* \, + \, \bigcup_{a\in\atoms} a
\]
\end{example}

\begin{proposition}%[Appendix~\ref{sec:sl-app}]  
\label{prop:sh1}
Semi-linear sets of data vectors are exactly rational sets of star-height at most $1$.
\end{proposition}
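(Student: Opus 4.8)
The plan is to prove the two inclusions separately. First I would show that every semi-linear set has star-height at most $1$; this is essentially immediate from the definitions. A linear set $N = g + P^*$ is built from the singleton $\set{g}$, the orbit-finite union $\bigcup_{p \in P} \set{p}$ (which has star-height $0$ since it contains no star), a single application of star, and one addition, so $N$ has star-height $1$. A semi-linear set $\bigcup_{i \in I} N_i$ is then an orbit-finite union of such sets, and orbit-finite union does not increase star-height, so the whole set still has star-height at most $1$. One should remark that the legality conditions match: the indexing function $i \mapsto (g_i, P_i)$ being legal is exactly what is needed to build the rational expression, and conversely.

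For the converse, I would proceed by induction on the structure of a rational expression $E$ of star-height at most $1$, proving that the denoted set is semi-linear. The base case, a singleton $\set{x}$, is the linear set $x + \zerovector^*$ (equivalently $x + \emptyset^*$), hence semi-linear. For the inductive step there are three operations to handle, and the key normalisation observation is that a rational expression of star-height at most $1$ is, after pushing unions outward, an orbit-finite union of expressions each of which is a product of factors, where each factor is either a starless (hence orbit-finite, by Lemma~\ref{lem:b-o-f} combined with the earlier remark that starless rational expressions denote orbit-finite sets) set or the star of a starless set. Concretely: orbit-finite union of semi-linear sets is semi-linear by definition, once one checks legality of the combined indexing; for addition, $(\bigcup_i g_i + P_i^*) + (\bigcup_j h_j + Q_j^*) = \bigcup_{(i,j)} (g_i + h_j) + (P_i \cup Q_j)^*$, which is again semi-linear since the product index set $I \times J$ is orbit-finite and $P_i^* + Q_j^* = (P_i \cup Q_j)^*$; for star, the crucial point is that star is only applied to a subexpression of star-height $0$, i.e.\ to an orbit-finite set $X$ of data vectors, and $X^*$ is by definition the linear set $\zerovector + X^*$, hence semi-linear.

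The main obstacle I anticipate is the star case, and more precisely justifying that in an expression of star-height at most $1$ the star operator is never applied to a set that is itself semi-linear-but-not-orbit-finite. The honest way to organise this is to first prove, as a lemma or by direct induction, that a rational expression of star-height $0$ denotes an orbit-finite set (using Lemma~\ref{lem:b-o-f}: no star means bounded size, since each operation among union, addition, and singleton keeps the set of sizes bounded — orbit-finite union of boundedly-sized sets is boundedly sized, and Minkowski sum adds the two bounds), and then observe that in a star-height-$\le 1$ expression, every starred subexpression has star-height $0$ in its body. With that in hand the induction above goes through: one carries the invariant that the denoted set is semi-linear, splitting on the outermost operation, and in the star case the body is orbit-finite so $X^*$ is directly one of the defining linear forms. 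Care must also be taken with the legality/support bookkeeping in the union and addition cases, so that the resulting indexing functions $i \mapsto (g_i, P_i)$ are themselves legal — but this is routine, exactly as in the proof of Lemma~\ref{lem:subst}, since supports of the combined data only grow by taking finite unions of the supports already present.
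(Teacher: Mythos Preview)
Your proposal is correct and follows essentially the same approach as the paper: both directions rely on the same ingredients, namely that star-height~$0$ expressions denote orbit-finite sets, distributivity of addition over orbit-finite unions (with $I\times J$ orbit-finite), and the merging law $P^* + Q^* = (P\cup Q)^*$. The paper presents the converse as a normalisation pushing unions outward, while you organise it as a structural induction, but these are cosmetic differences.
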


\para{Semi-linear sets are not sufficient}
We demonstrate that Parikh images of \kNRA 1 languages are not semi-linear in general.
As a counterexample we take the following language $L_3\subseteq \atoms$.
For $a\in\atoms$, let
\begin{align} \label{eq:Marta}
K_{a} \ = \ \bigcup_{b \in\atoms-\set{a}} b; \qquad 
L_a \ = \ a a \, \big(a\, K_{a} \big)^*.
\end{align}
Let $L_3$ be the language obtained from $L_1$ by the substitution:
$
a \ \mapsto \ L_a.
$
%\[
%\xymatrix{
%{}\ar[r]&
%*+[o][F-]{p}\ar@/^/[rr]^(.2){\top}^(.8){\downarrow 1}\ar@(dr,dl)[]^(.2){\top} 
%& & *+[o][F-]{q}\ar@/^/[rr]^(.3){x=x_1}\ar@(dr,dl)[]^(.3){x\neq x_1}
%& & *+<9pt>[o][F=]{r}\ar@(ur,dr)[]^(.2){x=x_1}\ar@/^/[ll]^(.2){x\neq x_1}
%}
%\]
The language is clearly rational, and recognised by a (deterministic) one-register automaton ($H$ is omitted):
\vspace{-1mm}
\begin{figure}[H] %[btp]
\begin{center}
\includegraphics[width=7cm]{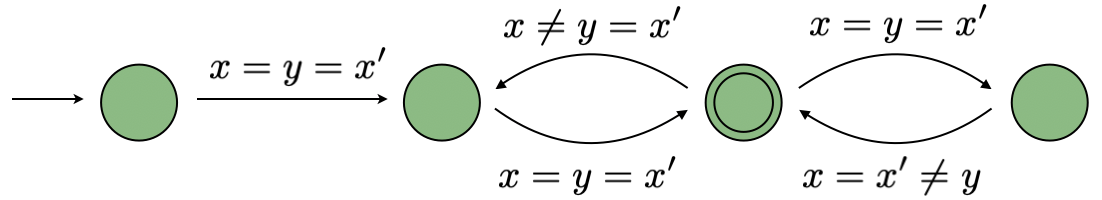}
\end{center}
\end{figure}
\vspace{-3mm}
As we show, its Parikh image is not semi-linear, which
motivates consideration of rational sets
%, instead of only semi-linear ones,
in forthcoming sections.

\begin{lemma}%[Appendix~\ref{sec:sl-app}]  
\label{lem:Marta}
$\Par{L_3}$ % \kNRA 1 languages are
is not semi-linear.
\end{lemma}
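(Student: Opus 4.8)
The plan is to argue by contradiction: assume $\Par{L_3}$ is semi-linear, i.e.\ $\Par{L_3} = \bigcup_{i\in I} g_i + P_i^*$ as in~\eqref{eq:sl}, and derive a contradiction from the structure of the words in $L_3$. First I would unwind the substitution: a word in $L_3$ is built from some $a_1 a_2 \ldots a_n \in L_1$ (so consecutive $a_j$ differ) by replacing each $a_j$ with a word from $L_{a_j} = a_j a_j (a_j K_{a_j})^*$; such a block contributes $(2 + m_j)$ copies of the atom $a_j$ together with $m_j$ atoms drawn from $\atoms - \set{a_j}$, for some $m_j \geq 0$, where moreover those $m_j$ ``off-atoms'' are essentially unconstrained. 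The key qualitative feature to extract is: in $\Par{L_3}$, for each atom $a$ actually used as some $a_j$, the multiplicity of $a$ is at least $2$, and is \emph{odd or even in a correlated way with} how many foreign atoms were spawned from its block --- more usefully, the multiplicity of $a$ can be made to exceed the number of blocks, so that a single atom carries unboundedly large mass while the ``number of distinct heavily-used atoms'' can be pushed up independently.

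The heart of the argument is a pumping/orbit-counting obstruction. Fix a supposed presentation with index set $I$ and period sets $P_i$; since $I$ and each $P_i$ are orbit-finite, there is a global bound $N$ on the sizes of the supports of $g_i$ and of the (finitely many orbits of) periods in $P_i$, i.e.\ every base and every period mentions at most $N$ atoms. Now I would choose a target vector $v = \Par w$ for a cleverly chosen $w \in L_3$ in which (a) many distinct atoms $a$ appear with multiplicity $\geq 2$ --- say $K \gg N$ of them --- and (b) the multiplicities of these atoms are pairwise distinct large primes, or more simply are arranged so that no nonnegative integer combination of a bounded-support period set can realise the needed increments without ``leaking'' mass onto atoms outside the allowed support. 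Concretely: pick $w$ using atoms $a_1, \ldots, a_K$ as the $L_1$-skeleton with no block expansion ($m_j = 0$), so $v$ is simply $2a_1 + 2a_2 + \cdots + 2a_K$. This $v$ has support of size $K > N$, so it cannot equal any single base $g_i$ alone; hence writing $v = g_i + \sum_t p_t$ with $p_t \in P_i$ forces at least $K - N$ atoms of $\supp v$ to lie in $\bigcup_t \supp {p_t}$. Here one uses the crucial point that every period $p$ used, being legal and supported by few atoms, can contribute to at most $N$ coordinates; but each such contributing period $p$, when it is added, must keep us inside $\Par{L_3}$ --- and a period that puts mass $+2$ on some atom $a$ but mass $0$ on that atom's ``partner'' structure violates membership, because in $L_3$ every occurrence of an atom beyond the initial $aa$ of its block is immediately adjacent to a foreign atom from $K_a$, forcing a rigid $1{:}1$ accounting between ``extra $a$'s'' and ``foreign atoms'', a relation no fixed-support period can maintain as $a$ ranges over infinitely many atoms.

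I would then turn this into a clean contradiction by exhibiting a sequence of valid target vectors whose required period decompositions force ever-larger supports among the periods of a fixed $P_i$ (impossible, as $P_i$ is orbit-finite hence has bounded support size), or dually, by showing that infinitely many orbits of vectors in $\Par{L_3}$ would have to be covered by finitely many linear sets each of which, once a period is ``activated'' on some atom $a$, is locked into a specific affine relation on $a$'s coordinates that fails for all but finitely many choices of $a$. The precise bookkeeping --- matching the ``$+2$ for $a_j$, $+1$ for $a_j$ and $+1$ for a foreign atom per pumping step'' pattern of $L_a$ against the rigidity of linear-set periods --- is the step I expect to be the main obstacle, since it requires carefully ruling out all the ways a period set could spread mass across the skeleton atoms and their foreign companions simultaneously; I would handle it by first normalising the presentation (splitting into orbits, discarding periods that can never be used) and then doing a coordinate-projection argument onto the pair $(\text{mult of }a,\ \text{mult of foreign atoms paired with }a)$ for a generic fresh atom $a$.
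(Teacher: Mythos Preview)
Your setup---assume a semi-linear presentation and exploit the bounded support of bases and periods---is the natural opening, but the contradiction is never actually derived, and the intuition you offer for it does not hold. You claim that a period putting mass $+2$ on an atom $a$ without touching a ``partner structure'' violates membership, because of a ``rigid $1{:}1$ accounting between extra $a$'s and foreign atoms.'' There is no such accounting at the level of Parikh images: the period $2a + 2b$ (with $a \neq b$) is perfectly compatible with $\Par{L_3}$, since adding it corresponds to inserting $ab$ into the $L_1$-skeleton---two new minimal blocks $aa$ and $bb$, no foreign atoms whatsoever. In fact the semi-linear set with bases $\{2c+2d : c\neq d\}$ and period orbits generated by $2a+2b$ and $2a+2b+2e$ (pairwise distinct atoms) sits entirely inside $\Par{L_3}$ and already contains every one of your target vectors $2a_1 + \cdots + 2a_K$. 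So the specific obstruction you propose does not exist, and the step you flag as ``the main obstacle'' is in fact the entire proof, not a detail to be filled in afterwards. Your fallback idea of projecting onto ``(multiplicity of $a$, multiplicity of foreign atoms paired with $a$)'' is also ill-defined: in a Parikh image there is no record of which block a foreign atom came from.

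The paper's argument is quite different and worth studying. It does not look for a single unreachable vector; it tracks an asymptotic invariant, the \emph{singularity ratio} $\poj v / \size v$, where $\poj v$ counts atoms appearing exactly once in $v$. One first shows $\poj v < \tfrac{1}{2}\size v$ for every $v \in \Par{L_3}$, and then that periods $p \in P_i$ inherit a version of this bound off $\supp{P_i}$, strictly when $p$ restricted off $\supp{P_i}$ is non-singular. The contradiction comes from a sequence $q_m \in \Par{L_3}$---built from a skeleton of length one more than the maximal support size, each block varied and of length $2m$---for which $\poj{q_m}/\size{q_m} \to \tfrac{1}{2}$; decomposing $q_m$ through a skeleton atom lying outside $\supp{P_{i_m}}$ forces $\limsup \poj{q_m}/\size{q_m} < \tfrac{1}{2}$. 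The obstruction is thus a limiting ratio that semi-linear sets cannot approach, not a support-size count.
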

\noindent
Let  $\poj v = \size{\setof{a\in \atoms}{v(a)=1}}$
denote the number of atoms appearing exactly once in $v$.
The argument relies on a careful analysis of the limit value of the \emph{singularity ratio} 
$\frac {\poj v}{\size v}$ for $v \in \Par{L_3}$, 
when $\size v$ tends to infinity.

% !TEX root = main.tex
 
\section{Proof of Theorem~\ref{thm:1G}}  \label{sec:1CFG}

%
%
%The rest of this section is devoted to proving our last result:
%%
%\begin{theorem} \label{thm:1G}
%Parikh images of  \kCFG 1 languages  are rational.
%\end{theorem}
%

Consider a fixed \kCFG 1 $\G = (H, Q, q_0, \Delta_2, \Delta_0)$.
%
%\para{Normal form}
%

%Moreover we assume, w.l.o.g., that no nonterminal $q\in Q$ has a production rule from $\Delta_2$
%as well as one from $\Delta_0$. Thus $Q$ splits into two disjoint subsets $Q = Q_2 \cup Q_0$.

%\para{Notations for trees}

\para{Proof strategy}

We proceed in three steps.
First, by a Ramsey's argument,  we prove that a sufficiently large set of productions
contains a \emph{compatible} pair (Lemma~\ref{lem:Ramsey}).
Then we define \emph{width} of derivation trees and show that for a sufficiently large $n\in\N$, 
every derivation tree can be transformed into a tree of width at most $n$
while preserving the Parikh image of its yield (Lemma~\ref{lem:tobw}).
The cut-and-paste transformation relies on compatibility of productions in a tree. 
Finally,
%, relying on Lemma~\ref{lem:1ARA}, 
we argue that Parikh image of the set of words generated by derivation trees of 
width bounded by $n$ is rational, for every fixed $n\in\N$ (Lemma~\ref{lem:w}).
Lemmas~\ref{lem:tobw} and~\ref{lem:w} imply Theorem~\ref{thm:1G}.

\para{Compatibility}

The equality type of a tuple $\tuple{a_1, \ldots, a_k}\in \atoms^k$ is defined as
the set $\setof{\pair i j}{1 \leq i < j \leq k, \ a_i = a_j}$.
Intuitively speaking, tuples of the same equality type admit the same equalities between their coordinates.
Two tuples $\alpha = \tuple{a_1, \ldots, a_k}$ and $\beta = \tuple{b_1, \ldots, b_k}$ 
we call \emph{compatible} if they have the same equality type, and
for every coordinate $i\in\set{1, \ldots, k}$ one of two conditions holds: either
(1) $a_i = b_i$; or (2) $a_i \neq b_i$ and both $a_i$ and $b_i$ do not appear in the other tuple:
% elsewhere in $\alpha$ and $\beta$:
$a_i \notin\set{b_1,\ldots, b_k}$, $b_i\notin\set{a_1,\ldots,a_k}$.
%$\set{a_i, b_i} \cap \set{a_j, b_j} = \emptyset$ for every $j\neq i$.
In particular, two equal $k$-tuples are always compatible.

\begin{lemma} \label{lem:Ramsey}
For every $k\in\N$ there is some $l = f(k)\in\N$
such that every finite multiset of $k$-tuples of atoms $A : \atoms^k \to \N$ of size at least $l$
contains two compatible $k$-tuples.
\end{lemma}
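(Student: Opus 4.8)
The plan is to bound the number of equality types of $k$-tuples, then apply a pigeonhole-plus-Ramsey argument within each equality type. First I would observe that the number of equality types of $k$-tuples over $\atoms$ is finite — it is at most the number of partitions of $\set{1,\ldots,k}$, i.e.\ the Bell number $B_k$ — so if the multiset $A$ is large enough, some single equality type $\tau$ is realised by many tuples in $A$ (counted with multiplicity). If $\tau$ forces \emph{some} pair of coordinates to coincide, collapse each such tuple to the tuple of its distinct entries; this reduces, without loss of generality, to the case where $\tau$ is the discrete equality type, i.e.\ all $k$ coordinates are pairwise distinct. So it suffices to prove: for every $k$ there is $l'$ such that any set of $l'$ pairwise-distinct-entry $k$-tuples contains two that are compatible, where compatibility now simply means: for each coordinate $i$, either the $i$-th entries agree, or the $i$-th entry of one does not occur anywhere in the other tuple.

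The heart of the argument is this distinct-entry case, and this is where I expect the real work to lie. I would view each $k$-tuple as a $k$-element subset of $\atoms$ together with a linear ordering of it (equivalently, an injection $\set{1,\ldots,k}\to\atoms$). The obstruction to compatibility of two tuples $\alpha,\beta$ is a coordinate $i$ with $a_i\neq b_i$ but $a_i\in\set{b_1,\ldots,b_k}$ or $b_i\in\set{a_1,\ldots,a_k}$ — i.e.\ an atom shared by the two underlying sets but sitting in ``different positions''. The clean way to kill such obstructions is a sunflower/$\Delta$-system argument: by the sunflower lemma, among sufficiently many $k$-element sets we can find $k+1$ of them forming a sunflower with a common core $C$; the petals (the parts outside $C$) are pairwise disjoint. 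Now within the sunflower, the entries lying in $C$ occupy at most $\size C\le k$ coordinates, so by pigeonhole two of the $k+1$ sunflower members, say $\alpha$ and $\beta$, agree on the \emph{set of coordinates occupied by core atoms} and, refining further, on the actual placement of each core atom (finitely many placements, so enlarge the sunflower accordingly). For such $\alpha,\beta$: a shared atom is either in the core $C$ — in which case it sits at the same coordinate in both, satisfying condition~(1) — or it lies in a petal — but petals are disjoint, so there are no shared petal atoms at all. Hence every coordinate satisfies condition~(1) or condition~(2), and $\alpha,\beta$ are compatible.

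Assembling the bound: set $l = B_k \cdot (\text{sunflower bound for }k\text{-sets and }k+1\text{ petals}) \cdot (\text{number of ways to place }{\le}k\text{ core atoms into coordinates})$, all multiplied by a constant to absorb multiplicities and the collapsing step; any such finite $f(k)$ works. The main obstacle is organising the distinct-entry case so that the sunflower core, the coordinate placement of core atoms, \emph{and} the disjointness of petals all line up to certify compatibility simultaneously — the sunflower lemma does the set-theoretic disjointness, but one still has to spend an extra pigeonhole step to pin down where the core atoms sit, and to check that nothing outside the core can create a cross-occurrence. Everything else is routine counting.
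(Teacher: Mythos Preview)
Your argument is correct but takes a genuinely different route from the paper. Both begin by pigeonholing onto a single equality type; the paper then colours each pair $\{\alpha,\beta\}$ by its \emph{disagreement set} $D_{\alpha\beta}=\{i:a_i\neq b_i\}\subseteq\{1,\ldots,k\}$ (at most $2^k$ colours), applies Ramsey's theorem to extract a monochromatic clique of size $k^2+1$ in which every two tuples disagree on the same fixed $D$, and finishes with a direct count: for each $i\in D$ the $i$-th coordinates are pairwise distinct across the clique, so at most $k$ tuples can place their $i$-th coordinate into a fixed $\alpha$'s support, whence at most $k^2$ tuples obstruct compatibility with $\alpha$. You instead collapse to the distinct-entry case and apply the Erd\H{o}s--Rado sunflower lemma to the underlying $k$-sets, then pigeonhole on how the core is placed among the coordinates. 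Your route is more structural --- the sunflower core names the shared atoms outright and you force them into identical positions, after which disjointness of petals kills all cross-occurrences --- while the paper's Ramsey route handles all equality types uniformly without the collapse step and gets the two halves of condition~(2) from a single exclusion count (the monochromatic $D$ makes the condition symmetric). Both give explicit bounds: the paper's is governed by the multicolour Ramsey number $R(\underbrace{k^2{+}1,\ldots,k^2{+}1}_{2^k})$, yours by the sunflower bound times a $k!$ factor for core placements. One wrinkle to spell out when you write it up: ``$k+1$ petals'' is not enough --- you already noted the sunflower must be enlarged to beat the $\le k!$ core placements --- and you should also absorb, before invoking the sunflower lemma on sets, the at-most-$k!$ distinct tuples that may share the same underlying $k$-set.
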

\begin{proof}
Let $k\in\N$ be fixed.
If $A$ contains two equal tuples, they are compatible. 
Thus we can assume $A$ to be a set.
We take $l = f(k)$ large enough to satisfy the constraint~\eqref{eq:constr} below.

The number of different equality types $E_k$ is finite and equal to the number of partitions
of the coordinates set $\set{1, \ldots, k}$ (the $k$th Bell number).
By the pigeonhole principle, for $l = \size A$ large enough, there is a subset $A' \subseteq A$
of size $l' = \size{A'} = \frac l {E_k}$ whose elements have all the same equality type. 

We now consider an undirected clique of size $l'$ with vertices $A'$, where the 
edge between vertices $\alpha = \tuple{a_1, \ldots, a_k}$ and $\beta = \tuple{b_1, \ldots, b_k}$ 
is labeled (coloured) by the set $D_{\alpha \beta} = \setof{i\in\set{1,\ldots, k}}{a_i \neq b_i}$.
Intuitively, the colour describes the coordinates on which $\alpha$ and $\beta$ disagree.
The number of colours is at most $C = 2^k$.
By Ramsey's theorem, for $l'$ large enough
the graph contains a monochromatic clique $A''$ of size
$l'' = k^2+1$; indeed, it suffices to take
\begin{align} \label{eq:constr}
l' \ \geq \ R(\underbrace{l'', l'', \ldots, l''}_C).
\end{align}
Thus every two elements of $A''$ disagree on the same coordinates $D\subseteq \set{1,\ldots, k}$,
and hence also agree on the same coordinates $\set{1,\ldots, k} - D$.

Take any $\alpha = \tuple{a_1, \ldots, a_k} \in A''$.
For every coordinate $i\in D$, all tuples $\beta \in A''$ are pairwise different on that coordinate.
Therefore, at most $k$ tuples $\beta = \tuple{b_1, \ldots, b_k} \in A''$ may satisfy 
\begin{align} \label{eq:bi}
b_i \in \set{a_1, \ldots, a_k},
\end{align} 
i.e., $b_i$ appears in $\alpha$.
As $\size D\leq k$, at most $k^2$ tuples (including $\alpha$ itself) may satisfy
the condition~\eqref{eq:bi} for some coordinate $i\in D$.
Therefore taking any of the remaining tuples, say $\beta$, we obtain a compatible pair $\alpha, \beta$.
\end{proof}

\para{Traversals and side-effects}

The number of children of a node $x$ in a derivation tree $\T$ we call \emph{arity} of $x$
(leaves are nodes of arity 0).
Let $\preceq$ denote the tree order ($x\preceq y$ if $x$ is an ancestor of $y$).
A path from a node $x$ to a node $y$, assuming $x\preceq y$, is the set 
$\setof{z\in \T}{x\preceq z \preceq y}$ of all nodes $z$ appearing between
the nodes $x$ and $y$, including $x$ and $y$.

Consider an arbitrary derivation tree $\T$ of $\G$.
We distinguish two ways of traversing a production 
${\conf q {a}} \trans{} {\conf {p} {b}} \,\, {\conf {p'}{b'}} \in \Pi_2$ appearing in $\T$ by a path,
namely left and right traversal:
%A path in a derivation tree contains either left or right \emph{side} of every traversed production from $\Delta_2$:

\vspace{-2mm}
\begin{figure}[H] %[btp]
\begin{center}
\includegraphics[width=5.3cm]{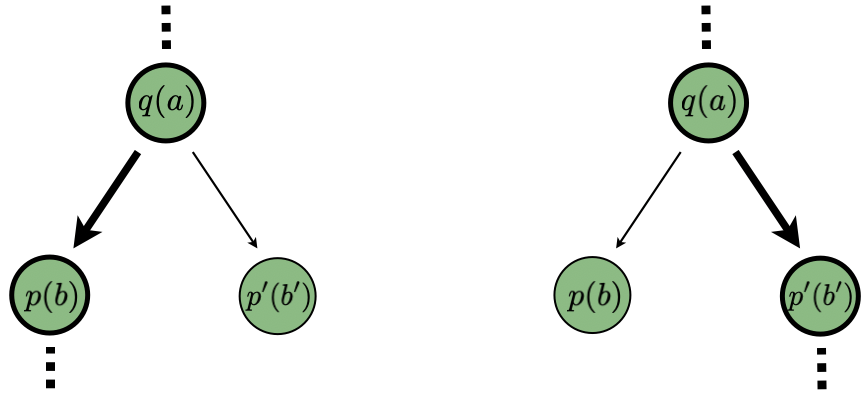}
\end{center}
\end{figure}
\vspace{-3mm}

\noindent
%Clearly, a production appearing in a derivation tree par
%
Once left or right traversal is chosen, say the right one,
a production ${\conf q {a}} \trans{} {\conf {p} {b}} \,\, {\conf {p'}{b'}} \in \Pi_2$
resembles a transition of \kNRA 1 (over the extended input alphabet 
$\Gamma=(Q\cup H)\times\atoms$) from $\conf q a$ to $\conf {p'} {b'}$ 
which inputs the label of the remaining node, namely $\conf p b$.
We call the pair $\conf p b\in \Gamma$ the \emph{side-effect} of the right traversal; 
symmetrically we call $\conf {p'} {b'}$ the side-effect of the left traversal.
% The sequence of side-effects along a path we call briefly the side-effect of the path.
For two configurations $\conf q a$ and $\conf p b$  of $\G$, we denote
by $S_{\conf q a\, \conf p b} \subseteq \Gamma^*$ 
the set of all sequences of side-effects that may appear along a path from a node labeled by $\conf q a$
to a node labeled by $\conf p b$ in a derivation tree of $\G$.
As a corollary of Lemma~\ref{lem:1ARA} we get:
\begin{lemma} \label{lem:p}
Languages $S_{\conf q a\, \conf p b}$ have rational Parikh images.
\end{lemma}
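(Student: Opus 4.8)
The plan is to reduce Lemma~\ref{lem:p} to Lemma~\ref{lem:1ARA} by exhibiting, for each pair of configurations $\conf q a$ and $\conf p b$ of $\G$, a \kNRA 1 (over the extended alphabet $\Gamma = (Q\cup H)\times\atoms$) whose language between suitable configurations is exactly $S_{\conf q a\, \conf p b}$, and then invoke the fact that \kNRA 1 languages between fixed endpoints have rational Parikh images. The key observation, already highlighted in the text, is that once we fix the direction (left or right) in which a binary production $\conf q{a'} \trans{} \conf p{b'}\,\conf{p'}{b''}$ is traversed by a path, the production behaves exactly like a single \kNRA 1 transition: for a right traversal it is a transition from $\conf q{a'}$ to $\conf{p'}{b''}$ reading the side-effect letter $\conf p{b'}\in\Gamma$, and the constraint $\varphi(x,y,y')$ of the production rule is literally a Boolean combination of equalities among the current register, the read letter's atom, and the next register — which is precisely the shape~\eqref{eq:trrule} allows.

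First I would build an automaton $\A_\G$ as follows. Its control locations are the nonterminals $Q$ of $\G$ (we only ever track the nonterminal part of a configuration along a path, since the register value is carried in the automaton register). For every binary production rule $\conf q x \trans{\varphi} \conf p y\,\conf{p'}{y'} \in \Delta_2$ we add two transition rules to $\A_\G$: a ``right traversal'' rule $(\conf q x, \pair p y, \varphi, \conf{p'}{y'})$, which moves the path into the right child $p'$, storing the updated register, and emits the side-effect letter $\pair p y\in\Gamma$; and symmetrically a ``left traversal'' rule $(\conf q x, \pair {p'}{y'}, \varphi', \conf p y)$, where $\varphi'$ is $\varphi$ with the roles of $y$ and $y'$ swapped, emitting side-effect $\pair{p'}{y'}$. (One should double-check that these extended-alphabet letters $\pair p y$ with $p\in Q\cup H$ are legitimate alphabet symbols for a \kNRA 1 over $\Gamma$, which they are, since $\Gamma$ is orbit-finite of the required product form; if the normal-form assumption that each $\varphi$ is single-orbit is desired, split each rule accordingly as in the Normal forms paragraph.) A finite run of $\A_\G$ from $\conf q a$ to $\conf p b$ then corresponds bijectively to a path in some derivation tree from a node labeled $\conf q a$ down to a node labeled $\conf p b$ — the sequence of transitions records the sequence of production rules traversed together with the chosen direction at each step — and the word read along that run is exactly the sequence of side-effects. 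Hence $S_{\conf q a\, \conf p b} = \langoffull{\A_\G}{q}{a}{p}{b}$.

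With this identification in hand, Lemma~\ref{lem:1ARA} gives immediately that each $\langoffull{\A_\G}{q}{a}{p}{b}$ has rational Parikh image, and therefore so does each $S_{\conf q a\, \conf p b}$, which is Lemma~\ref{lem:p}. I would also note that the family $\conf q a,\conf p b \mapsto S_{\conf q a\,\conf p b}$ is legal (equivariantly definable from $\G$ up to the parameters $a,b$), so no uniformity issue arises when these languages are later used inside orbit-finite unions.

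The main obstacle, and the only real content beyond bookkeeping, is verifying the claimed bijection between runs of $\A_\G$ and (path, direction-choice) pairs in derivation trees, in particular that \emph{every} sequence of side-effects realizable along a tree path is realizable as a run and conversely. The forward direction is routine: given a path in a derivation tree, at each internal node of the path exactly one child lies on the path and the other contributes its label as a side-effect, and the register-consistency of the derivation tree guarantees the atom constraints $\varphi$ hold, so reading off the matching transition rules yields a valid run. The converse requires arguing that any run of $\A_\G$ can be ``planted'' as a path inside \emph{some} complete derivation tree: at each step the transition rule came from a genuine production rule of $\G$, so we can attach the appropriate sibling subtree — here one uses that $\G$'s productions plus the nullary rules suffice to complete every dangling nonterminal into a finite derivation tree (if $\G$ has useless nonterminals with no complete derivation, they simply never occur as side-effects along any real path, and one must be mildly careful to restrict to the ``productive'' fragment, but this does not affect Parikh images). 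Once this correspondence is pinned down the lemma follows with no further computation.
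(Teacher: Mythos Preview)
Your proposal is correct and follows exactly the paper's approach: view each (left or right) production traversal as a \kNRA 1 transition over $\Gamma$, so that $S_{\conf q a\,\conf p b} = \langoffull{\A_\G}{q}{a}{p}{b}$, and then invoke Lemma~\ref{lem:1ARA}. Your worry about ``planting'' runs into \emph{complete} derivation trees is unnecessary, since the paper's derivation trees need not be complete (leaves may be labeled by nonterminal configurations), so every run of $\A_\G$ already \emph{is} a path in a derivation tree and no productivity argument is required.
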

\begin{proof}
Indeed, the claim follows immediately by Lemma~\ref{lem:1ARA}, if production traversals are considered
as transitions of a \kNRA 1 over the input alphabet $\Gamma$, and the side-effect of a traversal
is considered as input of a transition.
\end{proof}

\para{Height, width, and rank}

Recall the normal form of constraints \ee--\nnn as defined in Section~\ref{sec:pre}.
Similarly as in case of \kNRA 1, the right traversal of a production 
${\conf q {a}} \trans{} {\conf {p} {b}} \,\, {\conf {p'}{b'}} \in \Pi_2$
is called \emph{register-preserving} if $a = b'$, and \emph{register-updating} if
$a \neq b'$; likewise for the left traversal.
%Analogously, a production ${\conf q {a}} \trans{} {\pair h b} \in \Pi_0$ is called register-updating if $a\neq b$,
%and register-preserving otherwise.

We define the \emph{length} of a path in a derivation tree $\T$ as the number of register-updating production traversals along
the path,
and the \emph{height} of a node $x$ in $\T$ as the maximal length of a path from $x$ to a leaf.
%The height of a tree is the height of its root.
%
A \emph{cut} in $\T$ is a set of nodes which are pairwise incomparable with respect to the tree ordering.
A cut is called \emph{$n$-cut} if its size is at least $n$ and the height of every node in the cut is at least $n$.
The \emph{width} of a derivation tree $\T$ is the maximal $n$ for which  $\T$ contains some $n$-cut.

%\subsection{Transforming derivation trees to bounded width}
\label{sec:tobw}

The \emph{rank} of a derivation tree is defined as the multiset of lengths of all paths from the root to some leaf.
For a finite multiset $r : \N \to \N$ of natural numbers, let the \emph{diagram of} $r$ be the unique
non-increasing sequence $w\in \N^*$ such that $\Par w = r$.
We define the order on ranks as follows: $r \leq r'$ if the diagram of $r$ is lexicographically smaller 
than the diagram of $r'$.
For instance, $\set{7, 5, 2, 2} < \set{7, 7, 3}$.

We call two derivation trees $\T, \T'$ Parikh-equivalent if
$\Par {\yield \T} = \Par {\yield {\T'}}$.

\begin{lemma} \label{lem:tobw}
For a sufficiently large $n$, 
every derivation tree is Parikh-equivalent to a derivation tree of width at most $n$.
\end{lemma}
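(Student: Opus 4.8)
The plan is to prove Lemma~\ref{lem:tobw} by a well-founded induction on the rank of the derivation tree, using the order $\leq$ on ranks introduced above. Fix $k$ large enough so that Lemma~\ref{lem:Ramsey} applies to the relevant tuple width (the register values occurring along a path, suitably encoded), and set $n$ large enough (depending on $f(k)$ from Lemma~\ref{lem:Ramsey} and on the number of nonterminals $|Q|$) so that any derivation tree of width greater than $n$ is guaranteed to contain a long path carrying many productions of the same ``type''. Concretely, if $\T$ has width exceeding $n$, it contains an $n$-cut, hence a node $x$ of height at least $n$, hence a root-to-leaf path through $x$ with at least $n$ register-updating traversals. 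Along such a path, by the pigeonhole principle we find two nodes $y \preceq y'$ labeled by configurations $\conf q a$ and $\conf q {a'}$ with the \emph{same} nonterminal $q$, and — here is where Lemma~\ref{lem:Ramsey} enters — we can moreover arrange that the tuples recording the whole local data (register values at $y$, $y'$, and the finitely many side-effects/labels that will be affected) are \emph{compatible} in the sense defined before Lemma~\ref{lem:Ramsey}.

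The core step is then a cut-and-paste surgery. Having a compatible pair of nodes $y, y'$ on the path with the same nonterminal, I would excise the subtree rooted at $y'$ and graft it at $y$ (replacing the subtree rooted at $y$), which is legal precisely because the two nodes carry the same nonterminal and the configurations are compatible: compatibility guarantees that the data constraints $\varphi$ along the spliced-in piece remain satisfiable after the atom renaming forced by $a \mapsto a'$, since every coordinate either is preserved outright or is replaced by a genuinely fresh atom not interfering with the rest. The side-effects hanging off the removed path segment get reattached appropriately so that the \emph{multiset} of leaf labels — i.e.\ the Parikh image of the yield — is preserved; this is exactly the point of working with side-effects (Lemma~\ref{lem:p}) and with Parikh-equivalence rather than language equivalence, because we are free to permute and relocate the generated letters. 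The resulting tree $\T'$ is Parikh-equivalent to $\T$ by construction, and because we removed a nonempty initial segment of a longest (or near-longest) root-to-leaf path, the diagram of the rank strictly decreases: the new tree has rank $r' < r$ in the lexicographic order on diagrams. By the induction hypothesis $\T'$ is Parikh-equivalent to a tree of width at most $n$, and hence so is $\T$.

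The main obstacle, and the place demanding the most care, is the bookkeeping that makes the induction actually terminate: one must be sure that the surgery strictly decreases the rank diagram and does not create new long paths elsewhere that reintroduce large width. This is why the statement is phrased with ``sufficiently large $n$'' and why the rank order is lexicographic on the sorted path-length vector rather than, say, just the tree height — shortening one long path may lengthen others only \emph{below} the threshold, and the lexicographic comparison absorbs exactly this. A secondary subtlety is choosing the tuple width $k$ fed into Lemma~\ref{lem:Ramsey} so that ``compatible'' is strong enough to license the graft (it must cover all atoms whose identity the splice could disturb) yet the induced bound $n = n(f(k),|Q|)$ stays finite; this is a purely quantitative matter but must be set up correctly before the induction is launched. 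Everything else — the existence of the long path from a width-$>n$ witness, the pigeonhole on nonterminals, and the verification that the yield's Parikh image is untouched — is routine given the machinery already developed.
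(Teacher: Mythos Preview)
Your proposal has a genuine gap at the surgery step, and it stems from using only \emph{one} long path where the paper's argument essentially needs \emph{two}.

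You extract from the $n$-cut merely a single node of large height and work on the root-to-leaf path through it. Your surgery is the classical pump-down: replace the subtree at $y$ by the subtree at $y'$. This deletes the segment from $y$ to $y'$ together with all side-subtrees hanging off it, and therefore changes the Parikh image of the yield. You attempt to repair this by saying the side-effects ``get reattached appropriately'', but you give no place to reattach them: leaves carry terminals (or $\varepsilon$), and there is no reason the remaining tree contains a configuration matching the root label of any removed side-subtree. Without a concrete reattachment point that respects the production rules, the Parikh image is simply not preserved. The reference to Lemma~\ref{lem:p} does not help here; that lemma concerns rationality of side-effect languages and is used later (in Lemma~\ref{lem:w}), not to justify any tree surgery.

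The paper's proof exploits the full strength of the $n$-cut: it takes $n$ \emph{disjoint} long paths, applies Ramsey (Lemma~\ref{lem:Ramsey}) first within each path (with $k=3$) to find a compatible pair of productions, and then \emph{across} paths (with $k=6$) to find two paths whose compatible pairs are themselves compatible. The surgery then \emph{moves} a segment from one path to the other. Nothing is deleted, so size and Parikh image are preserved automatically; and the receiving path becomes strictly longer, so the rank \emph{increases} (not decreases, as you claim). Termination holds because the size is fixed and hence the rank is bounded above. The two-path trick is precisely what lets one preserve the Parikh image while making progress; a single-path argument of the kind you sketch cannot do both simultaneously.
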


\begin{proof}
Let $m = \size{\Delta_2}$.
Fix an arbitrary $n\geq f(6) \cdot 2m$, for $f$ given by Lemma~\ref{lem:Ramsey}.  % be fixed (as revealed below).
We show:
\begin{claim} \label{claim:transf}
Every derivation tree $\T$ of $\G$ of width $\geq n$ can be transformed,
by cutting and pasting of some parts, into a Parikh-equivalent derivation tree $\T'$ of 
rank strictly larger than $\T$, but of the same size (= the number of nodes) as $\T$.
\end{claim}
\noindent
The claim is sufficient for proving Lemma~\ref{lem:tobw}.
Indeed, as the transformation preserves the size, 
the rank can increase only finitely many times.
Therefore, by iterating the transformation 
we ultimately arrive at a derivation tree $\T'$ whose rank can not be further increased. 
By Claim~\ref{claim:transf}, the width of $\T'$ is forcedly at most $n-1$, as required. 

From now on we concentrate on proving Claim~\ref{claim:transf}.
Let $\T$ be a derivation tree of width $\geq n$.
Consider some fixed $n$-cut $\set{x_1, \ldots, x_n}$ and disjoint paths $\pi_1, \ldots, \pi_n$ in $\T$
of length $\geq n$, 
each path $\pi_i$ going from $x_i$ to some leaf. 

Consider a fixed path $\pi_i$.
It contains $\geq n$ register-updating production traversals, and therefore
by the pigeonhole principle
the same production rule $q \trans{\varphi} p\,\, p' \ \in \ \Delta_2$ and the same (say left) 
register-updating traversal repeats at least $n' = \frac n {2m}$ times along $\pi_i$.
We apply Lemma~\ref{lem:Ramsey} for $k=3$ to deduce that, as $n'\geq f(6) \geq f(3)$, 
some two of these traversals

\vspace{-2mm}
\begin{figure}[H] %[btp]
\begin{center}
\includegraphics[width=5.3cm]{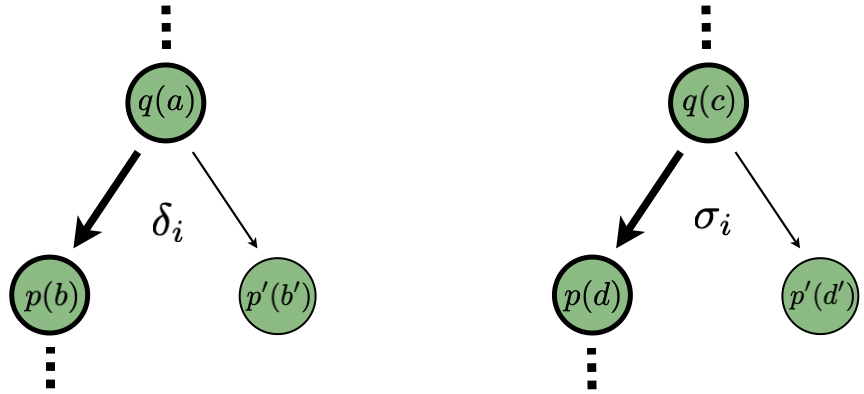}
\end{center}
\end{figure}
\vspace{-3mm}

\noindent
are compatible, by which we mean that 
their underlying $3$-tuples $\tuple{a,b,b'}$ and $\tuple{c,d,d'}$ are so.
Thus each path $\pi_i$ traverses a pair of compatible productions $\delta_i$, $\sigma_i$ 
which agree on the production rule and (left or right) traversal.

We now repeat a similar argument for paths.
As before, by the pigeonhole principle in at least $n'$ paths $\pi_i$, the same production rule and the same
traversal was used in productions $\delta_i$ and $\sigma_i$ derived in the above reasoning.
We now apply Lemma~\ref{lem:Ramsey} for $k=6$ to deal with pairs $\pair {\delta_i} {\sigma_i}$ of productions,
where a pair $\pair {\delta_i} {\sigma_i}$ induces a 6-tuple obtained 
by concatenating two underlying 3-tuples of $\delta_i$ and $\sigma_i$.
Since $n'\geq f(6)$, according to the lemma some two of these pairs,
say $\pair {\delta_i} {\sigma_i}$ and $\pair {\delta_j} {\sigma_j}$, are compatible
(by which we mean that the two induced $6$-tuples are so).

We have thus four productions $\delta, \sigma, \bar \delta, \bar \sigma$, 
traversed  by two disjoint paths in $\T$
(we do not depict nonterminals as all the four productions are induced by the same rule):

\vspace{-3mm}
\begin{figure}[H] %[btp]
\begin{center}
\includegraphics[width=6.5cm]{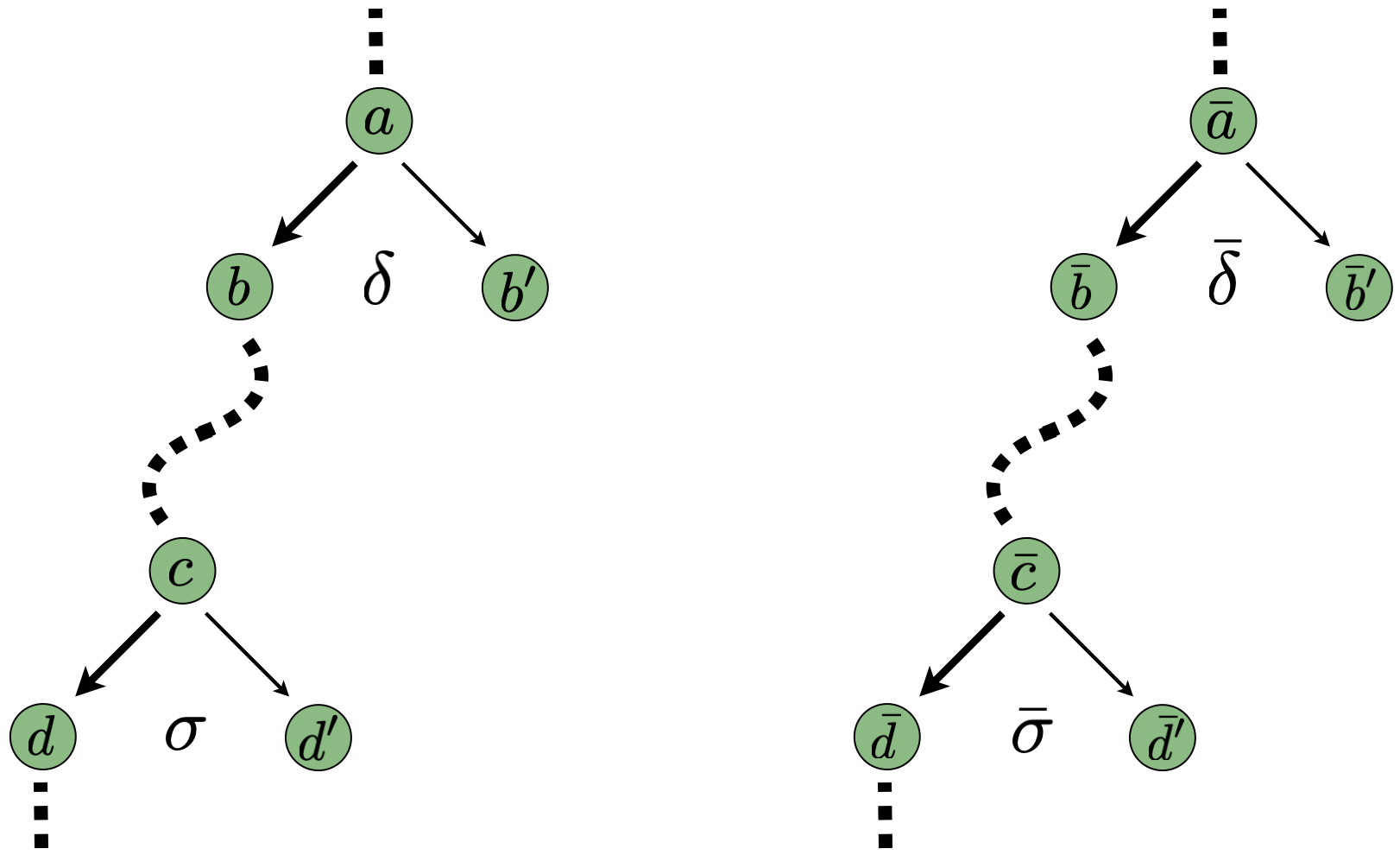}
\end{center}
\end{figure}
\vspace{-5mm}

%\noindent
%such that their underlying tuples 
%$\tuple{a,b,b'}$, $\tuple{c,d,d'}$,
%$\tuple{\bar a, \bar b, \bar b'}$ and $\tuple{\bar c, \bar d, \bar d'}$ 
%have the same equality type and satisfy:  % the following property:

\begin{claim} \label{claim:4}
The four underlying triples 
$\tuple{a,b,b'}$, $\tuple{c,d,d'}$,
$\tuple{\bar a, \bar b, \bar b'}$ and $\tuple{\bar c, \bar d, \bar d'}$ 
%have the same equality type and 
are pairwise compatible.
%
%For every $i\in\set{1, 2, 3}$, either all four 3-tuples agree on coordinate $i$, 
%or each of the four atoms appearing on coordinate $i$ is different from all other atoms appearing elsewhere in the four 
%3-tuples.
\end{claim}
\begin{proof}[Proof of Claim~\ref{claim:4}]
By the construction we have compatibility of triples $\tuple{a,b,b'}$ and $\tuple{c,d,d'}$,
and of triples $\tuple{\bar a, \bar b, \bar b'}$ and $\tuple{\bar c, \bar d, \bar d'}$.
Furthermore, we have also compatibility of
6-tuples $\tuple{a,b,b',c,d,d'}$ and $\tuple{\bar a, \bar b, \bar b', \bar c, \bar d, \bar d'}$, which implies
compatibility of triples $\tuple{a,b,b'}$ and $\tuple{\bar a, \bar b, \bar b'}$,
and of $\tuple{c,d,d'}$ and $\tuple{\bar c, \bar d, \bar d'}$.
Therefore, it only remains to prove compatibility of  $\tuple{a,b,b'}$ and $\tuple{\bar c, \bar d, \bar d'}$, and
of  $\tuple{c,d,d'}$ and $\tuple{\bar a, \bar b, \bar b'}$.
We concentrate of the former pair, as the other one is dealt with similarly.

The equality types of triples $\tuple{a,b,b'}$ and $\tuple{\bar c, \bar d, \bar d'}$ are the same, since so are
the equality types of $\tuple{a,b,b'}$ and $\tuple{c,d,d'}$, and of 
$\tuple{c,d,d'}$ and $\tuple{\bar c, \bar d, \bar d'}$.
We concern the first coordinate of the triples.
Supposing $a \neq \bar c$, we derive $a \notin \set{\bar c, \bar d, \bar d'}$:
if $a \neq \bar a$ then this follows due to compatibility of the two 6-tuples, and 
if $a = \bar a$ then this follows due to compatibility of $\tuple{\bar a, \bar b, \bar b'}$ and $\tuple{\bar c, \bar d, \bar d'}$;
symmetrically we derive $\bar c \notin \set{a, b, b'}$.
The two remaining coordinates are dealt with similarly.
\end{proof}

We are now prepared to cutting and pasting in $\T$.
For convenience we use below atoms $a, b$, etc. to identify respective nodes 
(keeping in mind potential equalities between these atoms).
Recall that all the four traversals are register-updating, and hence $a\neq b$, and likewise for other tuples.
We distinguish three cases, depending on the relation of $b'$ to $a$ and $b$:

\para{Case 1 $b' = a$}
Define the \emph{relevance} of a node $x$ in $\T$ as the maximal length of a path from the root of $\T$ 
to a leaf that traverses $x$.
By symmetry assume, w.l.o.g., that the relevance $\bar r$ of the node $\bar b$ is larger or equal to 
the relevance $r$ of the node $b$.
We cut the segment of $\T$ starting from the edge $a \trans{} b$ and ending with the edge $c\trans{} d$, 
and paste this segment between the nodes $\bar a$ and $\bar b$
as depicted in the figure:

\vspace{-3mm}
\begin{figure}[H] %[btp]
\begin{center}
\includegraphics[width=6.5cm]{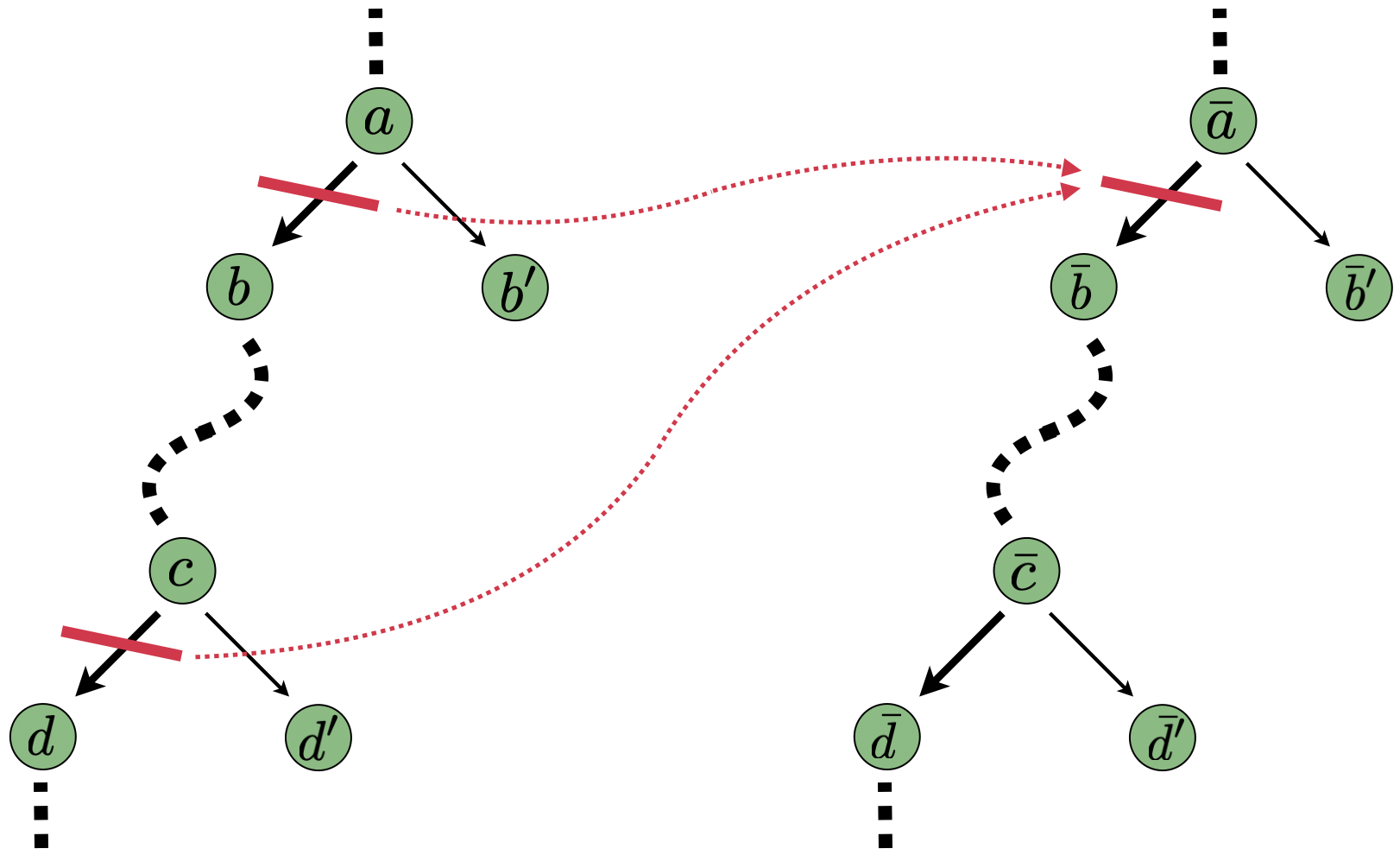}
\end{center}
\end{figure}
\vspace{-5mm}

\noindent
By Claim~\ref{claim:4} the tree $\T'$ so obtained is still a derivation tree:

\vspace{-3mm}
\begin{figure}[H] %[btp]
\begin{center}
\includegraphics[width=6.5cm]{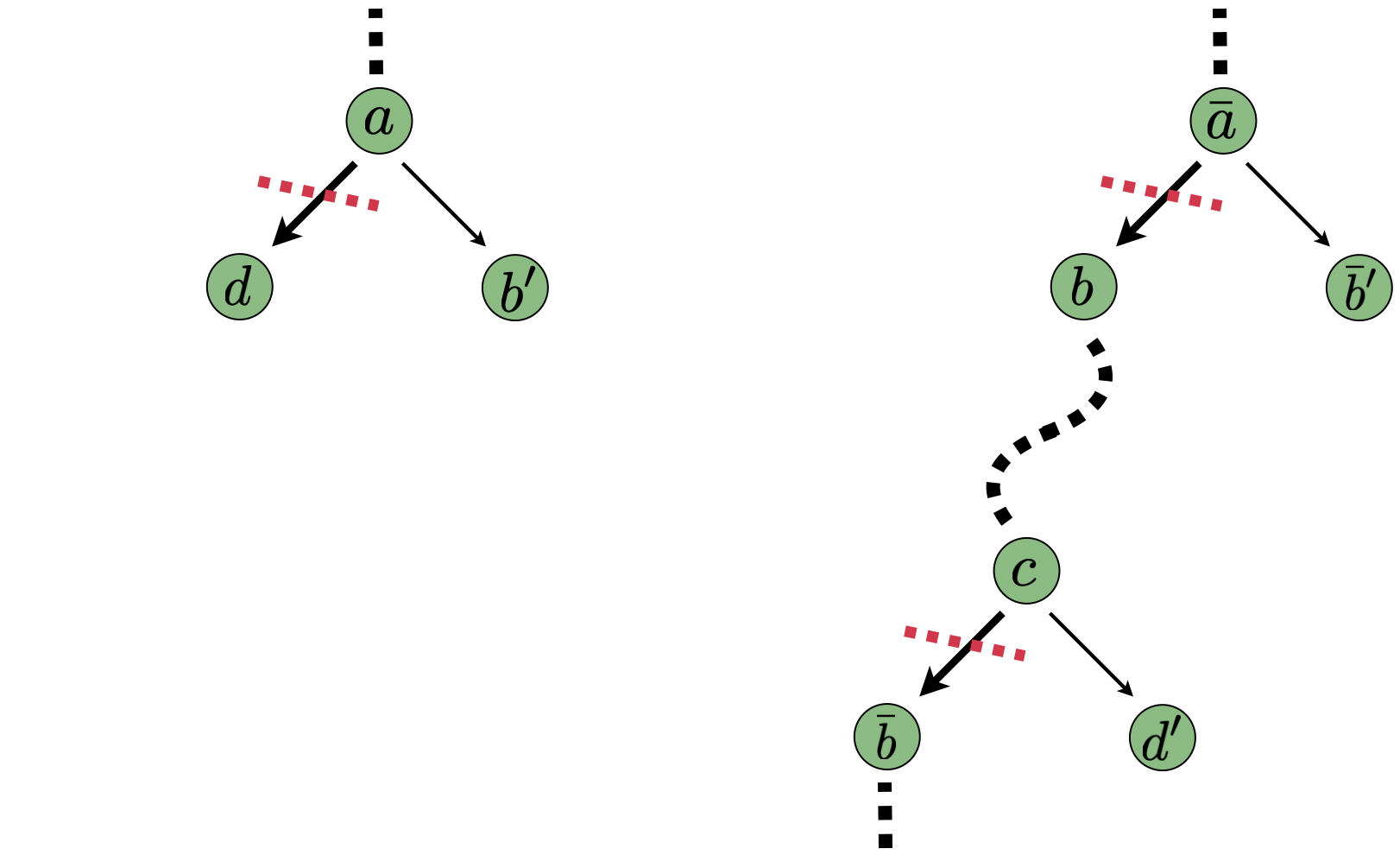}
\end{center}
\end{figure}
\vspace{-5mm}

\noindent
Indeed, $d\neq a$ (because either $d = b$ or $d$ does not appear elsewhere) and hence 
${\conf q {a}} \trans{} {\conf {p} {d}} \,\, {\conf {p'}{b'}} \in \Pi_2$ is a production;
likewise for the two remaining productions above.

Furthermore, we claim that rank of $\T'$ is strictly larger than rank of $\T$.
To this aim we analyse the effect of cut and paste on the lengths of the paths from the root to a leaf in $\T$.
First, all paths not traversing $b$ or $\bar b$ remain untouched.
%Concerning the cut, the lengths of all paths from the root to a leaf in $\T$ traversing
%the nodes $b$ and $d$ decrease, and the lengths of paths traversing $b$ but not $d$ change arbitrarily.
Furthermore, the lengths of all paths traversing $\bar b$ strictly increase.
Thus some path of length $\bar r$ in $\T$ gets strictly prolonged, and all other affected paths in $\T$
have lengths at most $r \leq \bar r$.
These two properties ensure that  the rank of $\T'$ is strictly larger than the rank of $\T$.

\para{Case 2 $b' = b$}

By symmetry assume, w.l.o.g., that the relevance $\bar r$ of the node $\bar a$ is larger or equal to 
the relevance $r$ of the node $a$.
We cut the segment of $\T$ starting from the edge $a \trans{} b$ and ending with edges $c\trans{} d$
and $c\trans{}d'$,
and paste this segment between the node $\bar a$ and the nodes $\bar b, \bar b'$,
and moreover cut the subtree rooted in $b'$ and paste it in place of the subtree rooted in $\bar b'$,
as depicted in the figure:

\vspace{-3mm}
\begin{figure}[H] %[btp]
\begin{center}
\includegraphics[width=6.5cm]{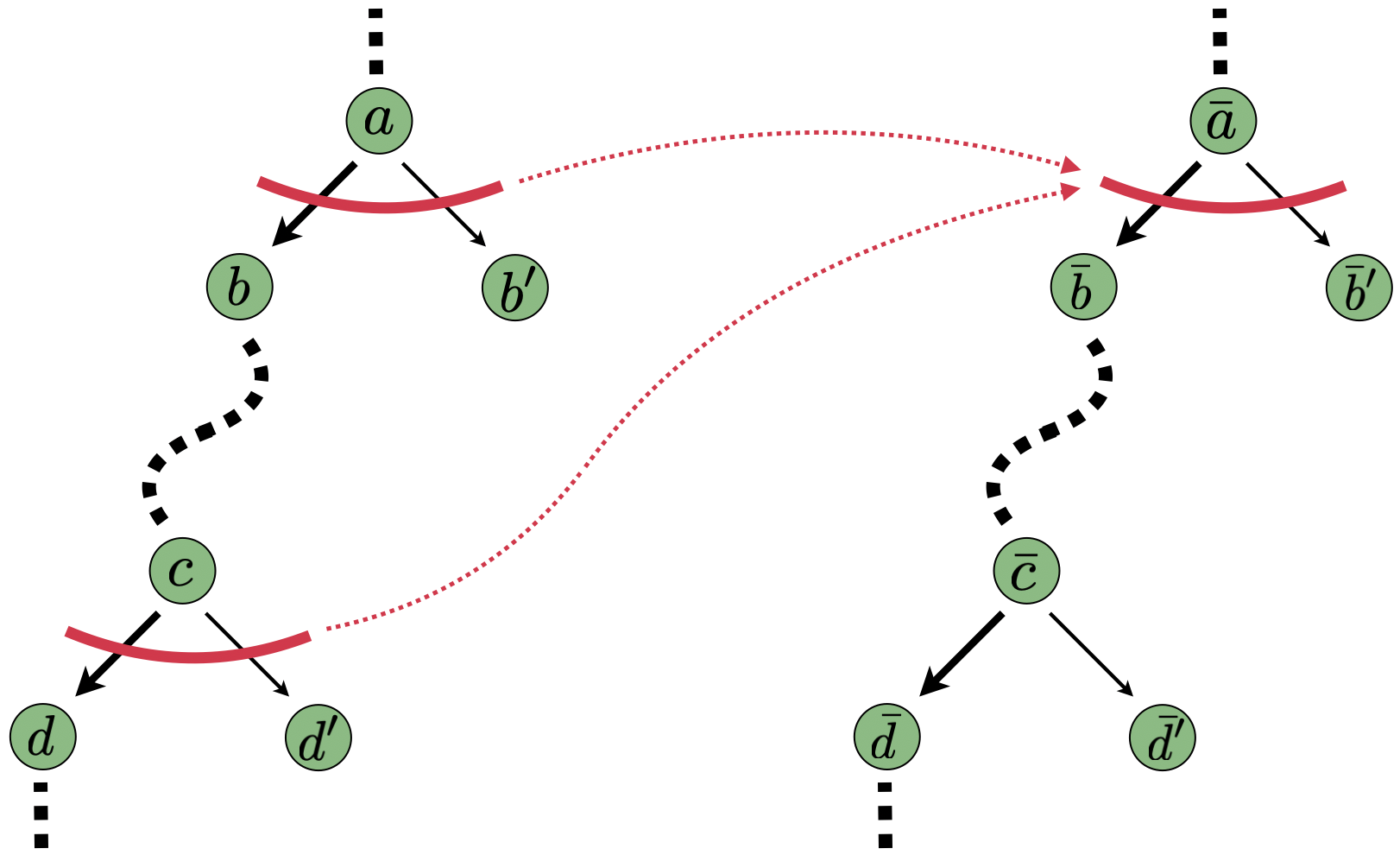}
\end{center}
\end{figure}
\vspace{-5mm}

\noindent
By Claim~\ref{claim:4} the tree $\T'$ obtained is a derivation tree, as before:

\vspace{-3mm}
\begin{figure}[H] %[btp]
\begin{center}
\includegraphics[width=6.5cm]{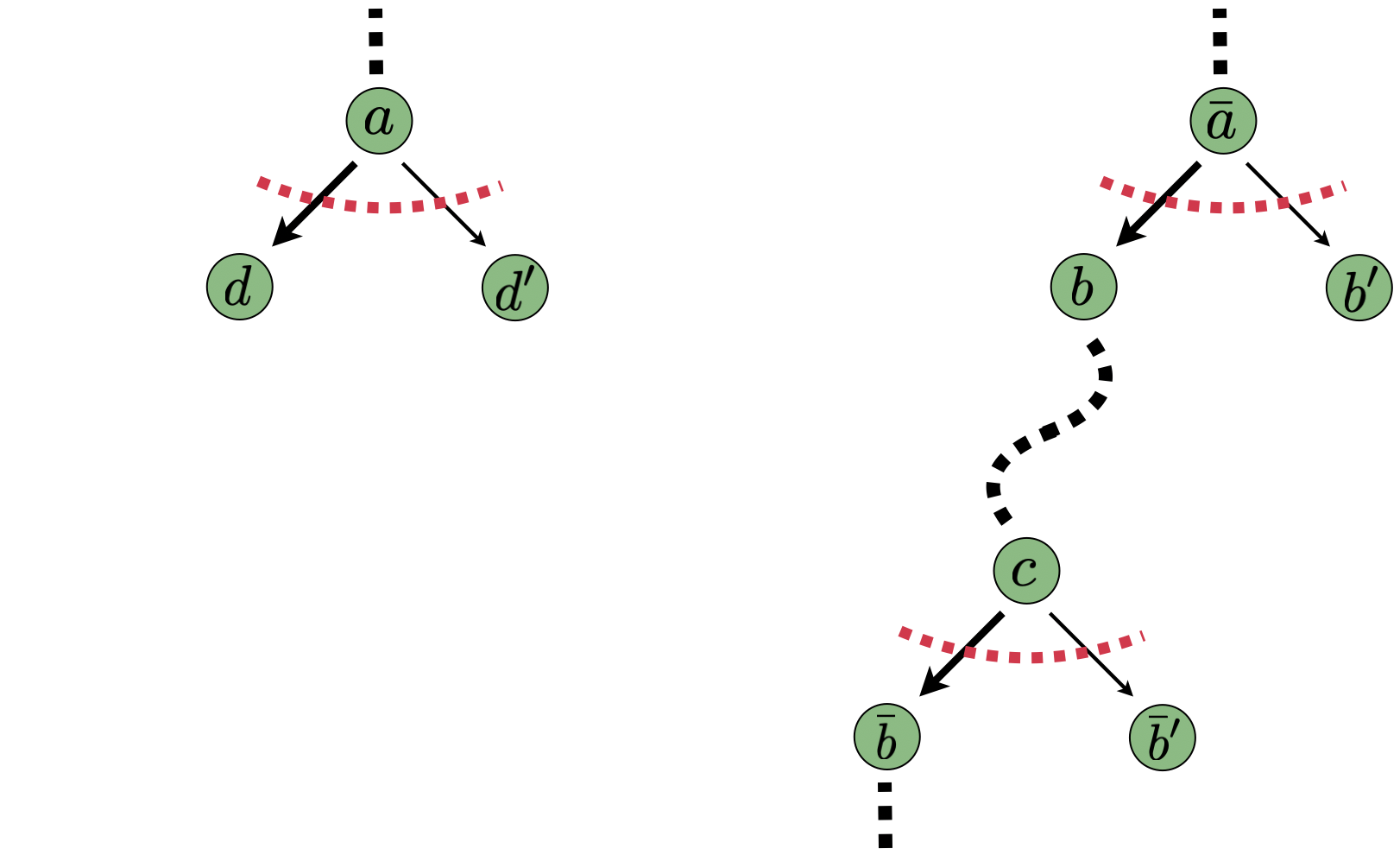}
\end{center}
\end{figure}
\vspace{-5mm}

\noindent
Similarly as before, we claim that the rank of $\T'$ is strictly larger than the rank of $\T$.
First, all paths from the root to a leaf in $\T$ not traversing $a$ or $\bar a$ remain untouched.
Furthermore,
the lengths of all paths from the root to a leaf that traverse $\bar a$ strictly increase.
%and the lengths of all paths traversing $a$ either decrease or change arbitrarily.
Thus some path of length $\bar r$ in $\T$ gets strictly prolonged, and all other affected paths in $\T$
have lengths at most $r \leq \bar r$.
As before, these two properties ensure that the rank of $\T'$ is strictly larger than the rank of $\T$.

\para{Case 3 $b' \notin \set{a,b}$}
In this case one can use any of the two cut-and-paste schemes described above.

\medskip

The proof of Claim~\ref{claim:transf} is thus completed, and hence so is the proof of Lemma~\ref{lem:tobw}.
\end{proof}

%\subsection{Rationality for bounded width}
%\label{sec:bw}

%
%We denote by $\langw n$ the subset of $\lang \G$ of words generated by a derivation tree of width at most $n$. 
%
We denote by $\langh {\conf q a,  n}\subseteq \langpar {\conf q a} \G$ the subset of words generated
by a derivation tree of height at most $n$, and by $\langw n$ the subset of $\lang \G$ of words
generated by a derivation tree of width at most $n$. 
We now prove, for every $n\in\N$, rationality of the languages $\langh {\conf q a,  n}$, 
and then use it to derive rationality of the Parikh image of the language $\langw n$.

\begin{lemma}%[Appendix~\ref{sec:1CFG-app}] 
\label{lem:h}
For every $n\in\N$, the languages $\langh {\conf q a, n}$ have rational Parikh images.
\end{lemma}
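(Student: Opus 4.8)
The plan is to proceed by induction on $n$, the height bound. The key observation is that a derivation tree of height at most $n$ decomposes naturally into an upper part --- the maximal "register-preserving region" around the root --- together with finitely many subtrees hanging off it, each rooted at a configuration reached by a register-updating production traversal and each of height at most $n-1$. More precisely, fix $\conf q a$ and consider a complete derivation tree $\T$ of height $\le n$ with root labeled $\conf q a$. Cut $\T$ along every edge corresponding to a register-updating production; what remains attached to the root is a tree $\T_0$ all of whose internal production traversals are register-preserving, hence all of whose nodes carry the same register value $a$. The leaves of $\T_0$ are either genuine leaves of $\T$ (labeled by $\Sigma\cup\{\varepsilon\}$ with register value $a$) or the roots of the removed subtrees, which have height $\le n-1$ and whose root configurations range over a fixed finite set $\conf{p}{b}$ with $b$ determined from $a$ via one of the constraint types \ne--\nnn.

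Next I would argue that the set of yields of such "register-preserving cores" $\T_0$ is rational for a simple reason: since the register value is frozen at $a$ throughout $\T_0$, the core is literally a derivation tree of an ordinary \emph{finite} context-free grammar $\G_a$ --- its nonterminals are the finitely many pairs $\conf{q'}{a}$ for $q'\in Q$, its terminals are the finitely many pairs $\conf h a$ together with fresh symbols $\langle \conf{p}{b}\rangle$ standing in for the roots of the suspended subtrees, and its productions are exactly the register-preserving rules of $\G$ instantiated at $a$ plus the rules that "exit" via a register-updating traversal, with the exit configuration recorded in the fresh terminal. A context-free language over a finite alphabet is in particular rational? No --- context-free languages need not be rational. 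So instead I would invoke Parikh's classical theorem: the Parikh image of $\langof{\G_a}$ is semilinear, hence (viewing the fixed finitely many symbols $\conf h a$, $\langle\conf p b\rangle$ as an orbit-finite alphabet, with $a$ ranging over $\atoms$ equivariantly) rational as a set of data vectors. The whole family is legal and equivariant, so the orbit-finite union over $a\in\atoms$ stays rational.

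Finally I would assemble the pieces by substitution. Using the notation of Lemma~\ref{lem:subst}, the language $\langh{\conf q a, n}$ is obtained from the core language just described by substituting, for each fresh symbol $\langle \conf p b\rangle$, the language $\langh{\conf p b, n-1}$, which has rational Parikh image by the induction hypothesis, and leaving the genuine terminals $\conf h a$ fixed. Since the core has rational Parikh image and all substituted languages have rational Parikh images, Lemma~\ref{lem:subst} gives that $\langh{\conf q a, n}$ has rational Parikh image. The base case $n=0$ is a derivation tree with no register-updating traversal at all, i.e. exactly the core language above (no suspended subtrees), handled directly by Parikh's theorem. The main obstacle I anticipate is bookkeeping: making the decomposition into core plus suspended subtrees genuinely canonical (so that yields are counted exactly once), and verifying that the "exit" terminals correctly encode the at most finitely many register values $b$ that can arise from $a$ under the five constraint types --- in particular that the indexing function $a\mapsto(\G_a,\text{substitution})$ is equivariant, so that the orbit-finite unions in Lemma~\ref{lem:subst} apply. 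Everything else is routine once this is set up.
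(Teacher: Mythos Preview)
Your overall architecture is exactly the paper's: induction on $n$, isolate the maximal register-preserving region around the root as a derivation in a classical (finitely-presented) grammar, appeal to Parikh's theorem for that region, then recover $\langh{\conf q a,n}$ by substituting the height-$(n{-}1)$ languages back in via Lemma~\ref{lem:subst}. So the route is right.

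There is one genuine slip, though, and you already flag it as the ``main obstacle'' without resolving it. You write that the terminals of your auxiliary grammar $\G_a$ are ``the finitely many pairs $\conf h a$ together with fresh symbols $\langle\conf p b\rangle$'' and later that the exit terminals encode ``the at most finitely many register values $b$ that can arise from $a$''. That is false: under constraint types \ne, \en, \nnn\ the new value $b$ can be \emph{any} atom different from $a$, so there are infinitely many exit configurations $\conf p b$ (and likewise terminal leaves $\pair h b$ with $b\neq a$ can appear even inside the ``core'', since an edge to a terminal with a fresh atom is register-updating and you cut it). Your $\G_a$ as described therefore has an infinite terminal alphabet, and Parikh's classical theorem does not apply directly.

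The fix is the one the paper uses: let the classical grammar range over a genuinely finite abstract alphabet --- one symbol per pair (production rule, child side), recording only the \emph{type} of the child (same-atom terminal, different-atom terminal, or nonterminal with fresh register) --- get a semilinear Parikh image for that, and then apply a further substitution replacing each abstract symbol by the corresponding orbit-finite set of actual letters (e.g.\ $\bigcup_{b\neq a}\pair h b$, or $\bigcup_{b\neq a}\langle\conf p b\rangle$). This two-layer substitution is what the paper means by ``obtained by applying a substitution to a classical context-free language''. Once you make that explicit, your argument goes through and coincides with the paper's.
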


\begin{lemma} \label{lem:w}
For every $n\in\N$, the language $\langw n$ has rational Parikh image.
\end{lemma}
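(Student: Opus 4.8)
The plan is to reduce Lemma~\ref{lem:w} to Lemma~\ref{lem:h} via the substitution lemma (Lemma~\ref{lem:subst}) and Lemma~\ref{lem:p}, using the notion of a \emph{spine decomposition} of a width-bounded derivation tree. Fix $n\in\N$. In a derivation tree $\T$ of width at most $n$, every cut whose nodes all have height $\geq n$ has size $<n$. The idea is that such a tree looks like a ``thin'' skeleton: along any root-to-leaf path the tree can branch into genuinely tall subtrees only a bounded number of times before the accumulated set of tall branches would form an $n$-cut. More precisely, I would first argue that a width-$\leq n$ tree can be partitioned into: (i) a bounded-height ``top part'' consisting of all nodes of height $<n$, together with (ii) the subtrees hanging off that top part. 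Since each node of height $<n$ lies on no path of length $\geq n$ from it downwards, and paths only count register-updating traversals, the top part is a tree all of whose root-to-leaf paths have length $<n$; however the top part can still be arbitrarily wide, so this alone does not suffice.

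The cleaner route is an induction on $n$. For the base case $n$ small (say a tree of width $0$ or $1$), every root-to-leaf path has length $<n$ except along at most one ``main spine'', so $\langw n$ is essentially captured by $\langh{\conf{q_0}a,n}$ for the bounded pieces glued along a spine of side-effects, and Lemma~\ref{lem:p} together with Lemma~\ref{lem:h} and Lemma~\ref{lem:subst} gives rationality. For the inductive step, in a width-$\leq n$ tree consider the maximal antichain $C$ of topmost nodes of height $\geq n$; by the width bound $|C|<n$, wait --- more carefully, consider the set of nodes $x$ such that $x$ has height $\geq n$ but the parent of $x$ has height $\geq n$ as well only boundedly often along any branch. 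I would formalize this as: color a register-updating traversal on a root-to-leaf path ``branching'' if the sibling subtree it spawns has height $\geq n$; then along each path the number of branching traversals is $<n$ (else those sibling subtrees, cut at depth $n$, form an $n$-cut). Hence the ``macro-structure'' obtained by contracting each maximal subtree of width $<n$ into a single node is a tree in which every root-to-leaf path has at most $n$ branching points and at most $n-1$ non-branching register-updating traversals in between, so it is a tree drawn from a language generated by a fixed \emph{finite} grammar over the orbit-finite alphabet of ``summary'' letters. Each summary letter stands either for a subtree of strictly smaller width (inductive hypothesis: rational Parikh image) or for a bounded-height connector (Lemma~\ref{lem:h}) glued along a side-effect path (Lemma~\ref{lem:p}, rational Parikh image). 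Rationality of the whole is then obtained by the substitution lemma.

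Concretely, the steps in order would be: (1) define the macro-decomposition of a width-$\leq n$ tree and prove the key combinatorial claim that along every root-to-leaf path of the macro-tree the number of branching register-updating traversals is $<n$ (this uses exactly the definition of $n$-cut and width); (2) observe that the macro-trees themselves, viewed as derivation trees of a suitable finite auxiliary grammar over an orbit-finite alphabet whose letters index the possible ``components'', form a language whose Parikh image is rational --- in fact the macro-tree has bounded height in the macro-sense, so this is just Lemma~\ref{lem:h} applied to the auxiliary grammar, or directly a bounded-depth computation; (3) show each component language has rational Parikh image: a ``height-bounded connector'' does by Lemma~\ref{lem:h}, a side-effect path does by Lemma~\ref{lem:p}, and a strictly-smaller-width subtree does by the inductive hypothesis; (4) assemble via Lemma~\ref{lem:subst}, noting all indexing functions involved are legal (supported by the supports of the finitely many production rules), and conclude $\langw n$ has rational Parikh image; together with Lemma~\ref{lem:tobw} and $\lang\G=\bigcup_n\langw n$ restricted to the single sufficiently large $n$ of that lemma, this also yields Theorem~\ref{thm:1G}.

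The main obstacle I expect is step~(1): making the macro-decomposition precise and \emph{canonical} enough that the resulting macro-trees form an orbit-finite-alphabet language that is itself ``bounded height'' in the right sense, while ensuring the decomposition interacts correctly with the width bound (the subtlety is that a subtree of width $<n$ can still be very tall and very wide in the ordinary sense, so one must be careful that ``width'' --- which only counts register-updating traversals along disjoint paths --- genuinely drops, and that the number of sibling subtrees of height $\geq n$ hanging off a single path is bounded, not merely the number of branch points where some descendant has height $\geq n$). A secondary bookkeeping obstacle is verifying legality (finite support) of all the substitution families, but this is routine once the decomposition is pinned down, since everything is ultimately indexed by finitely many production rules of $\G$ plus finitely many atoms.
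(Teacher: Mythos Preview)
Your very first attempt was closer to the paper's argument than your induction, but you had the roles reversed. Instead of taking the ``top part'' to be the \emph{low}-height nodes (height $<n$), the paper takes $\H\subseteq\T$ to be the set of nodes of height $\geq n{+}1$; this $\H$ is ancestor-closed and hence a (possibly unary-branching) subtree. The width bound on $\T$ now bites directly: any cut in $\H$ is a cut in $\T$ consisting entirely of nodes of height $\geq n{+}1$, so its size is at most $n$ (else it would be an $(n{+}1)$-cut). Hence $\H$ has at most $n$ leaves, at most $n{-}1$ binary nodes, and therefore at most $2n{-}1$ maximal unary segments. This is the bounded skeleton you were looking for, obtained without any induction on width.

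The assembly is then exactly what you sketched in steps~(3)--(4), applied to this skeleton: each leaf of $\H$ labelled $\conf q a$ is the root of a subtree of $\T$ of height $\leq n$, contributing a word from $\langh{\conf q a,n}$ (Lemma~\ref{lem:h}); each segment of $\H$ is a path of production traversals whose side-effects are configurations rooting subtrees of height $\leq n{-}1$, so its contribution is the side-effect language $S_{\conf q a\,\conf p b}$ of Lemma~\ref{lem:p} under the substitution $\conf{q'}c\mapsto\langh{\conf{q'}c,\,n{-}1}$. One concatenates these pieces and takes the orbit-finite union over all possible multisets $(\mathcal L,\mathcal S)$ of leaf labels and segment-endpoint labels of $\H$; orbit-finiteness of the index set follows from the size bounds $n$ and $2n{-}1$ via Lemma~\ref{lem:b-o-f}. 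Parikh-equivalence with $\langw n$ is then checked directly.

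Your induction-on-width route might be made to work, but the obstacle you flag in step~(1) is real: ``maximal subtree of width $<n$'' is not a local notion, and bounding the number of tall siblings along a single path does not by itself yield a canonical bounded macro-tree. The paper's choice of $\H$ avoids this entirely by making the skeleton the tall part rather than the short part; once you do that, the combinatorics collapses to the simple leaf-count argument above.
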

\begin{proof}%[Proof of Lemma~\ref{lem:w}]
For a fixed $n\in\N$,
consider an arbitrary derivation tree $\T$ of width at most $n$, and the subset $\H\subseteq \T$ of those nodes 
which have height at least $n+1$.
The set $\H$ is closed under ancestors and is thus itself a tree; contrarily to $\T$ whose non-leaf nodes
have arity 2, the tree $\H$ may contain nodes of arity 1.
Notably, as a special case $\H$ may be empty.

By assumption, width of $\T$ is at most $n$, and hence it may contain 
$n$-cuts but no $(n+1)$-cuts. This implies that the largest cut in $\H$ has size $n$.
In consequence:

\begin{claim} \label{claim:H}
$\H$ has at most $n$ leaves, and hence at most $n-1$ nodes of arity 2.
\end{claim}
\noindent
Let $\cal L$ denote the finite multiset (of size at most $n$) of configurations $\conf q a$ labelling leaves of $\H$.

\medskip
\noindent
\begin{minipage}{0.59\linewidth}
\indent
%All leaves of $\H$, the root of $\T$, all nodes of $\H$ of arity 2  and their children
%we call \emph{special} nodes of $\H$. 
%% There are thus at most $4n-2$ special nodes.
%Any path in $\H$ from a special node to another special node containing no other special nodes
%except the ends we call a \emph{segment}.
%
Any maximal path consisting of nodes of arity 1 we call a \emph{segment}.
Thus $\H$ decomposes uniquely into leaves, nodes of arity 2, and segments.
An example tree on the right has $n=4$ leaves, 3 nodes of arity 2 and 4 segments (depicted by blue areas)
of size 3, 2, 2 and 1, respectively. 
Using Claim~\ref{claim:H} we deduce:
\end{minipage}
\begin{minipage}{0.4\linewidth}
\vspace{-3mm}
\begin{figure}[H] %[btp]
\begin{center}
\includegraphics[width=3.0cm]{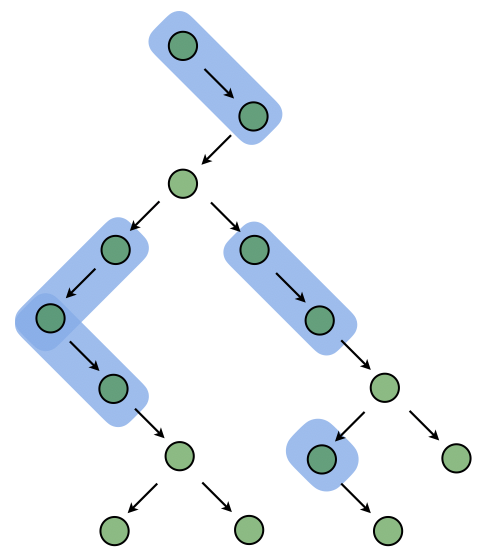}
\end{center}
\end{figure}
%\vspace{-5mm}
\end{minipage}

%\medskip
%\noindent
%
\begin{claim} \label{claim:S}
$\H$ contains at most $2n-1$ segments.
\end{claim}

Let $\cal S$ denote the finite multiset (of size at most $2n-1$) of pairs of configurations 
$\pair {\conf q a} {\conf p b}$ labelling ends of segments.
Let $\widetilde S_{\conf q a\, \conf p b}$ be obtained from the side-effect language 
$S_{\conf q a\, \conf p b}$ by the equivariant substitution (for $q' \in Q$):
\[
\conf {q'} c \quad \mapsto \quad \langh {\conf {q'} c, n-1};
\]
by Lemmas~\ref{lem:h} and~\ref{lem:p}  languages $\widetilde S_{\conf q a\, \conf p b}$ have thus rational Parikh images.
Let's define ($\prod$ denotes concatenation)
\[
L_{\cal L, \cal S} \quad = \quad 
\Big(\prod_{\conf q a \in \cal L} \langh {\conf q a, n}\Big)
%\quad + 
\quad 
\Big(\prod_{\pair {\conf q a} {\conf p b} \in \cal S} \widetilde S_{\conf q a\, \conf p b}\Big)
\]
as the concatenation of two concatenations, one of them ranging over $\cal L$ and the other one over $\cal S$.
By the very definition of the language 
$L_{\cal L, \cal S}$ we have
\begin{claim} \label{claim:incl}
$\Par {\yield \T} \in \Par {L_{\cal L, \cal S}}$.
\end{claim}
%
%Lemma~\ref{lem:w} is proved, once we show:
%
\begin{claim} \label{claim:Peq}
The languages $\langw n$ and
$K \ = \ \bigcup_{\cal L, \cal S} L_{\cal L, \cal S}$
are Parikh-equivalent,
where $\cal L, \cal S$ range over all possible sets arising from all derivation trees $\T$ of $\G$ of width at most $n$.
\end{claim}
\begin{proof} % [Proof of Claim~\ref{claim:Peq}]
The inclusion $\Par {\langw n} \subseteq \Par K$
we deduce by Claim~\ref{claim:incl}.
For the converse inclusion $\Par K \subseteq \Par {\langw n}$ we should prove: for every $\cal L, \cal S$ arising from some
derivation tree $\T$ of width at most $n$, the language $L_{\cal L, \cal S}$ is included in $\langw n$.
Indeed, given $\T$ and $\H$ used to derive sets $\cal L, \cal S$, we observe that every word $w \in L_{\cal L, \cal S}$
is Parikh-equivalent to 
the yield of a derivation tree $\T'$ of width at most $n$, 
obtained from $\H$ by replacing each leaf labelled by $\conf q a$ with a tree of height
$\leq n$ with root labeled by $\conf q a$, 
and replacing each segment with a sequence of productions, where every side-effect $\conf q a$ is replaced
by a tree of height at most $n-1$ with root labeled by $\conf q a$. 
Thus $\Par w\in \Par{\langw n}$.
\end{proof}

Finally, we derive rationality of $\Par K$.
By Lemmas~\ref{lem:h} and~\ref{lem:p} the languages $L_{\cal L, \cal S}$ have rational Parikh images.
Due to the bounds % of $n$ and $2n-1$, respectively, 
on
the size of $\cal L$ and $\cal S$ (cf.~Claims~\ref{claim:H} and~\ref{claim:S}), 
by Lemma~\ref{lem:b-o-f}
the set of all possible pairs $\cal L, \cal S$ is orbit-finite. 
Therefore $K$, as an orbit-finite union of languages with rational Parikh images, has a rational Parikh image too.
\end{proof}

% !TEX root = main.tex

\section{Proof of Lemma~\ref{lem:1ARA}}  \label{sec:1NRA}

%In this and in the next section we  prove the first main result:
%% a generalization of Parikh theorem for register automata with one register:
%
%%\begin{theorem} \label{thm:1ARA}
%%Parikh images of \kNRA 1 languages are rational.
%%\end{theorem}
%
%We actually prove a slightly refined result (needed later for grammars).
%Denote by $\langoffull \A q a {q'} {a'}$ the set of data words input by some run starting
%in $\conf q a$ and ending in $\conf {q'} {a'}$.
%%
%\begin{lemma} \label{lem:1ARA}
%The languages $\langoffull \A q a {q'} {a'}$ have rational Parikh images.
%\end{lemma}
%%
%\noindent
%Lemma~\ref{lem:1ARA} implies Theorem~\ref{thm:1ARA}, as $\langof \A$ is an orbit-finite union:
%\[
%\langof \A \ = \ \bigcup_{q\in I, q'\in F, a,a'\in \atoms} \langoffull \A q a {q'} {a'}.
%\]

%Furthermore, w.l.o.g.~we assume that each location admits at most one type of outgoing transition rules.
%This can be easily achieved by appropriate duplication of locations.
%Therefore we can speak about register-preserving and register-updating location as well.

Consider a fixed \kNRA 1 $\A = \tuple{H, Q, I, F, \Delta}$.

\para{Proof strategy}
The proof proceeds by a sequence of simplifying steps, as stated in consecutive 
Lemmas~\ref{lem:LF}--\ref{lem:LF2} in this section and
in Lemmas~\ref{lem:anti0}--\ref{lem:D} in the next one.
Instead of only considering Parikh images of input words, in the proof we
investigate Parikh images of \emph{runs}, mostly concentrating 
on \emph{alterings}  of register value along a run.  
This leads us to consider, besides languages over the alphabet $H\times\atoms$ of a \kNRA 1, 
also languages over richer alphabets:
\begin{itemize}
\item languages of \emph{altering paths} over the alphabet $(Q\times\atoms\times Q) \ \cup \ (H\times \atoms)$ in Lemma~\ref{lem:LF};
\item languages of \emph{altering loops} over the alphabet $\atoms^2 \times \atoms$ in Lemma~\ref{lem:LF2};
\item languages of \emph{anti-paths} and \emph{anti-cycles} over $\atoms \times \pow 2 \atoms$ 
in Lemmas~\ref{lem:anti0}--\ref{lem:D}.
\end{itemize}
The intuitive idea underlying the final, most technical steps (Lemmas~\ref{lem:anti0}--\ref{lem:D})
is, roughly speaking,
that the set of words 
\[
\pair {a_1} {b_1}\,
\pair {a_2} {b_2}\,
\ldots\,
\pair {a_n} {b_n}
\]
over $\atoms^2$,
satisfying $b_i \neq a_{i+1}$ for all $i = 1, \ldots, n-1$, 
%(taking $n+1$ as $1$), 
has rational Parikh image.
Notably, this is not true for \emph{paths}, where one requires $b_i = a_{i+1}$ instead.

\subsection{Proof of Lemma~\ref{lem:1ARA}}

For locations $q, p\in Q$ of $\A$ and  $a\in\atoms$, let $L_{q a p}$ be the language of all data words
read by a run from configuration $\conf q a$ to $\conf p a$ that use register-preserving transitions only
(thus the register stores $a$ along the whole run).
\begin{lemma} \label{lem:Lqap}
The languages $L_{q a p}$ are rational.
\end{lemma}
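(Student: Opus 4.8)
The key observation is that along a run using only register-preserving transitions, the register value stays equal to a fixed atom $a$, so every transition is of type \ee or \nne. Thus the input atom $b$ at each step is either equal to $a$ (type \ee) or different from $a$ (type \nne), and the control-location sequence is constrained exactly as in a classical finite automaton over a two-letter abstraction of the alphabet. I would make this precise by building an ordinary NFA $\A_{a}$ over the finite alphabet $\Sigma_a = H \times \set{=, \neq}$ whose states are $Q$, whose initial/accepting states are $\set{q}$ resp.~$\set{p}$, and which has a transition $q_1 \goesto{\langle h, {=}\rangle} q_2$ whenever $\A$ has a rule $(\conf {q_1} x, \pair h y, \fee, \conf {q_2}{x'})$, and $q_1 \goesto{\langle h, {\neq}\rangle} q_2$ whenever $\A$ has a rule with constraint $\fnne$.

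\textbf{From the abstraction back to data words.} Let $R_{qap} \subseteq \Sigma_a^*$ be the regular language recognised by $\A_a$; it is rational (classically), hence also rational as a data language. Now I recover $L_{qap}$ as a substitution: define the equivariant family $K$ over $\Sigma_a$ by $\langle h, {=}\rangle \mapsto \set{\conf h a}$ and $\langle h, {\neq}\rangle \mapsto \bigcup_{b \in \atoms - \set a} \conf h b$. Each $K_\sigma$ is orbit-finite, hence rational. Then $L_{qap} = R_{qap}(K)$: a word $w$ is read by a register-preserving run from $\conf q a$ to $\conf p a$ iff its abstraction (replace $\conf h b$ by $\langle h, {=}\rangle$ if $b = a$ and by $\langle h, {\neq}\rangle$ otherwise) is accepted by $\A_a$, which is immediate from the normal-form classification of constraints \ee--\nnn and the fact that register-preserving transitions are exactly types \ee, \nne. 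By Lemma~\ref{lem:subst}, $L_{qap}$ is rational.

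\textbf{Main obstacle.} The content is not combinatorially deep here; the one point requiring care is that the family $(K_\sigma)_\sigma$, and hence the whole construction, must be legal and supported by $\set a$ — the substitution Lemma~\ref{lem:subst} is stated for a fixed alphabet, and here the target alphabet $\Sigma$ of the substitution is $H \times \atoms$, which is fine, but I must check the indexing function $\sigma \mapsto K_\sigma$ is supported by $\set a$ (it is, since permutations fixing $a$ permute $\atoms - \set a$ and fix $\conf h a$). The other routine verification is the bijective correspondence between register-preserving runs of $\A$ from $\conf q a$ to $\conf p a$ and accepting runs of $\A_a$, which follows directly from unwinding the definition of the induced transition relation together with the normal form. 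No deeper argument (Ramsey, Hamiltonicity, tree surgery) is needed for this lemma; it is the base case on which the subsequent lemmas of this section build.
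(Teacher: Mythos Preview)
Your proof is correct and takes essentially the same approach as the paper: abstract the register-preserving transitions to a classical finite-state language over the finite alphabet $H\times\set{\fee,\fnne}$ (your $H\times\set{=,\neq}$), then recover $L_{qap}$ by substituting $\pair h a$ and $\bigcup_{b\neq a}\pair h b$ for the two letter classes. The only cosmetic differences are that the paper works directly with a regular expression rather than an NFA and does not explicitly cite Lemma~\ref{lem:subst}; also, a minor slip---you call the substitution family ``equivariant'' but, as you then note, it is only supported by $\set a$.
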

\begin{proof}
We only need to consider register-preserving transitions.
Define the finite alphabet $\Delta = H\times\set{\fee,\fnne}$ and consider every transition rule 
$(\conf q x, \pair h y, \varphi, \conf {q'}{x'})$
%$(p', h, \varphi, q')$ 
to be \emph{labeled} by $\pair h \varphi \in \Delta$. 

Fix $q, a$ and $p$ and 
let $E_{qp}$ be the classical regular expression over $\Delta$ defining the labels of
all those runs from $q$ to $p$ that only use transitions of types \ee and \nne.
Then the language $L_{q a p}$ is defined by the expression $E_{q a p}$ obtained from $E_{q p}$ by
replacing $(h, \fee)$ with $\pair h a$ and replacing $\pair h \fnne$ with
\[
L_{e, \neq a} \quad = \quad \bigcup_{b\in\atoms - \set{a}} \pair h b.
\]
Thus $L_{q a p}$ is a rational data language.
\end{proof}
%
%Register automata \emph{alter} its register value, which is the main source of hardness of analysis thereof.
%
We now state the central lemma that generalises Example~\ref{ex:L1L2}.
Define the language $P$ over the alphabet $(Q\times\atoms\times Q) \ \cup \ (H\times \atoms)$  
%(in case of singleton $A$ it is essentially $L_1$ from Example~\ref{ex:L1L2}):
containing words of the form ($n \geq 1$):
\begin{align} 
\begin{aligned} \label{eq:ap}
\tuple{q_1, a_1, p_1} \tuple{h_1, b_1} \tuple{q_2, a_2, p_2} \tuple{h_2, b_2} \ldots  \tuple{q_n, a_n, p_{n}} 
\end{aligned}
\end{align}
such that 
%$n \geq 1, a_1 \neq a_2 \neq \ldots \neq a_n$. $(a_i, b_i, a_{i+1}) \models \varphi(q_i)$
$\conf {p_i} {a_i} \trans {\pair {h_i} {b_i}} \conf {q_{i+1}}{a_{i+1}}$ is a register-updating transition for $i = 1, \ldots, n-1$
(in particular $a_i \neq a_{i+1}$ for $i = 1, \ldots, n-1$).
Words in $P$ we call \emph{altering paths}.
Furthermore, define the subsets $P_{\conf q a \, \conf {q'} {a'}} \subseteq P$ of those altering paths
as in~\eqref{eq:ap} where $\conf q a = \conf {q_1} {a_1}$ and
$\conf {q'}{a'} = \conf {p_n}{a_n}$.
%that start in location $q_1 = q$, end in location $p_{n} = q'$, and where the first atom
%is $a_1 = a$ and the last one is $a_n = a'$.

\begin{lemma} \label{lem:LF}
Altering path languages $P_{\conf q a \, \conf {q'} {a'}}$ have rational Parikh images.
%, for every $a, a'\in\atoms$ and $q,q'\in Q$.
\end{lemma}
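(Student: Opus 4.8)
\textbf{Proof plan for Lemma~\ref{lem:LF}.}
The plan is to reduce altering paths to a combination of register-preserving runs (handled by Lemma~\ref{lem:Lqap}) and a purely ``altering'' skeleton over the register values, and then to prove rationality of the Parikh image of that skeleton. Concretely, an altering path of the form~\eqref{eq:ap} factors as an alternation: between consecutive register-updating transitions $\conf{p_i}{a_i}\trans{\pair{h_i}{b_i}}\conf{q_{i+1}}{a_{i+1}}$ the automaton does whatever it likes while keeping the register equal to $a_i$, which is exactly a word from some $L_{q\,a_i\,p}$ as in Lemma~\ref{lem:Lqap} (here I would need a variant of $L_{qap}$ recording the endpoint control locations, but this is immediate since $Q$ is finite). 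So $P_{\conf q a\,\conf{q'}{a'}}$ is, up to Parikh-equivalence and a substitution in the sense of Lemma~\ref{lem:subst}, obtained from a language of ``abstract'' altering paths over the alphabet $\atoms^2$ — sequences $\pair{a_1}{b_1}\pair{a_2}{b_2}\cdots\pair{a_n}{b_n}$ with the constraint linking $b_i$ to $a_{i+1}$ via a register-updating constraint, in particular $b_i\neq a_{i+1}$ — by replacing each letter $\pair{a_i}{b_i}$ carrying control data with the corresponding rational language of register-preserving runs. Since Lemma~\ref{lem:subst} preserves rational Parikh images, it suffices to prove the claim for these abstract altering paths.

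The core combinatorial content is then: \emph{the set of words $\pair{a_1}{b_1}\cdots\pair{a_n}{b_n}$ over $\atoms^2$ with $b_i\neq a_{i+1}$ for all $i<n$ has rational Parikh image}, and more precisely with prescribed first coordinate pattern determined by the register-updating constraint types \ne, \en, \nnn. This is the promised generalisation of Example~\ref{ex:L1L2}: the point is that although the genuine path condition $b_i=a_{i+1}$ is rigid and would force a non-rational language (much like $L_1$), the altering condition $b_i\neq a_{i+1}$ is precisely the ``every second position'' relaxation that made $L_2$ rational. The idea is to exhibit an explicit rational expression: a Kleene star over an orbit-finite union of short blocks (one letter, or a constant-size gadget) whose Parikh image, after Minkowski summation, realises exactly the Parikh images of altering paths — mimicking the $\big(\bigcup_{a\neq b}ab\big)^*$ construction, but now the letters themselves carry pairs of atoms and the disequality is between adjacent letters rather than within a letter. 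I expect to build the expression by case analysis on the constraint type of each register-updating step and by tracking which atoms must be fresh versus reused; the key observation, to be proved by a swapping argument as in Example~\ref{ex:L1L2}, is that from any data vector in the Parikh image of the rational skeleton one can reorder the letters into a genuine altering path, and conversely every altering path lies in the skeleton's Parikh image.

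The main obstacle is managing the interaction between the two coordinates and the $n-1$ ``linking'' constraints simultaneously while keeping everything orbit-finite: unlike the one-coordinate case of Example~\ref{ex:L1L2}, here each $a_i$ appears both as a second coordinate $b_{i-1}$-forbidden value and as a first coordinate constrained by the chosen register-updating type, and freshly-updated steps (types \en, \nnn) introduce atoms that must not clash with a bounded window of neighbours. I would handle this by isolating a bounded-memory ``normal form'' for altering paths — argue that any altering path can be rearranged, preserving its Parikh image, so that reused atoms are grouped and the residual sequence is built from a fixed finite palette of gadgets — and then read off the rational expression directly. Lemma~\ref{lem:b-o-f} guarantees the relevant index sets are orbit-finite, and Lemma~\ref{lem:subst} then lifts the abstract result back to $P_{\conf q a\,\conf{q'}{a'}}$, completing the proof. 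The boundary words $\conf q a=\conf{q_1}{a_1}$ and $\conf{q'}{a'}=\conf{p_n}{a_n}$ contribute only a bounded prefix/suffix and are absorbed by an orbit-finite union over the finitely many control locations and the (at most constantly many) equality types relating $a,a'$ to the interior.
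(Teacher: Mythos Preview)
Your high-level intuition is right: the proof does eventually reduce to a combinatorial statement about sequences over (tuples of) atoms with disequality constraints between consecutive positions, and substitution (Lemma~\ref{lem:subst}) is indeed the glue. But the proposal has two genuine gaps.

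First, the reduction. You cannot simply ``forget control locations'' and present $P_{\conf q a\,\conf{q'}{a'}}$ as a substitution applied to an abstract altering-path language over $\atoms^2$: the control locations $q_i,p_i$ constrain \emph{which sequences of register-updating rules are consistent}, and that constraint is path-shaped in the finite control, not letterwise. The paper handles this by a Kleene-style induction on the number of register-updating transition rules: remove one rule $t=(\conf p x,\pair h y,\varphi,\conf{p'}{x'})$, apply the induction hypothesis to the smaller automaton, and plug in the contribution of repeated uses of $t$ via an \emph{altering loop} language $L_{(a',b)\varphi a}$ over $\atoms^2\times\atoms$ (Lemma~\ref{lem:LF2}). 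Only after this induction does the problem become purely atomic; it is not a one-shot substitution.

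Second --- and this is the real gap --- your proposed proof of the core combinatorial fact (``swapping argument as in Example~\ref{ex:L1L2}'', ``bounded-memory normal form'') is not a proof, and this is precisely where all the difficulty concentrates. The paper further reduces altering loops to \emph{anti-paths} over $\atoms\times\pow 2\atoms$ (Lemma~\ref{lem:anti0}) and then to \emph{anti-cycles} (Lemma~\ref{lem:C}), and devotes the whole of Section~\ref{sec:anti} to them. The argument is not an elementary rearrangement: one associates to a data vector $v$ a \emph{source graph} $\gr v$ on the atoms appearing as sources, proves that for vectors of sufficiently large order membership in $\Par{\acall}$ is equivalent to a graph-theoretic \emph{non-degeneracy} condition (Lemma~\ref{lem:above}), and the hard direction uses a sufficient condition for Hamiltonian cycles in dense directed graphs (Ghouila--Houri, Theorem~\ref{thm:Hamilton}) together with Lemma~\ref{lem:sc}. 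Small-order vectors are handled separately (Lemma~\ref{lem:belown}), and rationality of the non-degenerate set requires its own decomposition argument (Lemma~\ref{lem:nondegrat}). The naive swap from Example~\ref{ex:L1L2} does not scale here because a letter $\tuple{a,\set{b,c}}$ forbids \emph{two} atoms as the next source simultaneously; whether a given multiset of such letters can be cyclically arranged is a genuine Hamiltonicity question, not a local one, and your ``fixed finite palette of gadgets'' does not exist without that analysis.
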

Before proving the lemma we use it to complete the proof of Lemma~\ref{lem:1ARA}.
Indeed, $\langoffull \A q a {q'} {a'}$ is obtained from the altering path language
$P_{\conf q a \, \conf {q'} {a'}}$
using the equivariant substitution ($q,p$ range over locations and $a,b$ over $\atoms$):
\[
\tuple{q, a, p} \quad \mapsto \quad L_{q a p} \qquad\qquad
\tuple{h, b} \quad \mapsto \quad \tuple{h, b}.
\] 
As a substitution by languages with rational Parikh images preserves rationality
of Parikh image (cf.~Lemma~\ref{lem:subst}), 
by Lemmas~\ref{lem:Lqap} and~\ref{lem:LF} we deduce
that the languages $\langoffull \A q a {q'} {a'}$ have rational Parikh images, as required.
%
%\end{proof}

\subsection{Proof of Lemma~\ref{lem:LF}}

%Towards proving Lemma~\ref{lem:LF} 
We define, for a 
register-updating transition constraint $\varphi \in \set{\fen, \fnnn}$
%rule $t = (p, e, \varphi, q)$ 
and (not necessariy distinct) atoms $a',b,a\in\atoms$, the language 
$L_{(a',b) \varphi a}$ over the alphabet $\atoms^2 \times \atoms$ as follows: 
let $L_{(a'_0,b_0) \varphi a_{n+1}}$ contain all (possibly empty) words of the form
\begin{align}  \label{eq:alter}
\tuple{\tuple{a_1, a'_1},\, b_1}\,\, \tuple{\tuple{a_2, a'_2}, \, b_2} \,\, \ldots  \,\, \tuple{\tuple{a_n, a'_n}, \, b_n} 
\end{align}
such that $(a'_i, b_i, a_{i+1}) \models \varphi$ for $i = 0, \ldots, n$.
We omit the case $\varphi = \fne$ as it is can be treated symmetrically to the case $\varphi = \fen$.
Words in $L_{(a'_0,b_0) \varphi a_{n+1}}$  we call \emph{altering loops}.
Intuitively, a letter $\tuple{\tuple{d, d'}, e}\in \atoms^2\times\atoms$ represents
(cf.~the substitution~\eqref{eq:subst} below), for some locations $p', p$ and $h\in H$,  an altering path from
$\conf p d$ to $\conf {p'} {d'}$ followed by a register-updating transition that inputs $\pair h e$.
We derive Lemma~\ref{lem:LF} from the following one (proved itself in Section~\ref{sec:LF2} below):
\begin{lemma} \label{lem:LF2}
Altering loop languages  $L_{(a',b) \varphi a}$ have rational Parikh images.
%, for every $a', b,a\in\atoms$ and register-updating constraint $\varphi$.
\end{lemma}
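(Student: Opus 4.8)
The plan is to strip the last bit of automaton-specific structure off $L_{(a',b)\varphi a}$ and reduce it, via an equivariant substitution, to a purely combinatorial \emph{anti-path} language over the small alphabet $\atoms\times\pow{\leq 2}{\atoms}$, and then to establish that such anti-path languages have rational Parikh images. The first step rests on the observation that in an altering loop~\eqref{eq:alter} the only coupling between consecutive letters $\langle\langle a_i,a'_i\rangle,b_i\rangle$ and $\langle\langle a_{i+1},a'_{i+1}\rangle,b_{i+1}\rangle$ is that $a_{i+1}$ must avoid the set $\{a'_i,b_i\}$, which has size $2$ when $\varphi=\varphi_5$ and size $1$ when $\varphi=\varphi_4$ (since then $a'_i=b_i$); the boundary conditions at positions $0$ and $n{+}1$ couple $a_1$ with the fixed atoms $a',b$ and $a_{n+1}$ with the fixed atom $a$ in exactly the same way. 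Hence, after disposing of the cases where the fixed data $(a',b),a$ is incompatible with the equality type demanded by $\varphi$ (the language is then empty, hence rational), the map $\langle\langle a_i,a'_i\rangle,b_i\rangle\mapsto\langle a_i,\{a'_i,b_i\}\rangle$ presents $L_{(a',b)\varphi a}$ as $A(\kappa)$, where $A$ is the anti-path language over $\atoms\times\pow{\leq 2}{\atoms}$ with forbidden source set $\{a',b\}$ and forbidden target atom $a$, and $\kappa$ is the equivariant substitution sending $\langle c,\{d,e\}\rangle$ to the finite language $\{\langle\langle c,d\rangle,e\rangle,\langle\langle c,e\rangle,d\rangle\}$ (to the singleton $\{\langle\langle c,d\rangle,d\rangle\}$ when the second component is a singleton $\{d\}$). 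One checks that $A(\kappa)=L_{(a',b)\varphi a}$ exactly: both orderings $\langle\langle c,d\rangle,e\rangle$ and $\langle\langle c,e\rangle,d\rangle$ are always $\varphi$-consistent with their neighbours, because the avoidance constraint $a_{i+1}\notin\{a'_i,b_i\}$ is symmetric in $a'_i$ and $b_i$. As $\kappa$ uses only finite languages, Lemma~\ref{lem:subst} then reduces Lemma~\ref{lem:LF2} to the statement that anti-path languages over $\atoms\times\pow{\leq 2}{\atoms}$ with prescribed end data have rational Parikh images.

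For this core statement I would argue through a graph-theoretic characterisation of the Parikh image. A data vector $v$ over $\atoms\times\pow{\leq 2}{\atoms}$ is the Parikh image of an anti-path iff its $\size v$ letter-occurrences can be arranged in a line so that the atom of each occurrence avoids the $({\le}2)$-element forbidden set of its predecessor, with the two boundary constraints respected; equivalently, the digraph on the occurrences (together with a source and a target sentinel) having an arc $o\to o'$ whenever $o'$'s atom is not forbidden by $o$ admits a Hamiltonian path from the source to the target sentinel. The decisive feature is that forbidden sets have size $\le 2$: an occurrence misses, as out-neighbours, only occurrences whose entry atom is one of two fixed atoms, and dually for in-neighbours. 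So as long as no single atom is over-represented -- as the entry atom of many occurrences, or inside the forbidden sets of many occurrences -- the digraph has semidegrees above half its order, and a Dirac-type criterion for Hamilton paths in dense digraphs (\cite{Ghouila-Houri,Kuhn-Osthus}) supplies the path; the same theorems, read contrapositively, give the matching obstruction that keeps certain vectors out of the Parikh image.

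The plan is therefore to partition the candidate vectors into: a \emph{degenerate} part -- bounded size, or only boundedly many atoms occurring -- which is orbit-finite by Lemma~\ref{lem:b-o-f} and hence rational outright; an \emph{unbalanced} part, where some atom is over-represented in the above sense, handled by an explicit interleaving construction that distributes the offending occurrences among the others; and the \emph{generic} dense part handled by the Hamiltonicity criterion. One then verifies that the set of vectors admitting such an arrangement is rational, essentially a finite union of sets $g+P^*$ in which $g$ is a minimal valid skeleton honouring the end data and $P$ an orbit-finite set of blocks that can always be inserted without violating any constraint. I expect the genuinely hard part to be exactly this combinatorial core: transporting dense-digraph Hamiltonicity to a claim about \emph{multisets} of letters, with two \emph{prescribed} endpoints and the required avoidance both at the ends and internally, and delimiting the exceptional vectors sharply enough that the resulting Parikh image is manifestly rational -- which is presumably why the authors develop it over several lemmas in Section~\ref{sec:anti} and flag the combinatorial complexity already for one register.
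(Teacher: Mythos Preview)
Your reduction to anti-paths via the substitution $\kappa$ is exactly the paper's move in Section~\ref{sec:LF2}: the paper treats $\varphi=\varphi_5$ over $\atoms\times\pow 2\atoms$ and derives $\varphi_4$ by a further one-letter substitution, whereas you handle both cases at once over $\atoms\times\pow{\leq 2}\atoms$---a cosmetic difference. You also correctly locate the hard core in the anti-path lemma and the tool in dense-digraph Hamiltonicity \`a la Ghouila--Houri.

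Where your sketch diverges from the paper is in the graph and in the endpoint handling. You build a digraph on the $\size v$ letter \emph{occurrences}; its semidegrees then depend on how multiplicities distribute over source atoms, which is why you need a separate ``unbalanced'' regime and an ad hoc interleaving argument. The paper instead first reduces anti-paths to anti-\emph{cycles} (adding one letter, Claim~\ref{claim:ac}, and undoing this rationally via Lemma~\ref{lem:minus}), and then builds the source graph $\gr v$ with one vertex per source \emph{atom}. In $\gr v$ every vertex has out-degree at least $n-3$ regardless of multiplicities (Claim~\ref{claim:gr}), and any cycle in $\gr v$ lifts to an anti-cycle carrying \emph{all} letters with those sources, simply by parking same-source letters in consecutive blocks (Lemma~\ref{lem:2ap}). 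This collapses your three-way split into two: bounded order (Lemma~\ref{lem:belown}) and large order, the latter characterised \emph{exactly} by three short local obstructions---the non-degeneracy conditions (1)--(3) of Lemma~\ref{lem:above}---whose rationality is then a direct computation (Lemma~\ref{lem:nondegrat}). Your occurrence-graph route is not wrong in spirit, but to make it go through you would effectively have to rediscover the source-atom aggregation and the anti-cycle reduction; without them the ``unbalanced'' case is where the real difficulty hides, not a side remark, and your final description of the Parikh image as ``a finite union of sets $g+P^*$'' understates what is needed for the bounded-order part.
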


%Before proving the lemma we use it to prove Lemma~\ref{lem:LF}.
%
We mimic the standard proof of Kleene theorem, 
exploiting altering loops to capture all iterations along loops in $\A$.
We proceed by induction on the number of 
register-updating transition rules in $\A$.
If there is no such transition rules, we have trivial (and obviously rational) altering path languages
\[
P_{\conf q a \, \conf {q'} {a'}} = \begin{cases}
\set{\tuple{q, a, q'}} & \text{ if } a = a',\\
\emptyset & \text{ otherwise.}
\end{cases}
\]

%rational by Lemma~\ref{lem:Lqap}.
Otherwise, remove an arbitrary register-updating transition rule 
$t = (\conf q x, \pair h y, \varphi, \conf {q'}{x'})$
%$t = (p', h, \varphi, p)$ 
from $\A$
(if $\varphi = \fne$ consider the inverse of $\A$ and $\varphi = \fen$ instead),
and use the induction assumption for the so obtained automaton $\A'$ to get altering path languages
$K_{\conf q a \, \conf {q'} {a'}}$ for every locations $q, q'$ and atoms $a, a'$, with rational Parikh images.
Let $L_{(c',b) h \varphi c}$ be the language obtained from the altering loops 
$L_{(c',b) \varphi c}$ by the equivariant substitution 
($d, d', e$ range over $\atoms$)
\begin{align} \label{eq:subst}
\tuple{\tuple{d,d'},e} \quad \mapsto \ K_{\conf p d \, \conf{p'}{d'}} \  \tuple{h,e}.
\end{align}

Rationality of the Parikh images of the altering path languages $P_{\conf q a \, \conf {q'} {a'}}$ of $\A$ follows by the fact that
$P_{\conf q a \, \conf {q'} {a'}}$ is equal to the union of
$K_{\conf q a \, \conf {q'} {a'}}$ and the following set
\begin{align}   %\label{eq:Kleene0}
%& \bigcup_{\footnotesize \begin{array}{l} c',b,c\in\atoms\\ (c',b,c)\models\varphi\\\phantom \ \end{array}}
%K_{\conf q a \, \conf {p'} {c'}} \, \tuple{h,b}\, 
%%L_{(c'',b) e \varphi c} \, 
%K_{\conf p c \, \conf {q'} {a'}}  \\
%&
 \label{eq:Kleene1}
\!\! \bigcup_{c',b,c\in\atoms}
%\\ (c',b,c)\models\varphi\end{array}
K_{\conf q a \, \conf {p'} {c'}} \, \tuple{h,b}\, 
L_{(c',b) h \varphi c} \, K_{\conf p c \, \conf {q'} {a'}}.
\end{align}
To show the equality, we observe that $K_{\conf q a\, \conf{a'}{q'}}$ contains all % altering paths in $\A'$, i.e., 
altering paths in $\A$ that do not use $t$, and claim that
the set~\eqref{eq:Kleene1} contains those altering paths in $\A$ that do use $t$.
Specifically,
as $\varepsilon \in L_{(c',b) h \varphi c}$, we obtain altering paths using $t$ exactly once
(dotted arrow depict altering paths in $\A'$):
\vspace{-2mm}
\begin{figure}[H] %[btp]
\begin{center}
\includegraphics[width=5.2cm]{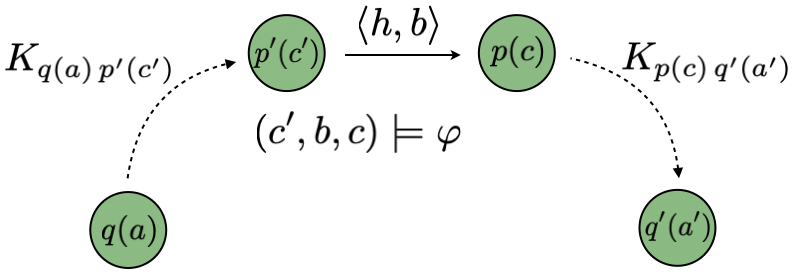}
\end{center}
\end{figure}
\vspace{-4mm}
%
%
%\noindent
%These altering paths factorise into a prefix before $t$ (altering paths
%from $\conf q a$ to $\conf {p'} {c'}$, for some $c'\in\atoms$), a single use of $t$, and
%the suffix after $t$ (altering paths from $\conf p c$ to $\conf {q'} {a'}$, for some $c\in\atoms$).
%%
%Finally, the set~\eqref{eq:Kleene1} contains those altering paths in $\A$ that use $t$ 

\noindent
or more than once (for instance twice, as shown in the figure):
\vspace{-6mm}
\begin{figure}[H] %[btp]
\begin{center}
\includegraphics[width=5.7cm]{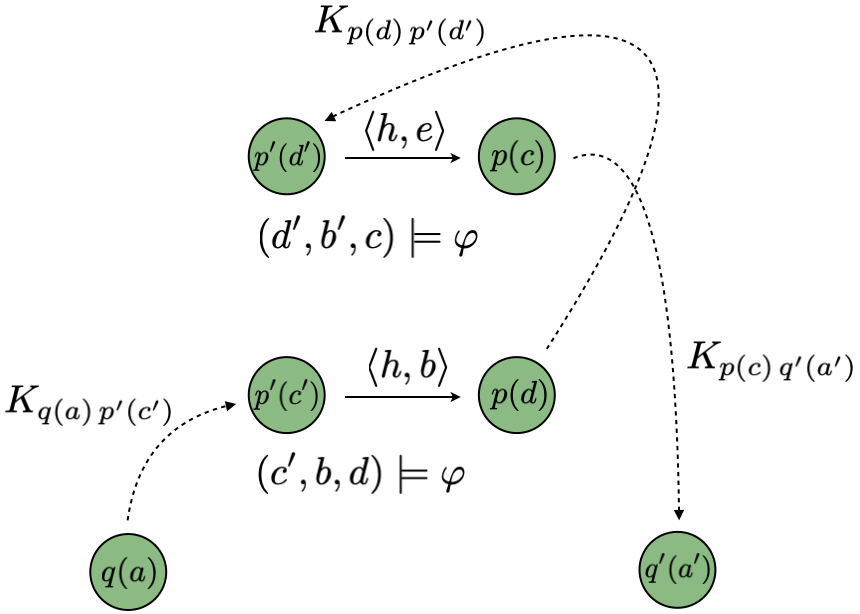}
\end{center}
\end{figure}
\vspace{-6mm}

\noindent
In general, a word in~\eqref{eq:Kleene1} factorises into a prefix before the first use of $t$ 
(an altering path from $\conf q a$ to $\conf {p'} {c'}$),
the suffix after the last use of $t$ (an altering path from $\conf p c$ to $\conf {q'} {a'}$),
and the infix leading from $\conf {p'} {c'}$ to $\conf {p}{c}$.
The infix starts with the letter $\tuple{h, b}$ input by the first traversal of $t$,
and then contains alternately altering paths
that do not use $t$ (from $\conf p d$ to $\conf {p'} {d'}$, %described by the sets $K_{\conf {p} {d}\, \conf {p'} {d'}}$ 
for some  $d,d'\in\atoms$)
and traversals of $t$ (a letter $\tuple{h,e}$ for some $e\in\atoms$), cf.~the substitution~\eqref{eq:subst}.
Therefore, by the definition~\eqref{eq:alter} of altering loops $L_{(a',b) \varphi a}$,
the set~\eqref{eq:Kleene1} contains exactly those altering paths in $\A$ that do use $t$, as claimed.

\subsection{Proof of Lemma~\ref{lem:LF2}} % (case $\varphi = \fnnn$)}
\label{sec:LF2}

We concentrate on the hardest case $\varphi = \fnnn$ (all the three atoms involved in $\varphi$
are pairwise distinct). 
The remaining case $\varphi = \fen$ is obtained then using the substitution
\[
\tuple{\tuple{a, a'}, b} \quad \mapsto \quad \tuple{\tuple{a, a'}, a'}.
\]
%(The remaining two cases are similar but easier; they are dealt with in~Section~\ref{sec:LF} in the appendix.)
%
We need to show that the altering loop languages $L_{(c,b) {\fnnn} a}$ have rational Parikh images.
Recall that
$L_{(c_0,b_0) {\fnnn} a_{n+1}}$ contains all words over $\atoms^2 \times \atoms$ of the form
\begin{align}  \label{eq:alt}
\tuple{\tuple{a_1, c_1},\, b_1}\,\, \tuple{\tuple{a_2, c_2}, \, b_2} \, \, \ldots  \,\, \tuple{\tuple{a_n, c_n}, \, b_n}
\end{align}
such that 
%(1) $a_1 \notin\set{c, b}$ and 
%(2) 
$c_i, b_i, a_{i+1}$ are pairwise different for $i = 0, \ldots, n$.
%(3) $a \notin \set{c_n, b_n}$. 
%The conditions (1) and (3) we call the beginning and end constraints, respectively.
 
We reduce Lemma~\ref{lem:LF2}  to Lemma~\ref{lem:anti0} (which constitutes the technical core
of the proof of Lemma~\ref{lem:1ARA}).
Relying on the observation that $b_i$ and $c_i$ play entirely symmetric roles in~\eqref{eq:alt}
and are forcedly distinct, 
we rearrange words~\eqref{eq:alt} into words over the alphabet 
$\Gamma = \atoms \times \pow 2 \atoms$ as follows:
\begin{align}  \label{eq:alt2}
\tuple{a_1, \set{b_1, c_1}}\,\, \tuple{a_2, \set{b_2, c_2}} \,\, \ldots  \,\, \tuple{a_n, \set{b_n, c_n}}.
\end{align}
Let $\ap {b_0} {c_0} {a_{n+1}} \subseteq \Gamma^*$ denote the language of all 
\emph{nonempty} words of the form~\eqref{eq:alt2}
subject to the same constraints as in~\eqref{eq:alt}, namely $a_{i+1}\notin\set{b_i,c_i}$ for $i = 0, \ldots, n$;
these words we call \emph{anti-paths} in the sequel.
Note that $a_i\in \set{b_i, c_i}$ is allowed. %for $i = 1, \ldots, n$.
We observe that $L_{(c,b) {\fnnn} a}$ is obtained from $\ap b c a$ using the equivariant substitution
\[
\tuple{d, \set{e,f}} \mapsto \tuple{d,e} \, f \ \ \cup \ \ \tuple{d, f} \, e,
\]
and adding the empty word.
Therefore 
the language $L_{(c,b) {\fnnn} a}$ has rational Parikh image assuming $\ap b c a$ has so,
and Lemma~\ref{lem:LF2} is implied by the following core technical result:
\begin{lemma}%[Anti-path Lemma] 
\label{lem:anti0}
The anti-path languages $\ap b c a \subseteq \Gamma^*$ have rational Parikh images.
\end{lemma}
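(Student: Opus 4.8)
My plan is to characterise anti-paths combinatorially via a directed graph on atoms and reduce rationality of $\Par{\ap b c a}$ to a statement about (almost-)Hamiltonian walks in such graphs. Fix a large $N$ and first observe, as usual, that $\ap b c a$ splits into the orbit-finite union of its bounded-length sublanguages plus a genuinely "long" part; for lengths below the threshold there is nothing to do, so concentrate on words $\tuple{a_1,\set{b_1,c_1}}\ldots\tuple{a_n,\set{b_n,c_n}}$ with $n$ large. To such a word I would associate the directed multigraph $G$ whose vertices are the atoms occurring in the word and whose $i$-th edge goes from (the vertex determined by) step $i$ to step $i+1$, the anti-path constraint $a_{i+1}\notin\set{b_i,c_i}$ being exactly a local "forbidden successor" condition of bounded out-degree excess. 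The key point is that, unlike the \emph{path} condition $b_i=a_{i+1}$ (which forces the word to trace an Eulerian-type walk and produces non-semilinear Parikh images, cf.\ $L_3$), the anti-path condition only forbids a bounded number of successors at each step, so a long anti-path of a given Parikh image exists iff a suitable bounded-out-forbidding directed (multi)graph admits a Hamiltonian-like covering walk.

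The second and central step is to quantify "iff a suitable graph admits a covering walk" so that it becomes a rational constraint on the Parikh vector. Here I would invoke the Ghouila-Houri / K\"uhn--Osthus type degree conditions cited in the introduction: a directed graph in which every vertex has in- and out-degree at least half the number of vertices (and more robustly, any graph that is "dense enough" relative to the forbidden-successor bound) contains a Hamiltonian cycle, hence can be traversed by a closed walk visiting every vertex. For a candidate Parikh vector $v\in\N^{\Gamma}$ with $\size v$ large, the number of distinct atoms it involves is bounded by a constant times the number of "singleton" atoms plus a bounded amount of structure, so once $\size v$ is large the multiplicities force each relevant vertex to have high degree, the degree condition applies, and a valid anti-path with Parikh image $v$ exists. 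Conversely the anti-path constraint is clearly necessary. This turns membership in $\Par{\ap b c a}$, above the threshold, into: $v$ satisfies finitely many "local type" parity/adjacency conditions on its support, which is precisely a (finite, equivariant) Boolean combination cut out by linear set membership — hence semilinear, hence rational. Gluing the bounded part (an orbit-finite, hence rational, set of data vectors) to this long part via orbit-finite union gives rationality of $\Par{\ap b c a}$.

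Concretely I would organise the write-up as: (i) reduce to the case $\size v \ge N$ by peeling off a rational orbit-finite remainder; (ii) define $G = G(w)$ and prove the "anti-path $\iff$ bounded-forbidden-successor walk" equivalence; (iii) state the directed-graph lemma (Ghouila-Houri / K\"uhn--Osthus, possibly in the robust "almost Hamiltonian after deleting a bounded set" form) and verify its hypotheses from $\size v$ being large and the forbidden-successor bound being $2$; (iv) deduce that, for $\size v \ge N$, $v\in\Par{\ap b c a}$ iff $v$ lies in a fixed (equivariant) semilinear set determined by support-local conditions — here one must be careful about the endpoints $b,c,a$ and about atoms that appear only once, since those vertices have low degree and need to be routed "by hand" at the ends of the walk, contributing only a bounded correction; (v) conclude by Proposition~\ref{prop:sh1} that semilinear implies rational and take the orbit-finite union with the bounded part.

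The main obstacle I expect is step (iv): making the passage from "a Hamiltonian-type walk exists" to "the Parikh vector lies in an explicitly semilinear set" genuinely rigorous, uniformly over the orbit-finitely many choices of which atoms among $b,c,a$ coincide, while correctly accounting for low-degree vertices (atoms of small multiplicity, in particular the endpoints and the atoms appearing exactly once) that the degree lemma does not directly handle. The honest approach is to delete a bounded "exceptional set" of such vertices, apply the directed-graph lemma to the dense remainder, and then splice the exceptional vertices back into the closed walk using the extra slack the anti-path condition leaves — this splicing, and the bookkeeping showing it only perturbs the Parikh vector within a fixed finite set of correction vectors, is where the real work lies.
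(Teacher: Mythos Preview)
Your high-level instinct---characterise membership in $\Par{\ap b c a}$ graph-theoretically and invoke a Ghouila-Houri / K\"uhn--Osthus degree condition---matches the paper's. But the execution plan has a real gap, and it is exactly at the point you flag as ``the main obstacle''.

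The error is the splitting parameter. You split off $\size v < N$ and then assert that for $\size v \ge N$ ``the multiplicities force each relevant vertex to have high degree''. This is false: take $v = n\cdot\tuple{d,\{e,f\}}$ for large $n$. The size is $n$ but there is a single source, your graph has essentially one vertex, and no Hamiltonian lemma applies. More generally, large size says nothing about in-degrees; what controls the degree condition is the number of \emph{distinct sources}, i.e., $\ord v = \size{\setof{\src\alpha}{\alpha\in\dom v}}$. The paper splits on $\ord v$: for $\ord v$ large, out-degrees in the \emph{source graph} (vertices $= \srcs v$, edge $(d,e)$ iff some $d$-sourced letter has $e\notin$ its target) are automatically $\ge \ord v - 3$, at most two vertices can have small in-degree, and after hand-routing those two one applies Ghouila-Houri to the remainder. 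The precise ``iff'' is not a Boolean combination of linear constraints but an explicit \emph{non-degeneracy} condition (Lemma~\ref{lem:above}), and showing the non-degenerate vectors of large order form a rational set (Lemma~\ref{lem:nondegrat}) takes a further decomposition argument; the paper does not claim semilinearity here.

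Once you switch to splitting by order, your ``nothing to do'' remark for the bounded case also fails: words of bounded order can be arbitrarily long, so that part is not orbit-finite and needs its own argument (the paper's Lemma~\ref{lem:belown}, via a classical regular language over a finite source alphabet followed by a substitution). Finally, the endpoint bookkeeping you worry about is avoided entirely in the paper by first reducing anti-paths to anti-\emph{cycles}: append the single letter $\tuple{a,\{b,c\}}$ (Claim~\ref{claim:ac}) and use that $L\minu\alpha$ is rational whenever $L$ is (Lemma~\ref{lem:minus}); this is much cleaner than splicing exceptional vertices back into a walk.
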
 
\noindent
The proof of Theorem~\ref{thm:1ARA} is thus completed once we prove Lemma~\ref{lem:anti0}.
The whole next section is devoted to this task.
%This is the subject of the next section.

\section{Anti-paths: Proof of Lemma~\ref{lem:anti0}}
\label{sec:anti}

For a letter $\alpha = \tuple{a, \set{b,c}} \in \Gamma$ we call the atom $a$ its \emph{source}, 
and the two-element set $\set{b,c}$ its \emph{target}, denoted $a= \src \alpha $ and $\set{b,c} = \trg \alpha$,
respectively. 
For a word $w = \alpha_1 \ldots \alpha_n \in\Gamma^*$ we denote by 
$\src w = \src {\alpha_1}$ the first source, and by
$\trg w = \trg {\alpha_n}$ the last target.

\para{Anti-cycles}
An anti-path $w$ is called an \emph{anti-cycle} if $\src w \notin \trg w$
(the first source does not belong to the last target).
Anti-cycles are closed under cyclic shifts, and hence we use the cyclic order when
speaking about precedence of letters in anti-cycles. 
Denote the set of all anti-cycles by $\acallgamma$.
We build on a simple but crucial observation: 
anti-paths $\ap b c a$ are exactly those words $w\in\Gamma^*$ which, prolonged with
a single letter $w \, \pair a {\set{b,c}} \in \Gamma$, form an anti-cycle:
\begin{claim} \label{claim:ac}
$\ap b c a = \setof{w\in\Gamma^*}{w \, \pair a {\set{b,c}} \in \acallgamma}$.
\end{claim}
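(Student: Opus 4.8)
The statement to prove is Claim~\ref{claim:ac}:
$$\ap b c a = \setof{w\in\Gamma^*}{w \, \pair a {\set{b,c}} \in \acallgamma}.$$

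The plan is to prove this identity by directly unwinding the definitions of $\ap b c a$ and of an anti-cycle: appending the single letter $\pair a {\set{b,c}}$ to a word $w$ should convert the three ``boundary'' inequalities in the definition of $\ap b c a$ into (i) one additional internal gluing inequality of $v := w\,\pair a {\set{b,c}}$ and (ii) the cyclic inequality $\src v \notin \trg v$ that distinguishes an anti-cycle among anti-paths. First I would isolate a small reformulation of being an anti-path, used in both directions: a word $v = \beta_1 \ldots \beta_m \in \Gamma^*$ lies in some $\ap {b'}{c'}{a'}$ if and only if $m \ge 1$ and $\src {\beta_{i+1}} \notin \trg {\beta_i}$ for every $1 \le i < m$. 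Necessity is immediate from the definition; for sufficiency, given these internal inequalities one picks any two distinct atoms of $\atoms - \set{\src {\beta_1}}$ as $b', c'$ and any atom outside the finite set $\trg {\beta_m}$ as $a'$ — both possible since $\atoms$ is infinite — so that $v \in \ap {b'}{c'}{a'}$. Consequently $v$ is an anti-cycle iff $m \ge 1$, all its internal inequalities hold, and $\src {\beta_1} \notin \trg {\beta_m}$.

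For the inclusion $\subseteq$, take $w = \alpha_1 \ldots \alpha_n \in \ap b c a$ and write $\alpha_i = \pair {a_i}{\set{b_i,c_i}}$; then $n \ge 1$, $a_1 \notin \set{b,c}$, $a_{i+1} \notin \set{b_i,c_i}$ for $1 \le i < n$, and $a \notin \set{b_n,c_n}$. Put $v = w\,\pair a {\set{b,c}}$, whose letters are $\beta_i = \alpha_i$ for $i \le n$ and $\beta_{n+1} = \pair a {\set{b,c}}$. The internal inequalities of $v$ at junctions $1 \le i < n$ are exactly the $a_{i+1} \notin \set{b_i,c_i}$, and the junction at position $n$ reads $\src {\beta_{n+1}} = a \notin \set{b_n,c_n} = \trg {\beta_n}$; all of these hold. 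Moreover $\src v = a_1 \notin \set{b,c} = \trg v$, so by the reformulation $v \in \acallgamma$.

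For the inclusion $\supseteq$, suppose $v := w\,\pair a {\set{b,c}} \in \acallgamma$. Then $v$ is a nonempty anti-path; write $v = \beta_1 \ldots \beta_m$, so that $\beta_m = \pair a {\set{b,c}}$ and $w = \beta_1 \ldots \beta_{m-1}$. Reading off the internal inequalities of $v$: those at junctions $1 \le i < m-1$ are precisely the internal inequalities demanded of the anti-path $w$; the junction at position $m-1$ gives $a = \src {\beta_m} \notin \trg {\beta_{m-1}} = \trg w$; and the cyclic condition $\src v \notin \trg v$ reads $\src w = \src {\beta_1} \notin \set{b,c}$. Together with $m - 1 \ge 1$ these are exactly the defining conditions of $\ap b c a$, so $w \in \ap b c a$.

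I do not expect a genuine obstacle here; the only delicate point is the degenerate case $m = 1$, i.e.\ $w = \emptyword$, where $v = \pair a {\set{b,c}}$ is an anti-cycle exactly when $a \notin \set{b,c}$ — which is the instance at ``$n = 0$'' of the constraint defining $\ap b c a$. Under the natural convention admitting $\emptyword$ in that case (matching the treatment of the empty word for altering loop languages) the identity holds verbatim; otherwise the claim is read for nonempty $w$. Everything else is just the routine index bookkeeping confirming that the three roles of $b$, $c$ and $a$ in the new last letter line up as described.
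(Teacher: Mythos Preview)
Your proof is correct; the paper states the claim as an immediate observation and gives no proof, so your direct definition-unfolding is exactly the natural argument. Your caveat about $w=\emptyword$ is also accurate: since $\ap b c a$ is defined to contain only nonempty words while the single letter $\pair a {\set{b,c}}$ is an anti-cycle whenever $a\notin\set{b,c}$, the two sides formally differ by $\{\emptyword\}$ in that case, which is immaterial for the subsequent use of the claim (rationality of Parikh images is insensitive to adding or removing $\zerovector$).
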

\begin{lemma}%[Appendix~\ref{sec:anti-app}]   
\label{lem:minus}
If $L \subseteq \Gamma^*$ is rational and $\alpha\in\Gamma$ then the language
$L \minu \alpha = \setof{w\in\Gamma^*}{w \, \alpha \in L}$ is rational too.   
\end{lemma}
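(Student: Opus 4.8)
The plan is to proceed by structural induction on a rational expression for $L$, keeping the inductive hypothesis strong enough to survive concatenation. The subtle point is that $L\minu\alpha$ picks out only the words whose \emph{last} letter is $\alpha$ and strips it; for a concatenation $L = L_1 L_2$ this splits into the case where the final $\alpha$ lies in the $L_2$-part (nonempty) and the case where $L_2$ contributes the empty word and the $\alpha$ is the last letter of the $L_1$-part. So I would first record the standard identity
\[
(L_1 L_2)\minu\alpha \ = \ L_1\,(L_2\minu\alpha) \ \cup \ \big([\emptyword\in L_2]\big)\,(L_1\minu\alpha),
\]
where the second summand is present only when $\emptyword\in L_2$, and note that membership of $\emptyword$ in a rational language is decidable from its expression. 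Likewise $(L^*)\minu\alpha = L^*\,(L\minu\alpha)$, and for a singleton $\{w\}$ we have $\{w\}\minu\alpha = \{w'\}$ if $w = w'\alpha$ and $\emptyset$ otherwise.

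Concretely, I would define $L\minu\alpha$ recursively on the derivation of $L$: singletons as above; $(L_1 L_2)\minu\alpha$ by the displayed identity, using that $L_1, L_2\minu\alpha, L_1\minu\alpha$ are rational by induction and that the class of rational data languages is closed under concatenation and (binary, hence orbit-finite) unions; $(L')^*\minu\alpha = (L')^*\,(L'\minu\alpha)$, rational by induction and closure under star and concatenation; and for an orbit-finite union $L = \bigcup_{i\in I} L_i$,
\[
\Big(\bigcup_{i\in I} L_i\Big)\minu\alpha \ = \ \bigcup_{i\in I}\, (L_i\minu\alpha).
\]
Here I must check that the indexing function $i\mapsto L_i\minu\alpha$ is legal: it is obtained by composing $i\mapsto L_i$ (legal by assumption) with the quotienting operation, which depends only on $\alpha$; hence the new indexing function is supported by the union of the supports of $i\mapsto L_i$ and of $\alpha$, so the union is again an orbit-finite union and therefore rational. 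Then a routine induction on words shows $\setof{w}{w\alpha\in L}$ equals the set just constructed, establishing the lemma.

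The main obstacle I expect is purely bookkeeping rather than conceptual: making sure the concatenation case is handled correctly when the right factor can produce $\emptyword$ (otherwise one loses exactly the words whose $\alpha$ sits at the very end of the left factor), and verifying legality/equivariance of the new indexing function in the orbit-finite-union case so that the resulting union stays within the rational class. Neither of these is hard, but both are easy to get subtly wrong. (One could alternatively phrase the proof via the substitution machinery of Lemma~\ref{lem:subst}, but the direct structural induction above is cleaner here since we are manipulating the languages themselves, not just their Parikh images.)
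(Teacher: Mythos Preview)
Your proposal is correct and follows the same structural induction on rational expressions as the paper's own proof. The only notable difference is that the paper actually writes its proof in the commutative data-vector setting (using $+$ for Minkowski addition and $\zerovector$ for the empty vector), so its product rule is the symmetric $(\widetilde{E_1}+E_2)\cup(E_1+\widetilde{E_2})$ and the $\varepsilon\in L_2$ case split you worry about never arises; your version handles the non-commutative word setting directly, which is what the lemma statement literally asks for and is in that sense more careful.
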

Claim~\ref{claim:ac} and Lemma~\ref{lem:minus} prove Lemma~\ref{lem:anti0}, once we have:
\begin{lemma} \label{lem:C}
$\acallgamma$ has rational Parikh image.
\end{lemma}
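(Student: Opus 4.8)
The goal is to show that the set $\acallgamma$ of anti-cycles over $\Gamma = \atoms\times\pow 2\atoms$ has rational Parikh image. The plan is to reduce this to a purely combinatorial statement about directed multigraphs. Recall that an anti-cycle is a word $w = \alpha_1 \cdots \alpha_n$ with $\src{\alpha_{i+1}} \notin \trg{\alpha_i}$ for all $i$ read cyclically (including $\src{\alpha_1}\notin\trg{\alpha_n}$). I would first observe that an anti-cycle, up to cyclic shift and up to the data values actually used, is determined by two pieces of information: the multiset of letters that occur (equivalently its Parikh image), and the \emph{cyclic order} in which the distinct ``source atoms'' appear, subject only to the anti-cycle constraint. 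So the real question is: given a finite multiset $v$ of letters from $\Gamma$ (with bounded data domain $\dom v$), when can one arrange copies of these letters into a cyclic sequence so that the anti-constraint holds everywhere? This is a Hamiltonian-type feasibility question on an auxiliary directed graph, which is exactly why the authors invoke the Ghouila-Houri / K\"uhn--Osthus degree conditions cited in the introduction.

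\textbf{Key steps.} First I would set up, for each finite equivariant ``shape'' of multiset, a directed graph $\gr v$ whose vertices are the letters to be placed (one vertex per occurrence, so $\size v$ vertices) and where there is an edge from vertex $\alpha$ to vertex $\beta$ whenever $\src\beta \notin \trg\alpha$; then anti-cycles with Parikh image $v$ correspond exactly to Hamiltonian cycles of $\gr v$. The second step is the crucial one: show that Hamiltonicity of $\gr v$ is governed by a \emph{local, orbit-finite} condition. The point is that two vertices with the same letter are interchangeable, and the ``forbidden'' out-neighbours of a vertex $\alpha$ are only those $\beta$ with $\src\beta\in\trg\alpha$, i.e.\ at most two atoms worth of letters; so $\gr v$ is very dense except for a controlled set of ``missing'' edges clustered around at most $\size{\dom v}$ atoms. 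Using a Dirac/Ghouila-Houri-style sufficient condition for Hamiltonicity in dense digraphs, I expect to prove that once the multiplicities in $v$ are large enough in the right places, $\gr v$ is automatically Hamiltonian; the finitely many small cases and the boundary cases are handled by hand. Concretely this should yield: there is a bound $N$ such that $\acallgamma$ equals a finite (orbit-finite) union of languages of the form $B_j \cdot (P_j)$-type expressions, where $B_j$ ranges over anti-cycles of bounded length (these are orbit-finite, hence rational) and the ``pumping'' part adds, for each atom $a$ used, letters whose source is $a$ — and one checks directly that inserting such a letter into an existing anti-cycle at an appropriate position preserves the anti-cycle property. Thus $\acallgamma$ gets a rational expression of the schematic form: an orbit-finite union over short anti-cycle skeletons, concatenated/shuffled with stars of single-letter insertions. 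The third step is bookkeeping: translate the graph-theoretic normal form back into the rational-set vocabulary of Section~\ref{sec:rat} (singletons, $+$, ${}^*$, orbit-finite unions), using Lemma~\ref{lem:b-o-f} to see that the index set of skeletons is orbit-finite, and Lemma~\ref{lem:subst} / Claim~\ref{claim:iff} to stay within rational Parikh images throughout.

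\textbf{Main obstacle.} The hard part will be Step 2: pinning down the exact Hamiltonicity criterion for $\gr v$ and proving it is eventually automatic. The subtlety is that $\gr v$ is not arbitrary — the non-edges are highly structured (they come in ``bundles'' indexed by atoms appearing in targets) — and a naive minimum-degree bound is not quite enough when $\dom v$ is large and the letters are badly unbalanced (e.g.\ many letters all sharing the same source atom $a$, so that no two of them can be adjacent in a cycle). One must argue that such obstructions are themselves ``linear'': if too many source-$a$ letters are present relative to the letters that can separate them, the word simply is not an anti-cycle, and this infeasibility is again a linear (semi-linear) constraint, so it can be absorbed into the orbit-finite union of rational pieces. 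Getting the quantifier structure right here — uniformly over all orbits of multisets $v$, with all bounds equivariant — is where the combinatorial care cited by the authors is really needed, and I would expect the cited results of Ghouila-Houri and K\"uhn--Osthus to be applied precisely to discharge the ``all multiplicities large'' regime while the rest is a finite case analysis.
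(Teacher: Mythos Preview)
Your broad strategy --- reduce to a Hamiltonicity question on an auxiliary digraph, invoke a Ghouila-Houri-type degree condition for the generic case, handle the remainder separately --- matches the paper's. But the execution has a genuine gap, and it sits exactly where you flag the difficulty.

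You take the vertices of $\gr v$ to be the individual letter \emph{occurrences}, so $\size v$ vertices. With this choice the Ghouila-Houri hypothesis fails badly and not just at a boundary: if many letters carry atom $a$ in their target, then every $\beta$ with $\src\beta=a$ has in-degree far below $\size v/2$; dually, if many letters share source $a$ (over $\Gamma$ these cannot follow any letter whose target contains $a$), out-degrees collapse. Your proposed remedy --- that such obstructions are ``linear'' and can be absorbed into the rational description --- is not substantiated, and in fact the obstructions to Hamiltonicity in dense digraphs are not describable by linear inequalities on multiplicities. The paper's counterexample in Section~\ref{sec:sl} should already make you wary of any argument that produces a semi-linear answer.

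The paper's decisive move is a different graph: the vertices of $\gr v$ are the \emph{source atoms} $\srcs v=\{\src\alpha:\alpha\in\dom v\}$, so only $n=\ord v$ vertices, with an edge $d\to e$ whenever some $d$-sourced letter in $\dom v$ omits $e$ from its target. Now every vertex has out-degree $\geq n-3$ automatically (targets have size~$2$), and a Hamiltonian cycle here yields an anti-cycle by placing all same-source letters in a consecutive block (Lemma~\ref{lem:2ap}); multiplicities become irrelevant. The case split is accordingly by \emph{order}, not length: bounded-order anti-cycles are rational directly (Lemma~\ref{lem:belown}), while for large order the paper proves an explicit finite characterisation --- the non-degeneracy conditions (1)--(3) of Lemma~\ref{lem:above} --- of exactly which $v$ are Parikh images of anti-cycles, and shows this set is rational (Lemma~\ref{lem:nondegrat}). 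Ghouila-Houri is applied to the source graph only after first threading a short path through the at most two vertices of small in-degree. One further reduction you are missing: before any of this, the paper passes from $\Gamma$ to the sub-alphabet $\Sigma=\{\alpha:\src\alpha\notin\trg\alpha\}$ (Lemma~\ref{lem:DC}); this is precisely what ensures same-source letters may sit adjacently and makes the block construction valid.
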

Indeed, let $L\subseteq \Gamma^*$ be rational and Parikh-equivalent to $\acallgamma$.
By Claim~\ref{claim:ac}, $\ap b c a$ is Parikh-equivalent to  $L \minu  \tuple{a, \set{b,c}}$, which is rational by 
Lemma~\ref{lem:minus}.
Thus it suffices to prove Lemma~\ref{lem:C}.

We mostly focus on a special but central case of Lemma~\ref{lem:C}, namely
we restrict to the sub-alphabet
\[
\Sigma \ = \ \setof{\alpha \in \Gamma}{\src{\alpha}\notin \set{b,c}} \ \subseteq \ \Gamma.
\]
%(In Appendix~\ref{sec:ap} we consider back the original language $\acallgamma$, 
%thus completing the proof of Lemma~\ref{lem:C}.)
%
\begin{lemma}%[Appendix~\ref{sec:anti-app}]  
\label{lem:DC}
If the language $\acall = \acallgamma\cap \Sigma^*$ has rational Parikh image
then $\acallgamma$ has rational Parikh image too.
\end{lemma}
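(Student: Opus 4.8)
The plan is to reduce anti-cycles over the full alphabet $\Gamma$ to anti-cycles over the restricted alphabet $\Sigma$ by analysing the structure of the letters that are forbidden in $\Sigma$, namely those $\alpha\in\Gamma$ with $\src\alpha\in\set{b,c}$. There are, up to the equality type of the target and the identity of the source, only constantly many such ``bad'' letters: the source is either $b$ or $c$, and the target is a two-element subset of $\atoms$ that may or may not contain the other distinguished atom. The key observation is that in an anti-cycle $w\,\pair a{\set{b,c}}$, a bad letter $\alpha$ with $\src\alpha = b$ can only be preceded (in the cyclic order) by a letter whose target does \emph{not} contain $b$; so bad letters sit at very constrained positions, and the block of consecutive bad letters is short or can be rearranged. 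More precisely, I would argue that any anti-cycle over $\Gamma$ can be cut at the bad letters into a bounded number of factors, each of which is (after a single-letter prolongation) an anti-cycle over $\Sigma$, glued together by a bounded-length ``skeleton'' consisting of the bad letters themselves.

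First I would enumerate the finitely many orbits of bad letters and observe that along any anti-cycle the maximal runs of bad letters have bounded length: since two consecutive letters $\alpha_i\alpha_{i+1}$ in an anti-cycle must satisfy $\src{\alpha_{i+1}}\notin\trg{\alpha_i}$, and the source of a bad letter is $b$ or $c$, at most two bad letters can occur consecutively (their sources must be $b$ then $c$ or vice versa, forcing the targets), and in fact the combinatorics pins this down tightly. Next I would decompose an arbitrary anti-cycle $w\in\acallgamma$ as a cyclic concatenation $u_1 \beta_1 u_2 \beta_2 \cdots u_m \beta_m$ where each $\beta_j$ is a short block of bad letters and each $u_j$ is a (possibly empty) word over $\Sigma$, with $m$ bounded by the number of bad-letter orbits. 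Each $u_j$, being flanked by letters whose sources are forced to lie outside the respective targets, turns out to be an anti-path over $\Sigma$ between the appropriate parameters — i.e.\ $u_j\in\ap{b'_j}{c'_j}{a'_j}$ for parameters determined by the adjacent bad letters — hence by Claim~\ref{claim:ac}, $u_j$ prolonged by one letter is an element of $\acall = \acallgamma\cap\Sigma^*$. Conversely, every choice of a bounded skeleton of bad letters together with anti-cycles over $\Sigma$ filling the gaps yields an anti-cycle over $\Gamma$.

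This gives $\acallgamma$ as an orbit-finite union, indexed by the bounded skeleton data (the sequence of bad-letter orbits together with their concrete atoms, of which there are orbit-finitely many by Lemma~\ref{lem:b-o-f}), of languages each of which is a concatenation of constantly many copies of the restricted anti-cycle language $\acall$ (or rather of $\acall\minu\alpha$-type languages, handled via Lemma~\ref{lem:minus} and Claim~\ref{claim:ac}) interleaved with single bad letters. Since $\acall$ has rational Parikh image by hypothesis, each such language has rational Parikh image by closure under concatenation, and the orbit-finite union of these has rational Parikh image as well; this establishes that $\acallgamma$ has rational Parikh image. The Parikh-image formulation is important here: we do not need the decomposition to be a language-equality, only a Parikh-equivalence, which relaxes the bookkeeping on how the restricted anti-cycles are threaded through.

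The main obstacle I anticipate is making the decomposition at the bad letters precise: I must verify that each factor $u_j$ between consecutive bad blocks is genuinely an anti-path over $\Sigma$ with the \emph{correct} boundary parameters inherited from the neighbouring bad letters — in particular that the source/target disjointness conditions at the seams are exactly the anti-path conditions and nothing stronger — and that the reverse assembly never creates a spurious coincidence $\src w\in\trg w$ destroying the anti-cycle property. The cyclic nature of anti-cycles helps (we may rotate so that a bad block, if any, sits at a chosen position), but the case $m=0$ (no bad letters at all, so $w\in\Sigma^*$ already) and the degenerate cases where several $u_j$ are empty need to be checked separately. Handling the interaction of the two distinguished atoms $b,c$ with the targets of bad letters — deciding for each bad letter whether its target is $\set{b,c}$, $\set{b,x}$, $\set{c,x}$, or $\set{x,y}$ with $x,y$ fresh — is where the bulk of the routine-but-delicate case analysis lies.
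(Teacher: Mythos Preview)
Your proposal rests on a misreading of $\Sigma$. The atoms $b,c$ in the definition $\Sigma=\setof{\alpha\in\Gamma}{\src\alpha\notin\set{b,c}}$ are not fixed: following the paper's standing convention $\alpha=\tuple{a,\set{b,c}}$, the condition reads $\src\alpha\notin\trg\alpha$. So $\Sigma$ is equivariant, and a letter is ``bad'' when its source lies in its \emph{own} target --- there are no two distinguished atoms. (Lemma~\ref{lem:DC} concerns the anti-cycle language $\acallgamma$, which carries no parameters; the $a,b,c$ of Claim~\ref{claim:ac} and $\ap b c a$ play no role here.)

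With the correct reading both of your structural claims fail. Maximal runs of bad letters are unbounded: the word
$\tuple{a_1,\set{a_1,c_1}}\,\tuple{a_2,\set{a_2,c_2}}\cdots\tuple{a_n,\set{a_n,c_n}}$
with all $2n$ atoms pairwise distinct is an anti-path of arbitrary length consisting entirely of bad letters. Likewise the number $m$ of bad blocks in an anti-cycle is unbounded, since good and bad letters may alternate freely. Hence there is no orbit-finite set of ``skeletons'' to union over, and the whole decomposition collapses. The paper handles the unboundedness by a \emph{substitution} rather than a bounded decomposition: it shows that $\Par\acallgamma$ is obtained from $\Par\acall$ by applying (twice, to absorb cyclic wraparound) the map $\tuple{d,\set{e,f}}\mapsto K_{d\set{e,f}}$, where $K_{d\set{e,f}}$ is the rational language (via Lemma~\ref{lem:X}) of anti-paths with first source $d$, last target $\set{e,f}$, and all remaining sources in $\set{e,f}$. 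Because a substitution acts on every letter of a word in $\acall$ simultaneously, unboundedly many bad blocks are absorbed in a single application, and rationality follows from Lemma~\ref{lem:subst}.
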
 
\begin{lemma} \label{lem:D}
The language $\acall$ has rational Parikh image.
\end{lemma}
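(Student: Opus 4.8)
We must show that $\acall = \acallgamma \cap \Sigma^*$ has rational Parikh image, where $\Sigma$ consists of letters whose source lies outside the fixed pair $\set{b,c}$. The plan is to give a \emph{graph-theoretic} characterisation of which multisets $v$ over $\Sigma$ arise as $\Par w$ for an anti-cycle $w \in \acall$, and then to express the set of admissible $v$ by a rational expression. Think of a multiset $v : \Sigma \to \N$ as a finite labelled multigraph $\gr v$ on vertex set $\atoms$: each letter $\alpha = \tuple{a,\set{b',c'}}$ contributes, with multiplicity $v(\alpha)$, an ``edge'' from its source $a$ to each of the two target atoms $b',c'$ — more precisely a gadget recording that $a$ must be immediately followed, in the word, by a letter whose source avoids $\set{b',c'}$. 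Arranging the letters of $v$ into an anti-cycle is exactly the problem of finding a cyclic sequencing $\alpha_1 \ldots \alpha_n$ of the (multiset of) letters such that $\src{\alpha_{i+1}} \notin \trg{\alpha_i}$ cyclically. This is an Eulerian/Hamiltonian-type condition on $\gr v$, and the key is that such a sequencing exists iff certain \emph{local degree-balance and connectivity} conditions hold — conditions that, crucially, are invariant under the permutation group and expressible by counting.

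The main steps, in order, are as follows. First, I would set up the correspondence between words in $\Sigma^*$ and walks in an auxiliary directed graph whose nodes are atoms (or small equality-types of atoms relative to $\set{b,c}$), reducing ``$w$ is an anti-cycle'' to ``$w$ traces a closed walk avoiding a forbidden-successor relation''. Second — and this is where the hint in the introduction about Ghouila-Houri / K\"uhn--Osthus is meant to bite — I would invoke a necessary (and, in the regime we need, sufficient) degree condition for the existence of a Hamiltonian (or Eulerian-type) cycle in the associated digraph: once the multiplicities are large enough and the in/out-degrees are suitably balanced, the required cyclic arrangement exists. Third, I would translate the resulting finite list of linear (in)equalities and congruences on the vector $v$ into a rational expression: each ``feasible pattern'' of degrees contributes a base vector plus an orbit-finite set of period vectors (single letters, or short cycles of letters), and the orbit-finiteness of the index set follows from Lemma~\ref{lem:b-o-f} because all the data involved has bounded size. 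Finally, I would handle the boundedly many small/degenerate cases (short anti-cycles, anti-cycles touching few atoms) directly as orbit-finite languages, which are rational by definition.

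The hard part will be the second step: extracting from the Hamiltonian-cycle machinery a \emph{clean, permutation-invariant, finitely-described} feasibility criterion. The subtlety is that the forbidden-successor relation depends on the \emph{targets} $\trg{\alpha_i}$, which are two-element \emph{sets} of atoms, so the underlying digraph is not simply ``complete minus a matching'' but something whose structure varies with how many distinct atoms occur and how the target pairs overlap. One must argue that, after fixing the equality-type data of the relevant atoms (of which there are only finitely many possibilities once we work relative to $\set{b,c}$ and bound how many ``rare'' atoms appear), the remaining combinatorics collapses to a uniform degree/connectivity statement. I expect the proof to proceed by: (i) separating atoms into finitely many that occur ``few'' times and cofinitely many that occur ``many'' times or not at all; (ii) on the high-multiplicity part, using the Ghouila-Houri-type theorem to get the cyclic arrangement automatically once degree balance holds; and (iii) checking that the low-multiplicity exceptional part can be absorbed into a bounded base vector. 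Assembling (i)--(iii) into a single rational expression, closed under orbit-finite union over the finitely many equality-type patterns and the orbit-finite choices of exceptional atoms, yields rationality of $\Par \acall$.
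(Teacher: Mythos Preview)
Your high-level plan matches the paper's approach: a graph-theoretic characterisation of $\Par\acall$, a case split by ``size'', the Ghouila-Houri sufficient condition for a Hamiltonian cycle on the large cases, and a direct rational description of the small ones. So the strategy is right. But two things need correction or sharpening before this becomes a proof.

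First, you have misread the definition of $\Sigma$. It is \emph{not} the set of letters whose source avoids a fixed two-element set; rather $\Sigma = \setof{\alpha\in\Gamma}{\src\alpha \notin \trg\alpha}$, i.e.\ letters whose source is outside their \emph{own} target. There is no ambient fixed pair $\set{b,c}$ here. This matters: the whole point of passing from $\acallgamma$ to $\acall$ is to guarantee that in the associated graph every vertex has large out-degree (each source excludes at most two other atoms), which is what makes the Hamiltonian machinery bite.

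Second, your plan is vague exactly where the paper is concrete, and the Eulerian/congruence language is a red herring. The paper's graph is the \emph{source graph} $\gr v$: vertices are the distinct source atoms $\srcs v$, and there is an edge $(d,e)$ iff some $d$-sourced letter in $\dom v$ has $e\notin\trg{\cdot}$. An anti-cycle is built by finding a Hamiltonian cycle on $\gr v$ and then packing all $d$-sourced letters into a contiguous block for each vertex $d$ (the only constraint is on the \emph{last} letter of each block). The split is by \emph{order} $\ord v = \size{\srcs v}$, not by per-atom multiplicity: bounded order is handled directly as rational (finitely many sources means a classical regular language after fixing them), and for large order the paper proves that $v\in\Par\acall$ iff $v$ is \emph{non-degenerate}, meaning three simple local conditions on $\gr v$ (no vertex of in-degree $0$; no pair $d,e$ with $\inn d\cup\inn e\subseteq\set{d,e}$; and a multiplicity condition on predecessors). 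No parity or congruence conditions arise. Rationality of the non-degenerate vectors is then obtained not by enumerating equality types, but by a clean decomposition $\ndegordgre X n = \ndegordeq X n + \Par{\Sigma_X^*}$ indexed by the \emph{kernel} $X = \bigcap_{\alpha\in\dom v}\trg\alpha$ (a set of size $\leq 2$): once you are non-degenerate of order $n$ with kernel $X$, you may freely add any letters whose target contains $X$. Your steps (i)--(iii) gesture at this but do not isolate the non-degeneracy criterion or the kernel trick, and without them the ``finite list of linear (in)equalities and congruences'' you anticipate will not materialise.
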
 
%
%\noindent

\subsection{Proof of Lemma~\ref{lem:D}}

%\begin{proof}[]
%Let $a\in\atoms$ and $\set{b,c}\in\pow 2 \atoms$ be fixed throughout the proof.
%To simplify the presentation we assume $a\notin\set{b,c}$.

The number of different sources of letters appearing in a data vector $v : \Sigma \to \N$,
i.e., the size of the set
\begin{align} \label{eq:sources}
\srcs v \ = \ \setof{\src \alpha}{\alpha \in \dom v},
\end{align}
we denote by $\ord v$ and call \emph{the order} of $v$
(clearly, an atom can be the source of more than one letter in $\dom v$). 
The order of a data word $w\in\Sigma^*$ is defined naturally as $\ord w = \ord{\Par w}$.
We write $\acallpar {<n}$ (resp.~$\acallpar {\geq n}$) for the subsets of $\acall$ containing anti-cycles  
%We write $\ordleq \Sigma n$ (resp.~$\ordgre \Sigma n$) for the set of data vectors 
of order smaller than $n$ (resp.~at least $n$).
Anti-cycles of bounded order can be easily dealt with separately:
\begin{lemma}%[Appendix~\ref{sec:anti-app}]   
\label{lem:belown}
For every $n\in\N$, the language $\acallpar {<n}$
% the set of all anti-paths of order at most $n$ 
is rational.
\end{lemma}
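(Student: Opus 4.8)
The plan is to show that $\acallpar{<n}$, the set of anti-cycles over $\Sigma$ of order strictly less than $n$, is a \emph{rational} data language (not merely one with rational Parikh image) by exhibiting an explicit rational expression. The key observation is that an anti-cycle of order $<n$ uses at most $n-1$ distinct atoms as sources; call this set $S=\srces w$ with $\size S \le n-1$. First I would fix an abstract "template": a finite nonempty sequence $s_1 s_2 \ldots s_k$ of symbols from a $k$-element index set with $k<n$ (meant to name the distinct source atoms in order of first appearance), together with a function assigning to each position of the anti-cycle one of these indices as its source and an \emph{equality type} describing, for the target pair $\set{b_i,c_i}$, which of its two atoms (if any) coincide with previously-named source atoms and which are fresh. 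There are only finitely many such templates for each $k<n$, hence finitely many altogether.

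Next, for a fixed template I would turn it into a rational expression over $\Sigma$. Group the positions of the anti-cycle by their assigned source index; between two consecutive occurrences of the \emph{same} source $s_j$ we get an infix all of whose letters have source $s_j$, and this infix can be iterated (a Kleene star), since the anti-path constraint $a_{i+1}\notin\set{b_i,c_i}$ only ever relates a target to the \emph{next} source, and within such an infix the next source is again $s_j$, so the constraint is simply that $s_j$ avoids the target of the current letter --- a condition on a single letter. Thus each maximal same-source block becomes $\big(\bigcup_{\alpha}\alpha\big)^*$ ranging over letters $\alpha$ with $\src\alpha = s_j$ and $s_j\notin\trg\alpha$, intersected with the relevant equality-type constraints relative to the other named atoms. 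The transitions \emph{between} blocks with different sources, and the global anti-cycle wrap-around condition $\src w\notin\trg w$, impose only finitely many further equality/disequality constraints among the finitely many named atoms $s_1,\ldots,s_k$; for each consistent choice of these constraints one obtains an orbit-finite union (over tuples $\pair{s_1}{\ldots}\pair{}{s_k}$ of distinct atoms realising the template) of concatenations of the block expressions above. Since rational data languages are closed under concatenation, star, and orbit-finite unions by definition, each template yields a rational language, and $\acallpar{<n}$ is the finite union over all templates, hence rational.

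The main obstacle I anticipate is purely bookkeeping: correctly enumerating the finitely many "equality-type templates" and checking that the per-letter constraints inside a same-source block, together with the inter-block and wrap-around constraints, genuinely decompose as claimed --- in particular verifying that no constraint links a position to a position more than one step away in the cyclic order, so that the iterated blocks really are unconstrained internally beyond the single-letter condition. Once that decomposition is established the rationality is immediate from the closure properties stated in Section~\ref{sec:rat}. I do not expect any combinatorial difficulty beyond this, since the bound $n$ is fixed and everything is finite "up to naming atoms."
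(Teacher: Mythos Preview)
Your proposal has a real gap. You claim ``there are only finitely many such templates'', but a template in your sense records ``a function assigning to each position of the anti-cycle one of these indices as its source'', and anti-cycles have unbounded length. Even after collapsing into maximal same-source blocks as you do in the second paragraph, the \emph{sequence of block sources} is still unbounded: already with two distinct sources $a,b$ one obtains block-source sequences $a,b,a,b,\ldots$ of every finite length. So there are infinitely many templates, and the finite union you describe does not exhaust $\acallpar{<n}$. (Incidentally, your phrase ``between two consecutive occurrences of the same source $s_j$ we get an infix all of whose letters have source $s_j$'' is inverted; you presumably mean \emph{within} a maximal $s_j$-block.)

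The paper handles precisely this unboundedness by a substitution trick (its Lemma~\ref{lem:X}). For a fixed finite $X\subset\atoms$, the anti-path constraint at each position depends only on the pair $(\text{current source},\text{next source})$, which ranges over the finite set $(X\cup\set a)^2$. One takes the \emph{classical} regular language over this finite alphabet consisting of all consistent sequences $\tuple{d_1,d_2}\tuple{d_2,d_3}\cdots\tuple{d_n,d_{n+1}}$ with $d_1\cdots d_n$ in a prescribed regular source language and $d_{n+1}=a$, and then applies the equivariant substitution
\[
\tuple{d,e} \ \mapsto \ \bigcup_{\set{e',e''}\in\pow 2{\atoms-\set{e}}}\tuple{d,\set{e',e''}},
\]
which replaces each abstract pair by the orbit-finite set of letters with source $d$ and target avoiding $e$. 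By Lemma~\ref{lem:subst} the result is a rational data language. The unbounded block sequence is thus absorbed into ordinary regular-language iteration over a finite alphabet---exactly the step your argument is missing. Taking the source language $aX^*$ and union over $a\in X$ gives all anti-cycles with sources in $X$; finally $\acallpar{<n}$ is the orbit-finite union over all $X$ with $\size X < n$.
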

Therefore, in the rest of the proof we concentrate on anti-cycles or order at laest $n$, for a
sufficiently large $n\in\N$.

\para{Source graphs}
In the sequel we consider directed graphs without self-loops or parallel edges, but possibly containing 
tight two-vertex cycles.

%Given $X\subseteq \dom v$, the \emph{restriction} of $v$ to $X$ is the data vector that differs from $v$ by
%mapping letters $\alpha\notin X$ to $0$. 
%Guided by the crucial property of anti-paths that the source of every letter does not belong to the target of the preceding letter,
%we define $\precn d : \Sigma \to \N$ to denote
%the restriction of $v$ to 
%$$\setof{\alpha \in \dom v}{d\notin \trg \alpha, d \neq \src \alpha}.$$
%Thus $\precn d$ contains all letters of $v$ with source different than $d$ that can \emph{precede} a 
%$d$-sourced letter in an anti-path.
%
%For a data vector $v : \Sigma \to \N$ w

Let $v : \Sigma \to \N$ be a fixed data vector.
Guided by the crucial property of anti-paths that the source of every letter does not belong to the target of the preceding letter,
we define the directed graph $\gr v = (\srcs v, E_v)$, called \emph{source graph} induced by $v$:
let the vertices $\srcs v$ of $\gr v$ be the sources of all letters appearing in $v$, as defined in~\eqref{eq:sources},
and let $(d,e)\in E_v$ be an edge if, and only if
% $d = \src \alpha$ for some $\alpha \in \precn e$. Equivalenty, $(d,e)\in E_v$, for distinct atoms $d\neq e$, if
\[
%E_v \ = \ \setof{(d,e)}{
\prettyexists{\alpha \in \dom v}{d = \src \alpha, e\notin \trg \alpha}.
\]
Whenever $(d,e)\notin E_v$, for distinct atoms $d\neq e$, we say that
$d$ \emph{excludes} $e$ (or call $(d,e)$ an \emph{excluded} edge); 
equivalently, $e$ belongs to the target of every letter in 
$\dom v$ with source $d$:
\[
\prettyforall {\alpha \in \dom v}{\, d = \src \alpha \implies e\in \trg \alpha}.
\]
Note that an atom never excludes itself, due to restriction to $\Sigma$, and
that $\gr v$ depends only on the set $\dom v \subseteq \Sigma$ of letters appearing in $v$, 
and not on cardinalities of letters in $v$.

Let  $\inn e = \setof{d \in \srcs v}{(d,e)\in E_v}$ denote the set of in-neighbours of a vertex $e$,
and let $\ind e = \size{\inn e}$ denote the \emph{in-degree} of $e$.
Symmetrically we define 
out-neighbours $\outn e$ and \emph{out-degree} $\outd e$.
Clearly, an atom may exclude at most two other atoms, and hence
(let $n=\ord v$):
\begin{claim} \label{claim:gr}
$\outd d \geq n-3$ for every vertex $d\in\srcs v$.  % , where $n=\ord v$.
%Therefore the total number of edges is at least $n(n-3)$.
\end{claim}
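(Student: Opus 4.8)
\textbf{Proof plan for Claim~\ref{claim:gr}.} The statement is essentially a direct consequence of the definition of the source graph together with the structural restriction imposed by $\Sigma$. The plan is to fix a vertex $d\in\srcs v$ and count explicitly how many atoms $d$ can exclude.

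First I would recall that, by definition, $d$ excludes an atom $e\neq d$ exactly when $e\in\trg\alpha$ for \emph{every} letter $\alpha\in\dom v$ with $\src\alpha = d$. Since $d\in\srcs v$, there is at least one such letter $\alpha_0$, and its target $\trg{\alpha_0}$ is a two-element subset of $\atoms$. Any atom excluded by $d$ must lie in $\trg{\alpha_0}$ (it must be in the target of \emph{every} letter with source $d$, in particular $\alpha_0$), so $d$ excludes at most the two atoms in $\trg{\alpha_0}$. Moreover, the restriction to $\Sigma$ ensures $\src\alpha\notin\trg\alpha$ for every $\alpha\in\Sigma$, so $d\notin\trg{\alpha_0}$, which is consistent with the earlier remark that an atom never excludes itself; this is not strictly needed for the count but confirms there is no off-by-one subtlety at $d$ itself.

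Next I would translate the exclusion count into an out-degree bound. The out-neighbours of $d$ in $\gr v$ are precisely the atoms $e\in\srcs v$ with $e\neq d$ that $d$ does \emph{not} exclude. There are $|\srcs v| = \ord v = n$ vertices in total; removing $d$ itself leaves $n-1$ candidate out-neighbours; of these, at most $2$ are excluded. Hence $\outd d \ge (n-1) - 2 = n-3$, as claimed.

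There is essentially no obstacle here — the only thing to be careful about is that the two excluded atoms need not lie in $\srcs v$ at all, in which case $d$ has even larger out-degree; the bound $n-3$ is a safe lower bound regardless. So the whole argument is a one-paragraph counting observation; the inequality is stated with the weakest constant ($n-3$) precisely so that no case analysis on whether the excluded atoms are themselves sources is required.
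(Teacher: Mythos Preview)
Your proposal is correct and matches the paper's approach exactly: the paper simply states that an atom may exclude at most two other atoms (since any excluded atom must lie in the two-element target of some fixed $d$-sourced letter), and the bound $\outd d \geq n-3$ follows immediately. You have merely spelled out this one-line observation in full detail.
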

\begin{corollary} \label{cor:2n}
There are at most $2n$ excluded edges.
\end{corollary}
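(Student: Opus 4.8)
\textbf{Corollary~\ref{cor:2n}} — the plan.

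The claim to prove is that in the source graph $\gr v$ there are at most $2n$ excluded edges, where $n = \ord v = \size{\srcs v}$. By definition, $(d,e)$ with $d \neq e$ is an excluded edge exactly when $e$ lies in the target $\trg \alpha$ of \emph{every} letter $\alpha \in \dom v$ whose source is $d$; in particular, to witness that $d$ excludes $e$ we just need $e$ to be in the target of one fixed representative letter with source $d$ — but actually the exclusion must hold for all such letters simultaneously, so the set of atoms that $d$ excludes is the \emph{intersection} of the targets $\trg\alpha$ over all $\alpha \in \dom v$ with $\src\alpha = d$. Each such target is a two-element set, so the intersection has at most two elements. Hence each fixed source $d$ excludes at most $2$ atoms.

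The plan is therefore very short. First I would make precise the observation just stated: for each vertex $d \in \srcs v$, the set $\setof{e \in \atoms}{d \text{ excludes } e}$ is contained in $\trg\alpha$ for any single $\alpha \in \dom v$ with $\src\alpha = d$ (such an $\alpha$ exists because $d$ is a source of some letter appearing in $v$), and $\size{\trg\alpha} = 2$, so $d$ excludes at most two atoms. Equivalently this is exactly Claim~\ref{claim:gr}: $\outd d \geq \size{\srcs v} - 1 - 2 = n - 3$ counts the non-excluded out-edges, and the missing edges (self-loop excluded, plus excluded edges) number at most $1 + 2$, of which at most $2$ are genuine excluded edges to other vertices. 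Then I would simply sum over all $n = \size{\srcs v}$ vertices $d$: the total number of excluded edges is at most $\sum_{d \in \srcs v} 2 = 2n$.

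There is essentially no obstacle here; the corollary is an immediate counting consequence of the fact recorded in Claim~\ref{claim:gr} (or rather of the one-line reason behind that claim, namely that a letter's target has size exactly $2$). The only thing to be slightly careful about is bookkeeping: an excluded edge $(d,e)$ is charged to its source $d$, every source appears in the sum exactly once, and each source contributes at most $2$, giving the bound $2n$. If one prefers to cite Claim~\ref{claim:gr} directly: each vertex $d$ has in-plus-out degree in $\gr v$ equal to $2(n-1)$ minus (twice) the number of... — cleaner to just observe that the number of ordered non-edges $(d,e)$ with $d\neq e$ equals $\sum_{d}\bigl((n-1) - \outd d\bigr) \leq \sum_d 2 = 2n$ by Claim~\ref{claim:gr}, and excluded edges are exactly these non-edges. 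That completes the proof.
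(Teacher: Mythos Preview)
Your proposal is correct and is exactly the paper's intended argument: the corollary is an immediate count following the observation (used for Claim~\ref{claim:gr}) that each source $d$ excludes at most two atoms, so summing over the $n$ vertices gives at most $2n$ excluded edges. Your final formulation via $\sum_d\big((n-1)-\outd d\big)\leq 2n$ from Claim~\ref{claim:gr} is the cleanest way to phrase it.
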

%
%Note that $\ind d = 0$ if, and only if $\precn d = \emptyset$.
%
In the sequel we rely on Claim~\ref{claim:gr} and Corollary~\ref{cor:2n} according to
% the fact that only few edges (at most $2n$) are excluded from $\gr v$ and hence 
which $\gr v$ is not much different from the full directed clique.

For $A\subseteq \dom v$, let $\restr A v$ denote the restriction of $v$ to $A$:
$\restr A v(\alpha) \ = \  v(\alpha)$ if  $\alpha \in A$, and $\restr A v(\alpha) = 0$ otherwise.
In the proof of Lemma~\ref{lem:above} we transform cycles in $\gr v$ into anti-cycles, using the following lemma:
\begin{lemma} \label{lem:2ap}
For every simple cycle $\pi = a_1 a_2 \ldots a_n$ in $\gr v$ there exists an anti-cycle $w$
with $\Par w = \restr A  v$ where $A = \setof{\alpha\in\dom v}{\src \alpha \in \set{a_1, a_2, \ldots, a_n}}$.
%and $\src w = a_1$.
\end{lemma}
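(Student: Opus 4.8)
\textbf{Proof plan for Lemma~\ref{lem:2ap}.}
The plan is to process the letters of the restricted vector $\restr A v$ one source-vertex at a time, following the cyclic order dictated by the simple cycle $\pi = a_1 a_2 \ldots a_n$, and to emit, for each vertex $a_j$, all letters of $\dom v$ whose source is $a_j$ in some order that keeps the anti-path constraint alive at every join. Concretely I would build a word
$w = w_1 w_2 \ldots w_n$ where the block $w_j$ concatenates all letters $\alpha\in\dom v$ with $\src\alpha = a_j$, each repeated $v(\alpha)$ times; then $\Par w = \restr A v$ by construction, and $\src w = a_1$, so it only remains to check that $w$ is an anti-cycle, i.e.\ that at every boundary between consecutive letters the source of the successor does not lie in the target of the predecessor, including the wrap-around boundary from the last letter of $w_n$ to the first letter of $w_1$.

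The key observation making this work is that, within a single block $w_j$, all letters share the same source $a_j$, and $a_j$ is \emph{not} in the target of any such letter (here is where the restriction to $\Sigma$ is used: $\src\alpha\notin\trg\alpha$ since $\src\alpha=a_j$ and $\Sigma$ forbids $\src\alpha\in\{b,c\}$ — more precisely, within $\Sigma$ a letter never has its own source in its target because $\gr v$ has no self-loops, so $a_j\notin\trg\alpha$ for every $\alpha$ with source $a_j$). Hence every \emph{intra-block} boundary is automatically fine: the successor's source is again $a_j$, which lies in no predecessor's target. For the \emph{inter-block} boundaries — the last letter of $w_j$ followed by the first letter of $w_{j+1}$ (indices mod $n$) — the successor's source is $a_{j+1}$, and we need $a_{j+1}\notin\trg\alpha$ for the last letter $\alpha$ of $w_j$. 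This is exactly the statement that $(a_j,a_{j+1})$ is an edge of $\gr v$, witnessed by $\alpha$: but $(a_j,a_{j+1})\in E_v$ only says \emph{some} letter with source $a_j$ does not contain $a_{j+1}$ in its target, not that the particular last letter of $w_j$ does.

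So the main obstacle is the ordering \emph{within} each block: for each $j$ I must place, as the last letter of $w_j$, a letter $\alpha$ with $\src\alpha = a_j$ and $a_{j+1}\notin\trg\alpha$, which exists precisely because $(a_j,a_{j+1})\in E_v$ along the cycle $\pi$. The clean way to handle this is to fix, for each $j$, one such witness letter $\alpha_j$ (guaranteed by the edge of $\pi$), and to arrange block $w_j$ so that it \emph{starts} with one copy of $\alpha_j$ and \emph{ends} with one copy of $\alpha_j$ as well — possible as long as $v(\alpha_j)\geq 1$, which holds since $\alpha_j\in\dom v$; if $v(\alpha_j)=1$ we instead just make that single copy both the start and end of $w_j$. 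Then the inter-block boundary from $w_{j-1}$ to $w_j$ needs $a_j\notin\trg(\text{last letter of }w_{j-1}) = \trg(\alpha_{j-1})$, i.e.\ that $(a_{j-1},a_j)\in E_v$ — again an edge of $\pi$ — so I should choose $\alpha_{j-1}$ to simultaneously serve as the end-of-block witness for the $(a_{j-1},a_j)$ transition; since each block $w_j$ has a designated first letter (serving the incoming edge $(a_{j-1},a_j)$, wait — no) I reorganise: for each $j$ pick a witness $\beta_j$ with $\src\beta_j=a_j$ and $a_{j+1}\notin\trg\beta_j$ (exists as $(a_j,a_{j+1})\in E_v$), put $w_j = (\text{all letters with source }a_j\text{, each with its multiplicity, with one copy of }\beta_j\text{ placed last})$; then the boundary $w_j\to w_{j+1}$ is governed solely by $\trg\beta_j\not\ni a_{j+1}$, which holds, and intra-block boundaries hold as noted. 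The wrap-around boundary $w_n\to w_1$ is the edge $(a_n,a_1)$ of the simple cycle $\pi$, so $\beta_n$ witnesses $a_1\notin\trg\beta_n$, giving $\src w = a_1\notin\trg w = \trg\beta_n$ and thus $w\in\acallgamma$. The only genuine subtlety left is that every letter of $\dom v$ has its source among $\{a_1,\ldots,a_n\}$ (true by the definition of $A$) and that sources in $\pi$ are distinct (true, $\pi$ is a simple cycle), so the blocks partition $A$ and no letter is dropped or duplicated; writing this out carefully is routine.
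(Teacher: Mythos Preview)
Your proposal is correct and follows essentially the same approach as the paper: arrange the letters of $\restr A v$ into consecutive blocks by source along the cycle $\pi$, and within each block place last a witness $\beta_j$ (guaranteed by the edge $(a_j,a_{j+1})\in E_v$) whose target omits $a_{j+1}$. The paper's proof is just the terse version of your final reorganisation; your detour about placing a witness at both ends of a block is unnecessary (only the last letter of each block matters, since intra-block boundaries are automatic from $\alpha\in\Sigma\Rightarrow\src\alpha\notin\trg\alpha$), but you correctly abandon it.
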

\begin{proof}
We arrange the letters into an anti-path $w$ by taking first all $a_1$-sourced letters in a consecutive block, 
then all $a_2$-sourced ones in a consecutive block, etc.
The order of $a_i$-sourced letters inside a block (including repetitions of equal letters) 
is irrelevant as long as the last one, 
say $\alpha$, satisfies $a_{i+1} \notin \trg \alpha$ (where $n+1$ is identified cyclicly with $1$).
\end{proof}

In the proof of Lemma~\ref{lem:above} we also  use a sufficient condition for a directed graph to admit a Hamiltonian cycle:
\begin{theorem}[\cite{Ghouila-Houri}, cf.~ also Thm.~1 in~\cite{Kuhn-Osthus}] \label{thm:Hamilton}
Let $\G$ be a strongly connected directed graph with $n$ vertices such that for every two vertices
$d, d'$, $\ind{d} + \outd{d'} \geq n$.
Then $\G$ contains a Hamiltonian cycle.
\end{theorem}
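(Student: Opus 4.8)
The plan is to prove the theorem by the classical \emph{extremal (longest-cycle) method}, deriving a contradiction from a hypothetical longest cycle that is not Hamiltonian. Concretely, I would assume $\G$ has no Hamiltonian cycle and fix a longest directed cycle $C = u_1 u_2 \cdots u_t u_1$, so that $t < n$ and some vertex lies outside $C$. The only structural facts I will need are that $\G$ is strongly connected (which guarantees a \emph{detour}: a directed path leaving $C$, passing through vertices outside $C$, and returning to $C$) and the degree hypothesis, which I will apply precisely to the pair formed by the first and last vertex of such a detour.

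First I would record the \textbf{insertion principle}, which is the engine of the argument: if $x_1 x_2 \cdots x_p$ is a directed path whose vertices all lie outside $C$, and if $u_i \to x_1$ and $x_p \to u_j$ are arcs with $u_i, u_j \in C$, then splicing $x_1 \cdots x_p$ into $C$ in place of the forward arc from $u_i$ to $u_j$ yields a cycle of length $t - d + p$, where $d$ is the number of $C$-vertices strictly between $u_i$ and $u_j$ going forward. Maximality of $C$ forces $p \le d$, i.e. the forward distance from $u_i$ to $u_j$ along $C$ is at least $p+1$. Thus, writing $I = \setof{i}{u_i \to x_1}$ and $O = \setof{j}{x_p \to u_j}$ for the position sets of the relevant $C$-neighbours, any $i \in I$ and $j \in O$ are separated by forward distance $> p$ on the cyclic order $\Z_t$.

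The \textbf{counting step} then bounds $\ind{x_1} + \outd{x_p}$. Viewing $I$ and $O$ inside $\Z_t$, the separation condition says $O$ avoids the forbidden zone $I \oplus \set{0,1,\ldots,p}$, whose size is at least $\min(t, \size I + p)$; hence $\size I + \size O \le t - p$. Together with the localization observation that any in-neighbour of $x_1$ (resp.~out-neighbour of $x_p$) lying outside $C$ must sit on the path $x_1 \cdots x_p$ itself --- forced by choosing the detour with the largest possible number of internal vertices, so that it cannot be prolonged --- this yields $\ind{x_1} \le \size I + (p-1)$ and $\outd{x_p} \le \size O + (p-1)$, whence $\ind{x_1} + \outd{x_p} \le t + p - 2$. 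Since $C$ and the $p$ outside vertices are disjoint we have $t + p \le n$, so $\ind{x_1} + \outd{x_p} \le n-2$, contradicting the hypothesis $\ind{x_1} + \outd{x_p} \ge n$ applied to the pair $d = x_1$, $d' = x_p$.

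I expect the \textbf{main obstacle} to be the correct choice of the extremal detour so that the two ingredients of the counting step hold simultaneously: the endpoints $x_1, x_p$ must actually possess $C$-neighbours (so that $I, O$ are nonempty and strong connectivity is genuinely used), while at the same time the detour must be maximal enough that outside in-neighbours of $x_1$ and outside out-neighbours of $x_p$ are confined to the detour. Reconciling these --- essentially marrying a longest-cycle choice with a maximum-length detour through a single weakly-connected component of $\G$ minus $C$ --- together with the cyclic interval arithmetic for the forbidden-zone bound, is the delicate part; the remaining manipulations are routine. I note finally that taking $d = d'$ in the hypothesis already gives minimum total degree at least $n$, so the statement is an instance of the classical Ghouila--Houri theorem and may alternatively be invoked as a black box via the cited references.
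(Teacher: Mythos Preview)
The paper does not prove Theorem~\ref{thm:Hamilton} at all: it is stated with a citation to Ghouila--Houri and to K\"uhn--Osthus, and then used as a black box inside the proof of Lemma~\ref{lem:above}. Your final remark---that setting $d=d'$ recovers the Ghouila--Houri minimum-degree condition, so the statement may simply be invoked via the cited references---is precisely what the paper does. So at the level of comparison, your proposal goes strictly beyond the paper.

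As for the sketch itself, the strategy (longest cycle, maximal detour, insertion principle, counting of $C$-neighbours via a forbidden-zone argument in~$\Z_t$) is the classical one and is sound in outline. Two points would need tightening. First, your forbidden zone $I\oplus\{0,1,\ldots,p\}$ implicitly assumes $j\neq i$ for every $i\in I$, $j\in O$; but if $u_i\to x_1$ and $x_p\to u_i$, the spliced cycle has length $p+1\le t$, which does not contradict maximality of $C$. The fix is to use the zone $I\oplus\{1,\ldots,p\}$ instead, yielding $|I|+|O|\le t-p+1$ and hence $\ind{x_1}+\outd{x_p}\le n-1$, which is still a contradiction. Second, the claim that all outside in-neighbours of $x_1$ (and outside out-neighbours of $x_p$) lie on the chosen detour requires more than simply maximising the number of internal vertices; one typically works inside a single strong component of $\G-V(C)$, or argues via a carefully chosen maximal path, to make this localisation rigorous. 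You flag exactly this as the main obstacle, so you are aware of it; but the sketch as written does not yet discharge it.
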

The tool will be applicable due to the following observation:
\begin{lemma} \label{lem:sc}
For sufficiently large $n$, 
a directed graph with $n$ vertices such that $\ind d\geq 3$ and $\outd d \geq n-3$ for every vertex $d$,
is necessarily strongly connected.
\end{lemma}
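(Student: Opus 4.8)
The plan is to show that under the stated degree bounds the graph cannot be split into two parts with no edges crossing in one direction, which is exactly what fails for a graph that is not strongly connected. First I would recall that a directed graph fails to be strongly connected precisely when its vertex set can be partitioned as $V = X \uplus Y$ with $X, Y$ both nonempty and no edge going from $X$ to $Y$ (take $X$ to be any strongly connected component that is a sink in the condensation, or rather maximal under reachability, and $Y$ its complement; more carefully, pick a sink component $C$ of the condensation, set $Y$ = vertices reachable from $C$, including $C$, and $X$ = the rest — then no edge leaves $X$ into $Y$... I will phrase it as: there is a nonempty proper $Y\subseteq V$ with no edges from $V\setminus Y$ into $Y$, equivalently every in-neighbour of a vertex of $Y$ lies in $Y$). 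So suppose for contradiction that such a partition $V = X \uplus Y$ exists with $X,Y\neq\emptyset$ and no edge from $X$ to $Y$.

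The key counting step: fix any vertex $d\in Y$. All of its in-neighbours lie in $Y$ (since there is no $X\to Y$ edge), so $\ind d \le |Y|-1$, giving $|Y|\ge \ind d + 1 \ge 4$ by the hypothesis $\ind d \ge 3$. Symmetrically, fix any vertex $e\in X$: all of its out-neighbours lie in $X$ (again no $X\to Y$ edge), so $\outd e \le |X|-1$. But by hypothesis $\outd e \ge n-3$, hence $|X|\ge n-2$. Combining, $n = |X| + |Y| \ge (n-2) + 4 = n+2$, a contradiction. Hence no such partition exists and $\G$ is strongly connected.

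Actually the bound goes through for every $n\ge 1$ with this argument, so the phrase "for sufficiently large $n$" is only there for uniformity with the surrounding lemmas; I would simply note that the computation above already yields the contradiction $n\ge n+2$ with no size restriction, so any $n$ works. The only mild subtlety — and the one place I would be careful — is the clean statement that a non-strongly-connected directed graph on a nonempty vertex set admits a partition $V=X\uplus Y$ into nonempty parts with no edges from $X$ to $Y$: I would justify it by taking the condensation (DAG of strongly connected components), choosing a source component $S$ in that DAG, letting $Y$ be the union of all components reachable from some component other than those reachable from $S$ — cleaner: let $X = S$ and $Y = V\setminus S$; since $S$ is a source of the condensation, no edge of $\G$ enters $S$ from outside, which is the dual statement, and then just swap the roles of in/out-degree in the counting, or equivalently reverse all edges (the hypotheses are not symmetric in in/out, so I must pick the orientation that matches: take $X$ to be a \emph{sink} component of the condensation, so no edge leaves $X$; then $\outd e \le |X|-1$ for $e\in X$ gives $|X|\ge n-2$, and $\ind d\le |Y|-1$ for $d\in Y$ gives $|Y|\ge 4$, contradiction as before). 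That orientation bookkeeping is the entire "hard part," and it is routine.
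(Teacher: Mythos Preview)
Your argument is correct, and in fact cleaner than the paper's. Once you settle on the right orientation---take $X$ to be a sink component of the condensation, so that no edge leaves $X$---the two inequalities $|X|\ge n-2$ (from $\outd e\ge n-3$ for $e\in X$) and $|Y|\ge 4$ (from $\ind d\ge 3$ for $d\in Y$) immediately give the contradiction $n\ge n+2$. As you note, no largeness assumption on $n$ is actually needed.

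The paper takes a different, more laborious route: it bounds the total number of non-edges by $2n$ (each vertex misses at most two out-neighbours), and then rules out strongly connected components of each possible size $1$, $2$, and $3\le k\le n-3$ by separate counting arguments, each showing that such a component would force more than $2n$ missing edges. Your partition argument subsumes all of these cases at once and avoids the case split entirely; it also makes transparent why the asymmetry between the in-degree and out-degree hypotheses is exactly what is needed. The paper's approach does have the minor virtue of making the ``almost-clique'' structure of the graph explicit, which is thematically in line with how the source graph is used elsewhere, but as a proof of this lemma yours is strictly simpler.
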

\begin{proof} % [Proof of Lemma~\ref{lem:sc}]
Consider the decomposition of the graph into strongly connected components.
As the first step we observe that there may be no singleton components $\set{d}$.
Indeed, by the assumption we have $\ind d + \outd d \geq n$,
and hence $d$ forms a tight 2-vertex cycle with some other vertex $d'$.

In the sequel we use Corollary~\ref{cor:2n}.
As the second step we argue that for sufficiently large $n$,
a component $\set{d,e}$ of size $2$ is impossible
(and, in consequence, a component of size $n-2$ is impossible too).
Towards contradiction, suppose $\set{d,e}$ is a strongly connected component (hence
the two vertices form a tight cycle).
In consequence,
(a) the sets $V_d = \outn d - \set{e}$ and $V_e = \inn e - \set{d}$ are disjoint, and
(b) there is no edge from $V_d$ to $V_e \cup \set{d,e}$.  %, and (c) there is no edge from $V_d$ to $\set{d,e}$
As $\outd d\geq 3$ and $\ind e\geq 3$,
we have $\size{V_d}\geq n-4$ and $\size{V_e}\geq 2$.
By (a) we deduce $\size{V_d} = n-4$ and $\size {V_e} = 2$.
%which implies $\size{V - V_d} = 4$.
By (b), all $4(n-4)$ edges from $V_d$ to $V_e\cup\set{d,e}$ are excluded.
This is impossible as long as $4(n-4)>2n$.

Likewise one argues that 
there may be no component of size strictly between $2$ and $n-2$.
Indeed, supposing there is a component $C$ of size $k$, for $2 < k < n-2$, no vertex in $C$ may form a tight
cycle with other vertex outside of $C$, and hence at least $k(n-k)$ edges are excluded.
This is impossible as long as $k(n-k) > 2n$.
As $k(n-k)$ reaches its minimum for $k=3$ or $k=n-3$, there may be no component of size strictly
between $2$ and $n-2$ as long as
$
3(n-3) > 2n.
$
\end{proof}
%As $\outd d + \ind e \geq n$, we conclude that

\para{Non-degeneracy}
Let $\precn d = \setof{\alpha\in\dom v}{d\neq \src \alpha, d\notin \trg \alpha}$
denote the set of letters that can precede a $d$-sourced letter and have themselves source different than $d$.
A data vector $v : \Sigma \to \N$ is called \emph{non-degenerate} 
%(and \emph{degenerate} otherwise)
if the following conditions holds:
\begin{itemize}
\item[(1)] $\inn d \neq \emptyset$ for every $d\in \srcs v$, 
\item[(2)] $\inn d \cup \inn e \not\subseteq \set{d,e}$ for every non-equal $d,e\in \srcs v$, 
\item[(3)] $\size{\restr {\precn d \cup \precn e} v} \geq 2$ for every non-equal $d,e\in \srcs v$.
%\item[(3)] $\size{\precn d + \precn e} \geq 2$ \todooo{to corr} for every non-equal $d,e\in \srcs v$. 
\end{itemize}
(1) excludes vertices of in-degree $0$.
(2) excludes pairs of vertices $d,e$ with $\inn d = \set{e}$ and $\inn e = \set{d}$.
(3) excludes the case when there is only
one letter $\alpha\in\dom v$ that can precede $d$- or $e$-sourced letters,
and moreover $v(\alpha) =1$.
\begin{lemma} \label{lem:above}
For data vectors $v : \Sigma\to\N$ of sufficiently large order,
$v \in \Par {\acall}$ if, and only if $v$ is non-degenerate.
\end{lemma}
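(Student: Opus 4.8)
The plan is to prove the two implications of Lemma~\ref{lem:above} separately, with the harder direction being the construction of an anti-cycle realising a given non-degenerate $v$.

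\textbf{Necessity ($v\in\Par{\acall}\implies v$ non-degenerate).} Suppose $w\in\acall$ with $\Par w = v$ and $\ord v = n$ large. Write $w = \alpha_1\ldots\alpha_m$, read cyclically. For (1): the source of $\alpha_{i+1}$ lies outside $\trg{\alpha_i}$, so $\src{\alpha_i}$ is an in-neighbour of $\src{\alpha_{i+1}}$ in $\gr v$; hence every source that occurs (and every vertex of $\gr v$ does occur) has a predecessor, i.e.\ $\inn d\neq\emptyset$. For (2): if $\inn d\cup\inn e\subseteq\set{d,e}$ then every occurrence of a $d$- or $e$-sourced letter in $w$ is immediately preceded (cyclically) by a $d$- or $e$-sourced letter, so the $d$- and $e$-sourced letters form one contiguous cyclic block; since $n\geq 3$ there is some third source $d'$, whose occurrences are then all preceded by $d'$-or-other-sourced letters, but the block of $\set{d,e}$-sourced letters must be entered from outside — contradiction with $\inn d\cup\inn e\subseteq\set{d,e}$ (more carefully: the letter right before the block has source $\notin\set{d,e}$ yet is an in-neighbour of whichever of $d,e$ starts the block). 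For (3): the letter $\alpha_i$ immediately preceding the first letter of the $\set{d,e}$-block has $\src{\alpha_i}\notin\set{d,e}$ and, being a predecessor of a $d$- or $e$-sourced letter, satisfies $d\notin\trg{\alpha_i}$ or $e\notin\trg{\alpha_i}$ — in fact, because anti-paths only constrain source-vs-preceding-target, we get $\alpha_i\in\precn d\cup\precn e$; and since $n$ is large there are at least two sources outside $\set{d,e}$, so at least two distinct cyclic blocks among them, giving (at least) two not-necessarily-distinct such preceding letters whose multiplicities sum to $\geq 2$, hence $\size{\restr{\precn d\cup\precn e}v}\geq 2$. (The details here are the routine ones; the key point is that each structural failure forces a block-decomposition that cannot be cyclically closed.)

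\textbf{Sufficiency ($v$ non-degenerate $\implies v\in\Par{\acall}$).} This is the main obstacle and where Theorems and Lemmas of the section are used. The idea: build an auxiliary graph on $\srcs v$ whose Hamiltonian cycle yields, via Lemma~\ref{lem:2ap}, an anti-cycle with Parikh image $v$. By Claim~\ref{claim:gr} every vertex of $\gr v$ has out-degree $\geq n-3$, but in-degrees can be as small as $1$ or $2$, which is why $\gr v$ itself need not satisfy the Ghouila-Houri hypothesis. Non-degeneracy is exactly tailored to fix this: condition (1) gives $\ind d\geq 1$, and I would first argue that conditions (1)–(3) let us \emph{boost} low-in-degree vertices. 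Concretely, if $\ind d\leq 2$, then by Corollary~\ref{cor:2n} only $O(n)$ edges are excluded, so the missing in-edges of $d$ come from excluded edges $(e,d)$; but an excluded edge $(e,d)$ means $d\in\trg\alpha$ for every $\alpha\in\dom v$ with $\src\alpha = e$. Condition (3) ensures that for any pair $d,e$ there are at least two letters (counted with multiplicity) that may precede both, which lets us reroute: we duplicate / reorder source-blocks so that a vertex of tiny in-degree is visited adjacently to one of its few in-neighbours in a way compatible with the block structure, effectively contracting the handful of low-in-degree vertices into their neighbours and applying Lemma~\ref{lem:sc} and Theorem~\ref{thm:Hamilton} to the remaining graph (which has all in-degrees $\geq 3$ and out-degrees $\geq n-3$, hence is strongly connected and, since $\ind d+\outd{d'}\geq 3+(n-3)=n$, Hamiltonian). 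The Hamiltonian cycle on the contracted graph is then expanded back to a simple cycle through all of $\srcs v$; feeding that cycle to Lemma~\ref{lem:2ap} produces an anti-cycle $w$ with $\Par w = \restr A v = v$ (since $A = \dom v$), and $w\in\acall$ because every letter has source in $\Sigma$, i.e.\ source outside its own target.

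\textbf{Where the difficulty concentrates.} Everything hinges on handling the $O(n)$ vertices of in-degree $<3$: Theorem~\ref{thm:Hamilton} is useless for them directly, and naively contracting them can create the very parallel-edge / self-loop artefacts or break the ``source does not exclude itself'' invariant. I expect the real work is a careful combinatorial case analysis showing that conditions (2) and (3) suffice to merge each bad vertex into a block of a good neighbour while keeping the resulting multigraph simple, strongly connected, and satisfying the degree sum bound; then Lemma~\ref{lem:sc}, Theorem~\ref{thm:Hamilton} and Lemma~\ref{lem:2ap} finish the construction mechanically. The bounded-order anti-cycles needed to glue everything back into a statement about $\acall$ rather than just ``large-order'' vectors are already isolated in Lemma~\ref{lem:belown}, so they play no role here.
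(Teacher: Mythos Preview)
Your high-level strategy for sufficiency (isolate the low in-degree vertices, then invoke Ghouila--Houri on what remains) matches the paper's, but two points are off in ways that matter.

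First, you write ``the $O(n)$ vertices of in-degree $<3$''. In fact there are at most \emph{two}: a vertex with $\ind d<k$ has at least $n-k$ excluded in-edges, and by Corollary~\ref{cor:2n} there are only $2n$ excluded edges total, so for large $n$ at most two vertices can be this bad. The paper fixes $k=9$ and calls the two worst vertices $a_1,a_2$. This is not a cosmetic point: with only two exceptional vertices one can handle them by an explicit short path, whereas a genuine $O(n)$ collection would make your contraction scheme delicate.

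Second, the paper does not contract. It \emph{builds a short cycle $\pi$ through $a_1$ and $a_2$} (length $\leq 6$), converts $\pi$ via Lemma~\ref{lem:2ap} into an anti-cycle $\bar w$, then removes all letters of $\bar w$ from $v$ and inserts a single summary letter $\beta=\tuple{\src{\bar w},\trg{\bar w}}$. The resulting $v'$ has all in-degrees $\geq 3$ (the bad vertices are gone, and the others drop by at most $6$ from $\geq 9$), so Lemma~\ref{lem:sc} plus Theorem~\ref{thm:Hamilton} give a Hamiltonian cycle, hence an anti-cycle $w'$ with $\Par{w'}=v'$; finally one replaces the single occurrence of $\beta$ in $w'$ by $\bar w$. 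The construction of $\pi$ is where conditions (2) and (3) are actually spent: if $\inn{a_1}\cup\inn{a_2}$ has size $\geq 2$ a simple path works, but if $\inn{a_1}=\inn{a_2}=\set{d}$ then $\pi$ must visit $d$ \emph{twice} (so $\pi$ is not simple and Lemma~\ref{lem:2ap} needs a tweak: split the $d$-sourced letters into two blocks), and condition (3) is precisely what guarantees two suitable $d$-sourced letters $\alpha_1,\alpha_2$ for the two blocks. Your proposal does not locate this role of (3).

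A smaller issue: your necessity argument for (3) is not right. Having many sources outside $\set{d,e}$ does not force two $\set{d,e}$-blocks, and $\precn d\cup\precn e$ is not restricted to letters with source outside $\set{d,e}$. The paper's argument is simply that if (3) fails, a single letter would have to occupy two distinct predecessor positions in $w$.
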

\begin{proof}
Let $\gr v = (\srcs v, E_v)$ be the source graph and $n = \ord v$. % By assumption, $n > 18$.

The 'only if' implication is immediate for data vectors of order at least $3$.
Indeed, suppose $v = \Par w$ for an anti-cycle $w\in\acall$.
By the definition of anti-cycles, $\inn d \neq \emptyset$ for every $d\in\srcs v$ and hence (1) forcedly holds.
The other two conditions are easily shown by contradiction.
Indeed, if (2) fails for some $d,e\in\srcs v$ then every $d$- or $e$-sourced letter would be preceded in $w$ by a
$d$- or $e$-sourced one, which is impossible as long as $\ord v \geq 3$.
Finally, if (3) fails then the same letter $\alpha$ would have to precede
two different letters in $w$. 

%\medskip

For the 'if' implication, we assume that $v$ is non-de\-ge\-ne\-rate ((1)--(3) hold)
and prove that $v = \Par w$ for some $w\in \acall$.

Let $k = 9$.  % be an integer whose value will be revealed later. 
Due to Corollary~\ref{cor:2n} we can assume $n$ to be large enough so that:
\begin{claim} \label{claim:atmost2}
At most two atoms in $\srcs v$  have in-degree $< k$.
\end{claim}
\noindent
In other words, this means that there are no $3$ atoms excluded by at least $n-k$ vertices.
Therefore, relying on Corollary~\ref{cor:2n} it is enough to assume
$
3(n-k) \ > \ 2n,
$
i.e., $n> 3k$.

Let $a_1, a_2 \in V$ be the vertices with the smallest in-degrees.
%$$\ind {a_1} \leq  \ind {a_2} \leq \ldots \leq  \ind{a_n}.$$
By assumption, $\ind {a_1} \geq 1, \ind {a_2} \geq 1$, and by Claim~\ref{claim:atmost2} we have:
%Taking into account removal of at most $2$ vertices  and
%by Claim~\ref{claim:atmost2}
%%
%As the total number of excluded edges is at most $2n$, at most $\frac 2 3 n$ vertices exclude $a_3$, and hence
%%
%\begin{claim}
%$\ind {a_3} \geq \frac 1 3 n -1$.
%\end{claim}
%%
%\noindent
%In the sequel we assume that $n$ is large enough
%(i.e., $n\geq 24$) so that $\ind {a_3}  \geq 7$.
%We thus have:
%
\begin{claim} \label{claim:ind}
Every $d\in \srcs v - \set{a_1, a_2}$ satisfies $\ind d \geq k$.
\end{claim}

%We use a pointwise order $\sqsubseteq$ on data vectors.
%We construct an anti-path $\bar w$ such that $\bar v = \Par {\bar w} \sqsubseteq v$ and such that
%$\ind {\src {\bar w}} \geq k$ and 
%all vertices $d$ not appearing as a source in $\bar w$ satisfy $\ind d \geq k$.

We construct a cycle $\pi$ in $\gr v$ such that ($\circ$) its first vertex $d$, as well as
vertices $d$ not contained in $\pi$, satisfy $\ind d \geq k$.
Due to (1), it suffices to consider the following cases:

\parainproof{Case 1. $\size{\inn {a_1} \cup \inn {a_2}} \geq 2$}

Relying on (1), choose in $\srcs v-\set{a_1,a_2}$ two distinct atoms $d \neq d'$ with $d \in \inn{a_1}$ and $d' \in \inn{a_2}$.
Due to (2) the atoms can be chosen so that $d \neq a_2$ or $d' \neq a_1$.
By symmetry we assume w.l.o.g.~that $d \neq a_2$.
If $d' = a_1$ we take the following simple path $\pi$ in $\gr v$ satisfying ($\circ$):
\begin{align*} %\label{eq:path11}
\xymatrix{
d \ar[r] & a_1 \ar[r] & a_2
}
\end{align*}
Otherwise, 
suppose $d' \neq a_1$ either.
By Claim~\ref{claim:ind}, $\ind {d}\geq k$ and $\ind {d'} \geq k$.
Choose in $V-\set{d,a_1,d',a_2}$ any atom $e$  with $e \in \inn{d'} \cap \outn {a_1}$
(since $\ind {d'}\geq k$, such $e$ exists as $a_1$ excludes at most two atoms, as long as $k\geq 7$). 
This yields the following simple path $\pi$ in $\gr v$ satisfying ($\circ$):
\begin{align*} % \label{eq:path12}
\xymatrix{
d \ar[r] & a_1 \ar[r] & e \ar[r] & d' \ar[r] & a_2
}
\end{align*}
 
\parainproof{Case 2. $\inn {a_1} = \inn {a_2} = \set{d}$ for some $d\in \srcs v{-}\set{a_1,a_2}$}

Take some two letters $\alpha_1, \alpha_2$ appearing in $v$ such that
$a_1\notin\trg{\alpha_1}$ and $a_2\notin\trg{\alpha_2}$.
Due to (3) we can assume that either $\alpha_1\neq\alpha_2$, or
$\alpha_1 = \alpha_2$ but $v(\alpha_1) \geq 2$ (their cardinality in $v$ is at least $2$).
Note that $\src {\alpha_1} = \src{\alpha_2} = d$,
and by Claim~\ref{claim:ind}, $\ind d \geq k$.
Choose in $V-\set{d,a_1,a_2}$ any atom $e$  with $e \in \inn{d} \cap \outn {a_1}$
(similarly as before, such $e$ exists as long as $k\geq 6$). 
This yields the non-simple path $\pi$
 in $\gr v$ satisfying ($\circ$):
\begin{align*} % \label{eq:path2}
\xymatrix{
d \ar[r] & a_1 \ar[r] & e \ar[r] & d \ar[r] & a_2
} 
\end{align*}
%
%\medskip

\medskip
We have thus constructed a path $\pi$ from $d$ to $a_2$.
If $a_2 \notin \inn d$, append at the end of $\pi$ any vertex $c$ such that $c\in \outn {a_2} \cap \inn d$.
As before, such a vertex exists since $a_2$ excludes at most $2$ atoms and $\ind d \geq k$,
as long as $k \geq 8$.
Therefore the last vertex $c$ of $\pi$ satisfies $c\in\inn d$, which means that $\pi$ is a cycle
as required.

\medskip

In Case 1 we transform $\pi$, using Lemma~\ref{lem:2ap}, into an anti-cycle $\bar w$.
In Case 2 we proceed similarly, except that the vertex $d$ appears twice in $\pi$; this exception is treated by splitting
all $d$-sourced letters into two disjoint blocks (cf.~the proof of Lemma~\ref{lem:2ap}),
containing $\alpha_1$ and $\alpha_2$, respectively.

We now remove, intuitively speaking, the anti-cycle $\bar w$ from $v$ thus obtaining a smaller data vector
$v'$ to which we apply  Theorem~\ref{thm:Hamilton} and Lemma~\ref{lem:2ap}.
%Let $d = \src{\bar w}$ and let $\tuple{a_2, \set{e,e'}}$ be the last (necessarily $a_2$-sourced) letter  in $\bar w$.
We remove from $v$ all letters appearing in $\bar w$, and add a single letter 
$\beta = \tuple{\src{\bar w}, \trg{\bar w}}\in\Sigma$.
This yields a data vector $v'$.
As the length of $\pi$ is at most $6$,
the in-degree of a node $e$ in the graph $\gr {v'}$ may  be smaller by at most $6$ than
in the graph $\gr v$.
Thus $\ind e \geq 3$ in $\gr {v'}$ as $k\geq 9$.
Moreover $\outd e \geq n'-3$ in $\gr {v'}$, where $n'$ is the number of nodes of $\gr {v'}$, by Corollary~\ref{cor:2n}.
Therefore the graph $\gr {v'}$, assuming $n$ to be sufficiently large,
 satisfies assumptions of Lemma~\ref{lem:sc}, by which
$\gr {v'}$ is strongly connected.
In consequence, $\gr {v'}$
satisfies assumptions of Theorem~\ref{thm:Hamilton}, by which we derive a Hamiltonian cycle $\C$ in $\G$.
The Hamiltonian cycle is turned, using Lemma~\ref{lem:2ap}, into an anti-cycle in $w'$
with $\Par {w'} = v'$. Finally, replacing the letter $\beta$ in $w'$ by  $\bar w$, yields
an anti-cycle $w$ with $\Par w = v$, as required.
\end{proof}

Let $\ndegall$ denote the set of all non-degenerate data vectors,
and $\ndegn n = \setof{v\in\ndegall}{\ord v\geq n}$.
In these terms, Lemma~\ref{lem:above} claims $\ndegn n = \Par{\acallpar{\geq n}}$
for sufficiently large $n$.
%
%Relying on Claim~\ref{claim:ac}, according to which
%\[
%\Par{\kl b c a}\ = \ \setof{v:\Sigma\to\N}{v + \pair a {\set{b,c}} \in \Par \acall},
%\]
%and on Lemma~\ref{lem:above}, we define:
%\[
%\ndegall \  = \ \setof{v: \Sigma\to\N}{v + \tuple{a, {\set{b,c}}} \text{ is non-degenerate}}
%\]
%and prove:
%
\begin{lemma}%[Appendix~\ref{sec:anti-app}]  
\label{lem:nondegrat}
$\ndegn n$ is rational, for sufficiently large $n\in\N$.
\end{lemma}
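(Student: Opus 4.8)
The plan is to realise $\ndegn n$ as an orbit-finite union of rational sets, stratifying a non-degenerate vector by a bounded invariant that records everything relevant to non-degeneracy. The starting point is that, for $v$ of large order, the three conditions are \emph{local}: by Claim~\ref{claim:atmost2} at most two atoms have small in-degree (below a fixed constant $k$) -- call this set $W$ -- and all three conditions concern only $W$. Indeed, if (2) fails for a pair $d,e$, i.e.\ $\inn d\cup\inn e\subseteq\{d,e\}$, then (since $\inn d$ omits $d$) both $\ind d\le 1$ and $\ind e\le 1$, so $d,e\in W$; if (3) fails for $d,e$, then $\size{\restr{\precn d\cup\precn e} v}\le 1$ forces $\precn d$, hence $\inn d$, to be tiny, so again $d,e\in W$; and (1) fails only at an atom of in-degree $0$, which lies in $W$. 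Thus, for $\ord v\ge n$, non-degeneracy of $v$ amounts to the existence of bounded witnesses: one for $\inn d\neq\emptyset$ per $d\in W$, and, for the pair in $W$, one for $\inn d\cup\inn e\not\subseteq\{d,e\}$ and one (two distinct letters, or one letter with multiplicity two) for $\size{\restr{\precn d\cup\precn e} v}\ge 2$.

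Accordingly I would fix an orbit-finite set $T$ of \emph{cores}: a core $\tau$ is a bounded sub-multiset carrying $W$, a bounded family of such witness letters, and enough further letters to force the order to be at least $n$ -- the last is achievable with one orbit's worth of extra data because $\pow n \atoms$ is a single orbit. I would then design a notion of \emph{innocuous remainder} so that $\ndegn n=\bigcup_{\tau\text{ admissible}}\big(\tau+G_\tau^{*}\big)$, where $G_\tau$ is a fixed orbit-finite repertoire of blocks whose atoms avoid those of $\tau$: adding such a block only introduces a source that points to, and is pointed to by, essentially every vertex, and that excludes nothing outside a set already frozen in $\tau$, so $\tau+G_\tau^{*}$ stays inside $\ndegn n$. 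Since each $\tau+G_\tau^{*}$ is a sum and star of orbit-finite sets, hence rational, and $T$ is orbit-finite, $\ndegn n$ is an orbit-finite union of rational sets, and therefore rational (one may route the last step through the closure properties or Lemma~\ref{lem:subst}).

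The hard part -- and the reason the lemma is not immediate -- is choosing the cores and the repertoires so that the decomposition is both possible and safe. For \emph{possible}: every $v\in\ndegn n$ must be peeled into a bounded core plus an innocuous remainder, so one has to show that all atom-coincidences of $v$ outside a bounded ``essential'' part can be renamed to fresh atoms without leaving $\ndegn n$ and without changing the applicable core. Here Corollary~\ref{cor:2n} is what bounds the essential part: the only coincidences that genuinely affect the source graph are excluded edges, and there are at most $2n$ of them. For \emph{safe}: $G_\tau$ must be pinned down so that no element of $G_\tau^{*}$, no matter how its atoms are chosen or how many summands it has, can create a third small-in-degree atom or destroy a witness -- each such event would itself require $\Omega(n)$ excluded edges (by Corollary~\ref{cor:2n} and Claim~\ref{claim:gr}) and so cannot be produced by blocks that each exclude $O(1)$ atoms and whose atoms avoid those of $\tau$. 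Since we only need the statement for large $n$, the small-order regime (anti-cycles of order below $n$) can be ignored, being already covered by Lemma~\ref{lem:belown}.
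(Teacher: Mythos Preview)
Your high-level shape is right: one wants a decomposition
\[
\ndegn n \ = \ \bigcup_{\tau\in T}\big(\tau + G_\tau^{*}\big)
\]
with $T$ orbit-finite and each $G_\tau$ orbit-finite. Your localisation of conditions (1)--(3) to the at-most-two vertices of small in-degree is also correct. The gap is in the ``possible'' direction of your decomposition, caused by your choice of $G_\tau$ as blocks whose atoms \emph{avoid} those of $\tau$. A non-degenerate $v$ need not split as $\tau + g$ with atom-disjoint parts: take any $v$ whose targets all contain a fixed atom $a$ (this happens, and is exactly the case $\kerne v\neq\emptyset$). Then every single letter of $v$ has $a$ in its support, so no letter can be moved into an atom-disjoint remainder $g$; your core $\tau$ would have to absorb \emph{all} of $v$, which is not bounded. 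Your attempted fix (``rename atom-coincidences outside an essential part to fresh atoms'') produces a different vector $v'$, not a decomposition of $v$ itself, so it does not establish $v\in\tau+G_\tau^{*}$.

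The paper resolves this by choosing a different invariant: the \emph{kernel} $\kerne v=\bigcap_{\alpha\in\dom v}\trg\alpha$, a set of at most two atoms. Stratifying by $X=\kerne v$ and taking $G_X=\Sigma_X:=\setof{\alpha\in\Sigma}{X\subseteq\trg\alpha}$ gives the clean identity
\[
\ndegordgre X n \ = \ \ndegordeq X n \ + \ \Par{\Sigma_X^{\,*}},
\]
where $\ndegordeq X n$ (non-degenerate, kernel $X$, order exactly $n$) is orbit-finite. The point is that every letter of $v$ already lies in $\Sigma_X$, so the ``possible'' direction becomes a one-letter-at-a-time removal argument: one shows that any $v\in\ndegn{n+1}$ admits a letter whose removal preserves both the kernel and non-degeneracy (here your observation about at most two small-in-degree vertices reappears, to bound how many sources are ``dangerous'' to remove). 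The ``safe'' direction is equally direct: adding a letter from $\Sigma_X$ preserves the kernel by construction, and a short case analysis shows it preserves (1)--(3). So the right repertoire $G_\tau$ is not ``fresh-atom blocks'' but ``kernel-preserving letters'', which is exactly large enough to cover every letter of every $v$ with kernel $X$.
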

For $n\in\N$ sufficiently large for Lemmas~\ref{lem:above} and~\ref{lem:nondegrat} to hold,
%relying on Lemma~\ref{lem:above} 
we decompose the Parikh image of anti-cycles into
\[
\Par {\acall} \ = \ \Par{\acallpar{<n}}  \ \cup \ \ndegn n,
%\Par{\klordgre n b c a}
% \ndegordgre n b c a 
\]
both of them rational by Lemmas~\ref{lem:belown} and Lemma~\ref{lem:nondegrat}, respectively.
Lemma~\ref{lem:D} is thus proved.

%
%\end{proof}

% !TEX root = main.tex

\section{Final remarks}
\label{sec:conc}

We have shown that Parikh images of languages of one-register automata are not semi-linear in general, but are rational;
and likewise for one-register context-free languages.
As a corollary of Theorem~\ref{thm:1G} we obtain an analog of Parikh's theorem mentioned in the introduction:
one-register context-free grammars are 
Parikh-equivalent to register automata (but not to one-register ones).
Indeed, %by induction on derivation one shows that 
every rational set of data vectors is the Parikh image of some register automaton.

%derivation trees of width bounded by $n$ can be simulated by automata with ${\cal O}(n)$ registers.

We conjecture that the restriction to one register can be dropped, and that general register context-free grammars
have rational Parikh images and are Parikh-equivalent to register automata;
our present proof techniques do not allow however to tackle the general case.
On the other hand our proof method routinely (but tediously) adapts to  \kCFG 1 of any arity,
but at the price of considering
anti-paths over a larger alphabet $\atoms \times \pow n \atoms$, where $n$ is the largest arity of a \kCFG 1.
%We postpone this tedious generalisation step to the extended version.

Besides dropping one-register restriction,
we envisage several potential directions of generalisation: 
richer input alphabets, % than $H\times\atoms$, 
more structured atoms, etc. 
As future work we leave also investigation of algorithmic problems on rational sets, 
like testing equality of such sets.
Finally, we hope to develop a
general theory of rational sets of data vectors, e.g., study closure properties, strictness of the star-height hierarchy, or logical characterisations.

\section*{Acknowledgment}
The authors would like to thank the anonymous reviewers for helpful remarks and suggestions.

\newpage

% trigger a \newpage just before the given reference
% number - used to balance the columns on the last page
% adjust value as needed - may need to be readjusted if
% the document is modified later
%\IEEEtriggeratref{8}
% The "triggered" command can be changed if desired:
%\IEEEtriggercmd{\enlargethispage{-5in}}

% references section

% can use a bibliography generated by BibTeX as a .bbl file
% BibTeX documentation can be easily obtained at:
% http://mirror.ctan.org/biblio/bibtex/contrib/doc/
% The IEEEtran BibTeX style support page is at:
% http://www.michaelshell.org/tex/ieeetran/bibtex/
\bibliographystyle{IEEEtran}
% argument is your BibTeX string definitions and bibliography database(s)
\bibliography{citat}

\newpage 

\appendix

% !TEX root = main.tex 

\subsection{Missing items in Section~\ref{sec:pre}}  \label{sec:pre-app}

\begin{proof}[\bf Proof of Lemma~\ref{lem:b-o-f}]
Fix an orbit-finite set $\Sigma$.
The 'only if' implication is immediate, as the length (or size) is invariant inside an orbit.
Towards the 'if' implication for data languages, we observe that the set $\Sigma^n$ of words of length $n$ is orbit-finite, 
for every $n\in\N$,
as Cartesian products preserve orbit-finiteness.
Therefore a language $X\subseteq \Sigma^*$ satisfying $\size v \leq n$ for $v\in X$, 
is a subset of a finite union of orbit-finite sets and hence orbit-finite itself.
In consequence, $\Par X$ is is also orbit-finite, as the image of $X$ under an equivariant function,
which proves the claim for sets of data vectors.
\end{proof}

\medskip

\subsection{Missing items in Section~\ref{sec:rat}}  \label{sec:rat-app}

\begin{lemma} \label{lem:L1}
The language $L_1$ from Example~\ref{ex:L1L2} is not rational.
\end{lemma}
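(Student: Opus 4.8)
The plan is to show that $L_1$ is not rational by a pumping-style argument that exploits a key structural feature of rational data languages: every rational language over a fixed orbit-finite alphabet uses only finitely many atoms in its ``syntactic skeleton'', in the sense that there is a bound $m$ depending only on the rational expression such that any sufficiently long word in the language contains a factor which can be ``iterated'' (as in Kleene star) using only atoms from a fixed finite set. Concretely, I would first isolate the following normal-form observation: if $L \subseteq \atoms^*$ is rational, there is a finite set $S \subseteq \atoms$ (the union of supports occurring in the rational expression) and a constant $m \in \N$ such that every $w \in L$ with $|w| > m$ factorises as $w = u\,x\,z$ with $u x^k z \in L$ for all $k \geq 0$, where moreover the infix $x$ is nonempty. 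This is the exact analogue of the classical pumping lemma, proved by induction on the structure of the rational expression; the orbit-finite unions contribute only finitely many ``branches'' at the top level, and the crucial point is that the starred subexpression that supplies the pump is a \emph{fixed} rational language $L'$, so the pumped block $x \in (L')^*$ does not depend on $w$ beyond its membership in $L'$.

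\textbf{Applying the pump to $L_1$.} Suppose toward a contradiction that $L_1$ is rational, and fix $S$ and $m$ as above. Take any word $w = a_1 a_2 \ldots a_n \in L_1$ with $n > m$ and with all letters \emph{distinct} and \emph{none of them in $S$} (possible since $\atoms$ is infinite). Then $w = u x z$ with $u x^k z \in L_1$ for all $k$, and $x$ nonempty. Consider $u x^2 z$. Since $x$ is a contiguous nonempty block of $w$, and all letters of $w$ are distinct, the concatenation $x x$ places the last letter of $x$ immediately before the first letter of $x$ — i.e., two equal consecutive letters appear in $u x^2 z$ unless $x$ is a single letter, in which case $xx$ already has two equal consecutive letters. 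Either way $u x^2 z \notin L_1$, contradiction. I would also double-check the boundary junctions $u$–$x$ and $x$–$z$: for $k \geq 2$ the internal junction $x$–$x$ already breaks the constraint, and for $k=0$ the junction $u$–$z$ could accidentally be fine, so the argument uses only $k = 2$.

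\textbf{Main obstacle.} The routine part is the contradiction once the pumping lemma is in hand; the real work is proving the pumping lemma for rational data languages, since rational expressions here allow orbit-finite (hence infinite) unions, and one must be careful that the starred subexpression supplying the iterated block is genuinely fixed and independent of the chosen long word. The cleanest route is probably to observe that a rational data language is recognised by an orbit-finite automaton (an orbit-finite NFA obtained by the standard Thompson-style construction, which preserves orbit-finiteness of the state space because concatenation, star, and orbit-finite unions all preserve orbit-finiteness), and then run the usual finite-automaton pumping argument: a run on a word longer than the number of orbits of states must revisit states in the same orbit, and — using the one extra subtlety that repeated \emph{states} rather than repeated orbits are what one needs — one extracts a genuine loop on a fixed cycle of configurations, whose label can be iterated. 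Getting from ``same orbit'' to ``genuinely iterable loop with bounded atom content'' is the one delicate point, and it is exactly where the finiteness of the support set $S$ enters; everything else is bookkeeping.
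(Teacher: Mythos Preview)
Your contradiction step is broken. You claim that in $u x^2 z$ the junction between the two copies of $x$ produces two equal consecutive letters, but this is false whenever $|x| \geq 2$: the letter at that junction is the \emph{last} letter of $x$ followed by the \emph{first} letter of $x$, and since you chose $w$ with all letters distinct, these two atoms are different. In fact, for such $w$ and any nonempty factor $x$ with $|x|\geq 2$, the word $u x^k z$ lies in $L_1$ for \emph{every} $k\geq 0$: all internal $x$--$x$ junctions pair the last and first letters of $x$ (distinct), the $u$--$x$ and $x$--$z$ junctions are the same as in $w$, and for $k=0$ the new $u$--$z$ junction also pairs two distinct atoms of $w$. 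So even if your pumping lemma is correct (and the inductive sketch you give is essentially fine), it simply does not separate $L_1$ from rational languages: $L_1$ is closed under this kind of iteration of all-distinct blocks. Your argument survives only in the case $|x|=1$, which the pumping lemma does not guarantee.

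The paper's proof avoids pumping entirely and instead exploits \emph{equivariance}. One still locates a star subexpression $R'$ that must be iterated unboundedly to produce words with all-distinct letters, and picks a block $u_i$ generated by $R'$ whose atoms are disjoint from $\supp{R'}$ and which is adjacent to an atom $a\notin\supp{R'}$. Then one applies the transposition swapping $a$ with the first atom $a'$ of $u_i$: since both atoms are outside $\supp{R'}$, the swapped block is still generated by $R'$, hence the modified word is still generated by $R$, but now it has two consecutive occurrences of $a$. The key idea you are missing is that the leverage comes from permuting atoms outside the support, not from repeating a block.
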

\begin{proof}
Indeed, towards contradiction suppose $L_1$ is rational, and hence generated by a rational expression $R$.
Consider the sublanguage $L \subset L_1$ containing words in which all atoms are different. 
The language $L$ is orbit-infinite and hence it cannot be generated without star;
indeed, concatenation and orbit-finite sums preserve orbit-finiteness of languages.
Therefore, there must be a star subexpression $R'$ of $R$ such that the number of iterations of $R'$ is unbounded
in generation of words in $L$.
In other words, for every $n\in\N$ there is a word $w\in L$ whose some infix $u$ is generated by at least $n$ iterations of $R'$.
Thus $w = w' u w''$, the infix $u$ splits into $u = u_1 \ldots u_n$, and each of factors $u_i$ is generated by $R'$.
Choose $n$ sufficiently large, namely $n > 2\cdot \size{\supp{R'}}$
(considering union operations as atom-binding constructs, the support of a rational expression $R'$ consists of 
those atoms appearing in $R'$ which are not bounded by any union).
As no atom repeats twice in words in $L$, some of words $u_i$ is fresh for $R'$, i.e., $\supp u \cap \supp {R'} = \emptyset$,
and $u_i$ is either preceded or succeeded in $w$ by an atom $a\notin \supp{R'}$. 
Consider w.l.o.g.~the first case, and let $a'$ be the first atom in $u_i$. Necessarily $a\neq a'$.
As $R'$ is invariant under the swap  $a' \leftrightarrow a$, applying this swap to $u_i$ yields a word $u'$ still generated
by $R'$. Replacing $u_i$ by $u'$ in $w$ yields a word $w'$ still generated by $R$, but $w'\notin L_1$ as it contains
two consecutive atoms $a$.
The contradiction completes the proof.
%\todooo{is this proof correct?}
\end{proof}

\medskip

\subsection{Missing items in Section~\ref{sec:sl}}  \label{sec:sl-app}

\begin{proof}[\bf Proof of Proposition~\ref{prop:sh1}]
Every semilinear set is, by definition, a rational set of star-heigth at most $1$.
For the converse inclusion we use a distributive law of addition over orbit finite unions:
\[
\bigcup_{i\in I} L_i \quad + \quad 
\bigcup_{j\in J} K_i \quad = \quad 
\bigcup_{\pair i j \in I\times J} L_i + K_j.
\]
Note that the Cartesian product $I\times J$ of orbit-finite sets $I$ and $J$ is necessarily orbit-finite
(cf.~\cite[Sect.~3]{atombook}).

Consider a rational set $X$ of data vectors of star-height $h\leq 1$.
If $h = 0$, by the distributive law the set $X$ is orbit-finite and hence vacuously semi-linear.
If $h=1$, by the distributive law we similarly deduce that, for every star subexpression $Y^*$,
the set $Y$ is orbit-finite; and moreover, the set $X$ is an orbit-finite union 
\begin{align} \label{eq:ofu}
X \quad = \quad \bigcup_{i\in I} X_i,
\end{align}
where each $X_i$ is a sum of star subexpressions $Y^*$ and orbit-finite sets.
As addition is commutative, preserves orbit-finiteness, and admits merging of stars:
\[
Y^* \ + \ Z^*  \quad = \quad \big(Y\cup Z\big)^*, 
%\big((X\cup\set{\varepsilon}) + (Y\cup\set{\varepsilon})\big)^*,
\]
each of sets $X_i$ is of the form
\[
Z + Y^*
\]
where $Y$ and $Z$ are both obit-finite.
Therefore each $X_i$ is semi-linear, and hence the orbit-finite~\eqref{eq:ofu} union is semi-linear too.
%
%Semilinear sets are closed addition:
%\[
%\bigcup_{i\in I} \, b_i + {P_i}^* \  + \ 
%\bigcup_{j\in J} \, c_j + {Q_i}^* \  = \!\!\!\!
%\bigcup_{(i,j)\in I\times J} \, (b_i + c_j) + {(P_i\cup Q_j)}^*.
%\]
%Closure under orbit-finite unions is easily shown using Lemma~\ref{lem:o-f}.
%Finally, closure under star follows by
%\[
%\Big(\bigcup_{i\in I} \, b_i + {P_i}^*\Big)^* \quad  = \quad .
%\]
%\todooo{is this abstract nonsense clear enough?}
\end{proof}

\medskip

\begin{proof}[\bf Proof of Lemma~\ref{lem:Marta}]
The proof is an adaptation of the argument from~\cite{Marta}.
Towards contradiction, suppose $\Par {L_3}$ is semilinear:
\begin{align*}
\Par{L_3}  \quad = \quad 
\bigcup_{i\in I} \, g_i + {P_i}^*.
\end{align*}
%
%As $L_3$ is equivariant, we can assume w.l.o.g.~that the indexing set $I$ and the function $i\mapsto (g_i,P_i)$
%are equivariant.
%Indeed, extending $I$ to its equivariant closure $\setof{\pi(i)}{\pi \in \Gr, i\in I}$, and likewise extending
%the indexing function, preserves $\Par {L_3}$ due to its equivariance.

For a data vector $p : \atoms\to\N$, let $\poj p = \size{\setof{a\in \atoms}{p(a)=1}}$ denote the number of atoms appearing exactly once in $p$.
%likewise we write $\poj w$ for data words over $\atoms$.
Our argument relies on a careful analysis of the limit value of the \emph{singularity ratio} $\frac {\poj p}{\size p}$, for $p \in \Par{L_3}$, 
when $\size p$ tends to infinity.

By induction on the length of $w\in L_3$ one easily proves:
\begin{claim} \label{claim:12}
Every $p\in\Par{L_3}$ satisfies $\poj{p} < \frac{1}{2} \size p$.
\end{claim}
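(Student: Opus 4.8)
\textbf{Proof proposal for Claim~\ref{claim:12}.}
The plan is to prove the stronger quantitative statement by structural induction on words of $L_3$, keeping track simultaneously of $\poj p$ and $\size p$ along a derivation of a word in $L_3$. Recall that $L_3$ is obtained from $L_1$ by the substitution $a \mapsto L_a$, where $L_a = aa\,(a\,K_a)^*$ and $K_a = \bigcup_{b\neq a} b$. Thus every $w\in L_3$ is a concatenation $w = w_1 w_2 \ldots w_m$, where $a_1 a_2 \ldots a_m \in L_1$ (so consecutive $a_i$ differ) and each block $w_i \in L_{a_i}$ has the shape $a_i a_i\, a_i c_{i,1}\, a_i c_{i,2}\cdots a_i c_{i,k_i}$ with each $c_{i,j}\neq a_i$. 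The key observation is that inside a single block $w_i$, the source atom $a_i$ already occurs at least twice (from the prefix $a_ia_i$), so $a_i$ never contributes to $\poj p$ for $p = \Par w$. The only atoms that could occur exactly once in $p$ are the ``payload'' atoms $c_{i,j}$, and each block $w_i$ contributes exactly $k_i$ of them (with multiplicity), against a block length of $\size{w_i} = 2 + 2k_i$.

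First I would set up the induction cleanly. Rather than inducting on plain word length, it is more convenient to induct on $m$, the number of $L_1$-blocks, and within that to reason about how $\poj{}$ and $\size{}$ change when one appends a new block or extends an existing block by one more factor $a_i c_{i,j}$. The base case is the empty word (or a single block), where $\poj p = 0$ and the inequality $0 < \tfrac12\size p$ holds as soon as $p\neq\zerovector$; one must double-check the degenerate empty-word case, where $L_3$ may or may not contain $\varepsilon$ depending on whether $L_1$ does — since $L_1$ consists of nonempty words, the shortest word of $L_3$ is $aa$ with $\poj p = 0 < 1$. For the inductive step, appending a factor $a_i c_{i,j}$ to a block increases $\size p$ by $2$; it can increase $\poj p$ by at most $1$ (namely if $c_{i,j}$ is a brand-new atom) and may even decrease it (if $c_{i,j}$ already appeared once). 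Since $a_i$ occurs $\geq 2$ times in $p$, adding another $a_i$ never creates a new singleton. Hence each such extension preserves the invariant $\poj p < \tfrac12 \size p$ because the left side grows by at most $1$ while the right side grows by exactly $1$, and the invariant is strict. Appending a whole new block $w_{m+1}\in L_{a_{m+1}}$ of length $2+2k$ adds at most $k$ to $\poj p$ while adding $2+2k$ to $\size p$; strictness is again preserved, with slack at least $2$ coming from the $a_{m+1}a_{m+1}$ prefix. A subtlety: a payload atom $c_{i,j}$ in one block might coincide with the source atom $a_{i'}$ of another block, or with a payload atom elsewhere — but such coincidences only \emph{reduce} $\poj p$, so they never threaten the inequality; the worst case for the ratio is when all payload atoms are distinct from everything else.

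The main obstacle, and the only place requiring care, is bookkeeping the interaction between blocks: one must make sure that no accounting trick lets $\poj p$ catch up with $\tfrac12\size p$ — in particular that the ``$aa$'' prefixes genuinely contribute non-singleton mass that is never later cancelled, and that payload atoms shared across blocks are handled on the safe side of the inequality. I would formalize this by defining, for a prefix of the derivation, the quantity $\size p - 2\poj p$ and showing it is always $\geq 2$ (hence in particular positive) by verifying it is non-decreasing under ``extend a block'' (change $\geq 2 - 2\cdot 1 = 0$) and strictly increasing under ``start a new block'' (change $\geq (2+2k) - 2k = 2$), with initial value $2$ after the first block. This reformulation makes the induction entirely routine and sidesteps any need to case-split on atom coincidences, since merging atoms only ever helps. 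Once Claim~\ref{claim:12} is established, it feeds into the singularity-ratio argument: the limiting ratio $\poj p/\size p$ for $p\in\Par{L_3}$ is bounded away from $1$ — indeed below $\tfrac12$ — which will be shown incompatible with semilinearity, since a linear set $g_i + P_i^*$ with a period of the form ``single fresh atom'' forces the singularity ratio to approach a value that cannot be simultaneously accommodated.
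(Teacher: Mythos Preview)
Your proof is correct and follows the same route the paper sketches in a single line (``by induction on the length of $w\in L_3$ one easily proves''); your invariant $\size p - 2\,\poj p \ge 2$, maintained under block-starts and block-extensions, is a clean and fully detailed realisation of that induction.
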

%
%By another induction on length of $w\in L_3$ one easily proves:
%
\begin{claim} \label{claim:dom}
Every $p\in\Par{L_3}$ satisfies $\dom{p} \leq \frac{1}{2} \size p$.
\end{claim}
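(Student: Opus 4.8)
The plan is to read off the bound directly from the substitution structure of $L_3$, with no need for induction. Recall that $L_3 = L_1(K)$ for the equivariant substitution $a \mapsto L_a = aa\,(a\,K_a)^*$, with $K_a = \bigcup_{b\in\atoms-\set a} b$. Hence every $w\in L_3$ arises from a word $a_1 a_2 \cdots a_n \in L_1$ (so $n\geq 1$) by replacing each letter $a_i$ with some word $u_i\in L_{a_i}$, and by the shape of $L_{a_i}$ we have
\[
u_i \ = \ a_i\, a_i\, a_i b_{i,1}\, a_i b_{i,2} \cdots a_i b_{i, k_i}, \qquad w \ = \ u_1 u_2 \cdots u_n,
\]
for some integers $k_i\geq 0$ and atoms $b_{i,j}\in\atoms-\set{a_i}$. (We only need one such decomposition to exist, which is immediate from the definition of substitution.)

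First I would compute the length. The block $u_i$ has length $2 + 2k_i$: two occurrences of $a_i$ in the prefix $a_i a_i$, plus one occurrence of $a_i$ and one of $b_{i,j}$ in each of the $k_i$ factors $a_i b_{i,j}$. Summing over blocks,
\[
\size w \ = \ \sum_{i=1}^n (2 + 2k_i) \ = \ 2n + 2\sum_{i=1}^n k_i.
\]
Next I would bound the number of distinct atoms: every atom occurring in $w$ is either some $a_i$ or some $b_{i,j}$, so
\[
\dom{\Par w} \ \subseteq \ \set{a_1, \ldots, a_n} \ \cup \ \setof{b_{i,j}}{1\leq i\leq n,\ 1\leq j\leq k_i},
\]
whence $\size{\dom{\Par w}} \leq n + \sum_{i=1}^n k_i$. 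Combining the two displays, for $p = \Par w$,
\[
\dom p \ \leq \ n + \sum_{i=1}^n k_i \ = \ \frac12\Big(2n + 2\sum_{i=1}^n k_i\Big) \ = \ \frac12\,\size p,
\]
which is exactly the claim.

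There is essentially no obstacle here; the argument is a routine count. The only points worth a word of care are that the substitution decomposition of a word of $L_3$ is genuine (it is, by the definition of $L(K)$) and that the stated inequality is not strict in general: equality is attained precisely when the atoms $a_1,\ldots,a_n$ and all the $b_{i,j}$ are pairwise distinct, so $\frac12$ cannot be improved. (The quantity $\poj p$ plays no role in this particular claim; it enters only in Claim~\ref{claim:12} and in the subsequent singularity-ratio analysis.)
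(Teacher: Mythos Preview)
Your proof is correct. The paper does not spell out a proof of this claim; it only remarks that Claims~\ref{claim:12} and~\ref{claim:dom} follow ``by induction on the length of $w\in L_3$''. Your direct counting via the block decomposition $w=u_1\cdots u_n$ with $u_i\in L_{a_i}$ reaches the same bound without induction and is arguably cleaner: each block contributes $2+2k_i$ letters but at most $1+k_i$ distinct atoms, so summing gives $\dom p\le\tfrac12\size p$ immediately. Both arguments rest on the same structural observation about the shape of $L_a=aa(aK_a)^*$; yours simply makes the count explicit in one step rather than propagating it along the word.
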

For a data vector $p : \atoms \to\N$ and $S\subseteq \atoms$,  we denote by $\bez p S$ the
data vector obtained from $p$ by removing all occurences of atoms from $S$:
\[
(\bez p S)(a) = \begin{cases} p(a) & \text{ if } a\notin S\\
0 & \text{ otherwise.} \end{cases}
\]
Let $S_i = \supp {P_i}$.
As a consequence of Claim~\ref{claim:12}, we get:
\begin{claim} \label{claim:<=12}
Every $p\in P_i$ ($i\in I$) satisfies $\poj{\bez p {S_i}} \leq \frac{1}{2} \size p$. % where $S = \supp {P_i}$.
\end{claim}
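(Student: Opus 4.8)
\emph{Proof plan.} The plan is to derive Claim~\ref{claim:<=12} from Claim~\ref{claim:12} by an \emph{amplification} trick: a single period $p\in P_i$ will be copied many times by atom permutations that fix $S_i$ pointwise, producing a bona fide vector in $\Par{L_3}$ whose singularity ratio is governed by that of $\bez p{S_i}$.

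First I would fix $p\in P_i$ and decompose it as $p = \restr{S_i}p + \bez p{S_i}$, writing $r=\restr{S_i}p$ for the part of $p$ living on $S_i$ and $q=\bez p{S_i}$ for the part living off $S_i$; thus $\size p = \size r + \size q$, and $\poj q = \poj{\bez p{S_i}}$ is exactly the quantity to be bounded. Next, for each $m\in\N$ I would pick permutations $\pi_1,\dots,\pi_m\in\Gr$ such that every $\pi_j$ fixes $S_i$ pointwise and maps $\dom q$ onto a block of "fresh" atoms, the $m$ blocks being pairwise disjoint and disjoint from $S_i\cup\dom{g_i}$; such permutations exist since $\atoms$ is infinite and all the sets in play are finite. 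Since $P_i$ is supported by $S_i$ and each $\pi_j$ fixes $S_i$, we have $\pi_j(p)\in P_i$, hence $\sum_{j=1}^m\pi_j(p)\in{P_i}^*$, and therefore $v_m := g_i+\sum_{j=1}^m\pi_j(p)\in g_i+{P_i}^*\subseteq\Par{L_3}$.

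The computation is then short. As each $\pi_j$ fixes $S_i$, $\pi_j(p)=r+\pi_j(q)$, so $v_m = g_i + m\,r + \sum_{j=1}^m\pi_j(q)$ and $\size{v_m} = \size{g_i}+m\,\size p$. For singularities: the atoms of $\dom{g_i}\cup S_i$ contribute at most the constant $C=\size{\dom{g_i}}+\size{S_i}$, independently of $m$, while each fresh block $\pi_j(\dom q)$ contributes exactly $\poj q$ atoms of multiplicity one; hence $\poj{v_m}\ge m\,\poj q$. Applying Claim~\ref{claim:12} to $v_m$ gives $m\,\poj q\le\poj{v_m}<\tfrac12\size{v_m}=\tfrac12(\size{g_i}+m\,\size p)$; dividing by $m$ and letting $m\to\infty$ yields $\poj{\bez p{S_i}}=\poj q\le\tfrac12\size p$, as required.

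I expect the one point needing care to be the singularity accounting above: one must be sure that the contribution of the "frozen" summand $g_i+m\,r$ to the number of multiplicity-one atoms of $v_m$ is bounded uniformly in $m$ --- which is exactly why the statement first deletes the atoms of $S_i$ --- so that this additive constant washes out in the limit while the term $m\,\poj q$ survives. The remaining ingredients (choosing the $\pi_j$ with the stated disjointness, $\pi_j(p)\in P_i$, and $v_m\in\Par{L_3}$) are routine.
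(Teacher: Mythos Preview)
Your proposal is correct and follows essentially the same route as the paper's proof: both take permuted copies $\pi_1(p),\dots,\pi_m(p)$ of $p$ by permutations fixing $S_i$ (so each copy stays in $P_i$), arrange the off-$S_i$ parts to land on pairwise disjoint fresh blocks, form $g_i+\sum_j\pi_j(p)\in\Par{L_3}$, and invoke Claim~\ref{claim:12}. Your write-up is in fact slightly more careful than the paper's---you argue directly rather than by contradiction, and you make the disjointness from $\dom{g_i}$ explicit, which is indeed needed for the clean lower bound $\poj{v_m}\ge m\,\poj q$.
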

\noindent
\begin{proof}
Towards contradiction, suppose $\poj{\bez p {S_i}} > \frac{1}{2} \size p$ for some $p\in P_i$.
Let $S' = \dom{\bez p {S_i}}$.  % = \dom p - S$.
For an arbitrary permutation of atoms $\pi\in\Gr$ such that $\pi(a)=a$ for all $a\in S_i$,
%(call such permutations \emph{$S$-permutations}), 
we have $\pi(p) \in P_i$. 
Consider such permutations $\pi_1, \ldots, \pi_n \in \Gr$, such that $\pi_k(S')$ and $\pi_l(S')$ are disjoint for $k\neq l$.
As $\size {g_i}$ is fixed, for sufficiently large $n$ the vector
\[
g_i + \pi_1(p) + \ldots + \pi_n(p) \ \in \ \Par {L_3}
\]
%By adding the base $b_i$ and sufficiently many such permutations of $p$  we obtain a data vector in $\Par{L_3}$ that 
contradicts Claim~\ref{claim:12}. This completes the proof.
\end{proof}

We call a data vector $p:\atoms\to\N$ \emph{non-singular} if  $p(a)>1$ for some $a\in\atoms$.
Claim~\ref{claim:<=12} can be strengthened as long as non-singular data vectors are considered:
\begin{claim} \label{claim:<12}
Every $p\in P_i$ ($i\in I$) such that $\bez p {S_i}$ is non-singular, 
satisfies $\poj{\bez p {S_i}} < \frac{1}{2} \size p$.  %, where $S = \supp {P_i}$.
\end{claim}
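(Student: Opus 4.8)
The plan is to argue by contradiction, extending the iteration trick of the proof of Claim~\ref{claim:<=12} by one further step and then exploiting the block structure of words of $L_3$. Work inside the contradiction hypothesis $\Par{L_3} = \bigcup_{i\in I} g_i + {P_i}^*$, with $S_i = \supp{P_i}$. Fix $i$ and $p\in P_i$ with $q := \bez p {S_i}$ non-singular, and assume towards a contradiction that $\poj q = \frac{1}{2}\size p$ (by Claim~\ref{claim:<=12} we already know $\poj q\leq\frac{1}{2}\size p$, so it remains to exclude equality). Put $m := \size p - \size q = \size{\restr{S_i}{p}}$ and let $E := \size q - \poj q$ be the mass of $q$ carried by atoms of multiplicity at least $2$; since $q$ is non-singular, $E\geq 2$, and combining with $\poj q = \frac{1}{2}\size p$ gives $m = \frac{1}{2}\size p - E \leq \frac{1}{2}\size p - 2$. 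The strict inequality $m < \frac{1}{2}\size p$ is the only consequence of non-singularity the argument uses.

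For every $n$ I would form, exactly as in the proof of Claim~\ref{claim:<=12}, the vector $v_n = g_i + \pi_1(p) + \dots + \pi_n(p)$, where $\pi_1,\dots,\pi_n\in\Gr$ fix $S_i$ pointwise and send $S' := \dom p - S_i$ to pairwise disjoint sets of atoms that moreover avoid $\dom{g_i}$; then $\pi_k(p)\in P_i$ and $v_n\in\Par{L_3}$. A direct inspection of $v_n$ yields two facts. First, $\poj{v_n} = n\poj q + c$ for a constant $c\geq 0$ collecting the contributions of $g_i$ and of the atoms of $S_i$. Second, and crucially, the only atoms whose multiplicity in $v_n$ is unbounded in $n$ are those of $S_i$: for $a\notin S_i$ one has $v_n(a) \leq M'$ with $M' := \max(\max_{a'\in S'}p(a'),\,\size{g_i})$ a constant, whereas $\sum_{a\in S_i} v_n(a) = \size{\restr{S_i}{g_i}} + nm$.

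Next, fix $n$, pick $w_n\in L_3$ with $\Par{w_n} = v_n$, and use the shape of words of $L_3$: $w_n$ splits into $N_n$ consecutive blocks, the $j$-th block having length $2+2k_j$, of which $2+k_j$ occurrences are of its main atom and $k_j$ are the remaining letters. Two elementary counts follow. (a) Every singleton of $v_n$ occurs in some block as a non-main letter, so $\poj{v_n}\leq\sum_j k_j = \frac{1}{2}\size{v_n} - N_n$, hence $N_n \leq \frac{1}{2}\size{v_n} - \poj{v_n} = \frac{1}{2}\size{g_i} - c =: B$, a constant (and $M_n\leq N_n\leq B$, where $M_n$ is the number of distinct main atoms). (b) The total number of occurrences of main atoms inside their own blocks is $\sum_j (2+k_j) = N_n + \frac{1}{2}\size{v_n}$, so the distinct main atoms $d^{(1)},\dots,d^{(M_n)}$ satisfy $\sum_l v_n(d^{(l)}) \geq \frac{1}{2}\size{v_n}$. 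Splitting this last sum according to whether $d^{(l)}\in S_i$ and invoking the multiplicity bounds from the previous paragraph, it contributes at most $\size{\restr{S_i}{g_i}} + nm$ from $S_i$ and at most $B\,M'$ from outside; therefore $\frac{1}{2}(\size{g_i} + n\size p) \leq nm + \size{\restr{S_i}{g_i}} + B\,M'$. Dividing by $n$ and letting $n\to\infty$ gives $\frac{1}{2}\size p \leq m$, contradicting $m < \frac{1}{2}\size p$.

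The genuinely routine part is the verification in the second paragraph: computing $\poj{v_n}$ exactly and checking that no atom outside $S_i$ accumulates unbounded multiplicity — here one must be careful to pick the renamings $\pi_k$ disjoint from $\dom{g_i}$ so that $c$, $B$ and $M'$ really are constants independent of $n$. The step I expect to be the crux is observation (b) of the third paragraph: that a word of $L_3$ with a bounded number of blocks but growing length must pile at least half of its total mass onto a bounded set of atoms; this is precisely what clashes, after the iteration, with the fact that only the finitely many atoms of $S_i$ are left to carry a growing amount of mass.
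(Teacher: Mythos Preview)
Your argument is correct, but it takes a substantially longer route than the paper's. The paper observes that from $\poj{\bez p{S_i}} \geq \tfrac{1}{2}\size p$ together with non-singularity one immediately gets $\size{\dom{\bez p{S_i}}} > \tfrac{1}{2}\size p$; then the very same permutation trick used for Claim~\ref{claim:<=12} contradicts Claim~\ref{claim:dom} (the domain bound $\size{\dom v}\leq\tfrac{1}{2}\size v$ for $v\in\Par{L_3}$) rather than Claim~\ref{claim:12}. That is the entire proof.

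You instead bypass Claim~\ref{claim:dom} and argue directly from the block structure of words in $L_3$: the equality $\poj q=\tfrac{1}{2}\size p$ forces the number $N_n$ of blocks in a witness $w_n\in L_3$ for $v_n$ to stay bounded independently of $n$; hence at least half the mass of $v_n$ lies on a bounded set of main atoms, and since atoms outside $S_i$ have bounded multiplicity in $v_n$, this mass must eventually be carried by $S_i$ alone --- contradicting $m<\tfrac{1}{2}\size p$. This is sound, and in effect it reproves a quantitative form of the domain bound along the way. The paper's route is much shorter because it isolates Claim~\ref{claim:dom} once and simply reuses it here; your route has the minor virtue of making the role of non-singularity explicit through the mass defect $E\geq 2$, but at the cost of a page of computation where three lines suffice.
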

\noindent
\begin{proof}
Indeed, suppose $\poj{\bez p {S_i}} \geq \frac{1}{2} \size p$ for some $p\in P_i$, 
and hence $\dom{\bez p {S_i}} > \frac 1 2 \size p$ due to non-singularity of $\bez p {S_i}$.
Considering similar permutations of $p$ as in the argument for Claim~\ref{claim:<=12},
we contradict Claim~\ref{claim:dom}.
\end{proof}

Let $k = \max\setof{\size{S_i}}{i\in I}$ be the maximal size of the support of $P_i$; note that $k$ is well defined as
the family of sets $\set{P_i}_{i\in I}$ is orbit-finite, and the size of the support is invariant inside an orbit.
Likewise, let  $t = \max\setof{\size{g_i}}{i\in I}$ be the maximal size of a base
and let $s = \max\setof{\size{p}}{p\in \bigcup_i P_i}$ be the maximal size of a period.
%Let $k = \max \set{l,s}$.

Let $Z  = \set{a_0, \ldots, a_k}\subseteq \atoms$ be some fixed $k+1$ atoms.
A word $v\in L_a$ (cf.~\eqref{eq:Marta}) we call \emph{varied} if all atoms different than $a$ appear at most once in $v$.
For every $m\in\N$ choose some arbitrary but fixed word $w_m\in L_3$ of the form:
\begin{align} \label{eq:wm}
w_m \ = \ v_0 \, v_1 \, \ldots \, v_{k} \ \in \ L_{a_0} \, L_{a_1} \, \ldots \, L_{a_{k}},
\end{align}
where each $v_i \in L_{a_i}$ is a varied word of length $2m$
and no atom appears in two distinct words $v_{a_i}$, $v_{a_j}$, for $i \neq j$.
%atoms $a_0, \ldots, a_k$ are different from other atoms
%appearing in $w_m$ and $\size {v_i} = 2m+2$.
%Let $Z_m  = \set{a_0, \ldots, a_k}$ and 
Let $q_m = \Par{w_m}$.
Hence $\size{w_m} = \size{q_m} = 2m(k+1)$.
As $\poj{q_m} = (m-1)(k+1)$, 
in the limit we have:  % $\poj{q_m} = \frac{1}{2} \size {q_m}$:
\begin{align} \label{eq:limit}
\lim_{m \to \infty} \frac{\poj{q_m}}{\size{q_m}}  \ = \  
\lim_{m \to \infty} \frac{(m-1)(k+1)}{2m(k+1)} \ = \ 
\frac{1}{2},
\end{align}
irrespectively of the choice of the words $w_m$.
Let $g_{i_m} + {P_{i_m}}^{\!\!*}$ ($i_m\in I$) be a linear set to which $q_m$  belongs.
Thus  
$q_m = g_{i_m} + p_m$, for $p_m \in {P_{i_m}}^{\!\!*}$.
% and hence:
%%
%\begin{align} \label{eq:limit}
%\lim_{m \to \infty} \frac{\poj{p_m}}{\size{p_m}}  = \frac{1}{2}
%\end{align}
%%
%since $\size{g_i}$ is bounded.
%Consider an arbitrary fixed word $w_m$ for a sufficiently large $m$ 
%Let $S_m = \supp {P_{i_m}}$.
Recalling~\eqref{eq:wm}, choose $a_{i_m} \in Z$ 
so that $a_{i_m}\notin S_{i_m}$ (such $a_{i_m}$ exists as $\size {S_{i_m}}\leq k$).
We split $p_m$:
\begin{align} \label{eq:p}
q_m \ = \ \, (g_{i_m} + p_{m,0}) \, + \, p_{m,1} \, + \, p_{m,>1},
\end{align}
where $p_{m,1}$ is a sum of vectors from $P_{i_m}$ that contain exactly one appearance of $a_{i_m}$;
$p_{m,>1}$ is a sum of vectors from $P_{i_m}$ that contain more than one appearance of $a_{i_m}$;
and  $p_{m,0}$ is a sum of vectors from $P_{i_m}$ that contain no appearance of $a_{i_m}$ at all.
Applying Claim~\ref{claim:12} to $g_{i_m} + p_{m,0} \in \Par{L_3}$, we obtain:
\begin{align} \label{eq:p0}
\limsup_{m \to \infty} \frac{\poj{g_{i_m} + p_{m,0}}}{\size{g_{i_m} + p_{m,0}}} \leq \frac{1}{2}.
\end{align}

Observe that the size of the sum of the last two data vectors in~\eqref{eq:p} constitutes, up to a constant $t$, at least
$\frac 1 {2(k+1)}$ fraction of the whole size $\size{p_m}$ (recall that $\size{g_i}$ is bounded by $t$):
\begin{align} \label{eq:1k}
\frac{1}{2(k+1)} \size{p_m}  - \size{g_i} \ <  \ \size{p_{m,1}  +  p_{m,>1}}
%  \leq  \frac{k}{2(k+1)} \size{p_m} + k  % \size{p_{m,1} \, + \, p_{m,>1} \, + \, p_{m,0}}
\end{align}
as it includes all $\frac{1}{2(k+1)} \size{p_m} + 1$ 
appearances of $a_{i_m}$ in $q_m$, except for at most $\size{g_i}$ many of them, possibly appearing in $g_i$.
% but also at most $k n$ atoms since $\size v \leq k$ for  $v\in P_{i_m}$. 
%From the right inequality we deduce
%$\lim_{m\to\infty} \size{p_{m,0}} = \infty$ and,
%as $\size {S_{i_m}}\leq k$ is bounded,  
%the size of $S_m$ is bounded by $k$, 
%
We are going to prove the following strict inequality
\begin{align} \label{eq:toprove}
\limsup_{m \to \infty} \frac{\poj{p_{m,1}+p_{m,>1}}}{\size{p_{m,1}+p_{m,>1}}} < \frac{1}{2}
\end{align}
which, together with inequalities~\eqref{eq:p0} and~\eqref{eq:1k},  implies 
$$\limsup_{m \to \infty} \frac{\poj{q_m}}{\size{q_m}}  < \frac{1}{2}$$ 
and thus contradicts the equality~\eqref{eq:limit}.
Call $p_{m,1}$ \emph{non-trivial} if it is a sum of at least two vectors from $P_m$.
When $p_{m,1}$ is trivial, $\size{p_{m,1}}\leq s$ is bounded and hence $p_{m,1}$ 
can be ignored in~\eqref{eq:toprove}.
We split the inequality~\eqref{eq:toprove} into two separate ones
\begin{align} \label{eq:toprove2}
\limsup_{m \to \infty} \frac{\poj{p_{m,1}}}{\size{p_{m,1}}} < \frac{1}{2}, \ 
\limsup_{m \to \infty} \frac{\poj{p_{m,>1}}}{\size{p_{m,>1}}} < \frac{1}{2}
\end{align}
and prove the first one assuming that $p_{m,1}$ is non-trivial, % for all sufficiently large $m$,
and the second one unconditionally.
This is enough to derive~\eqref{eq:toprove}.

Concerning the first inequality, 
%let
%$s = \max\setof{\size{p}}{p\in P}$ to be the maximal size of a period $p\in P$
%(which is well defined similarly as $k$).
we observe that the atom $a_{i_m}\notin S_{i_m}$ is counted in $\poj {\bez v {S_{i_m}}}$ for every data vector
$v$ contributing to the sum $p_{m,1}$, but if there are more than one of these vectors $v$, then
the atom $a_{i_m}$ is no more counted in $\poj {\bez {p_{m,1}} {S_{i_m}}}$.
Thus $\poj {\bez {p_{m,1}} {S_{i_m}}}$ loses, intuitively speaking, at least the $\frac{1}{s}$ fraction of the maximal possible value
$\frac{1}{2} \size{p_{m,1}}$ according to Claim~\ref{claim:<=12}.
This allows us to deduce:
\[
\frac{\poj{\bez{p_{m,1}} {S_{i_m}}}}{\size{p_{m,1}}} \leq \frac{1}{2} \cdot \Big(1-\frac{1}{s}\Big) < \frac 1 2
\]
which implies, in the limit, the first inequality in~\eqref{eq:toprove2},
as $\size{S_{i_m}}$ is  bounded  (by $k$).

Concerning the second inequality, let's put
\[
r = \max \setof{\frac{\poj{\bez v {S_i}}}{\size{v}}}{v \in P_i, \ \bez v {S_i} \text{ is non-singular} }.
\]
As before, $r$ is well defined due to orbit-finiteness of all $P_i$ and $(P_i)_{i\in I}$, and moreover 
$r < \frac 1 2$ by Claim~\ref{claim:<12}.
A crucial observation is that 
\[
\frac{\poj{\bez{p_{m,>1}} {S_{i_m}}}}{\size{p_{m,>1}}} \leq \frac{\poj{\bez v {S_{i_m}}}}{\size{v}}
\]
for some $v\in P_m$ that contributes to the sum $p_{m,>1}$, and hence
\[
\frac{\poj{\bez{p_{m,>1}} {S_{i_m}}}}{\size{p_{m,>1}}} \leq r < \frac 1 2
\]
which implies, in the limit, the second inequality in~\eqref{eq:toprove2},
as $\size{S_{i_m}}$ is bounded.
The inequalities~\eqref{eq:toprove2} are thus proved.
\end{proof}

\medskip

\subsection{Missing items in Section~\ref{sec:1CFG}}  \label{sec:1CFG-app} 

\begin{proof}[\bf Proof of Lemma~\ref{lem:h}]
%Production rules in $\Delta_2$ where the constraint $\varphi \equiv x = y = y'$ imposes all equalities we call 
%\emph{register-preserving}.
%Let $E \subseteq \Pi_2$ be the set of productions induced by register-preserving rules.
%
For a nonterminal $q\in Q$ and an atom $a\in\atoms$,
consider the set of derivation trees of $\G$
with root labeled by $\conf q a$, which use only
productions with the left-hand side in $Q\times \set{a}$
(thus every non-leaf in such a tree belongs to $Q\times\set{a}$), and where every leaf belongs either to
$H\times\atoms$ or to $Q\times (\atoms-\set{a})$.
Intuitively, we stop derivation at a terminal, or at a configuration with register value different than $a$
(i.e., at first register update along every path).
The language $L_{\conf q a}$ generated by such trees is obtained by applying a substitution to a
classical context-free language (with the finite set $Q\times\set{a}$ of nonterminals), and thus has rational Parikh image.
 
% consider the language $L_{\conf q a}$ of words generated by a derivation tree with root labeled with $\conf q a$,
% a grammar $\G_{\conf q a}$, with
%nonterminals $Q\times\set{a}$ and the initial nonterminal $\conf q a$, 
%obtained from  $\G$ by restricting to register-preserving production rules.
%As terminals of $\G_{\conf q a}$ we take $(H\cup Q)\times\set{a}$, which includes also the configurations $\conf p a$
%for $p\in Q$.

The proof is by induction on $n$. 
In case $n=0$, we observe that $\langh {\conf q a, 0}$ is the restriction of $L_{\conf q a}$ to terminals
$H\times\set{a}$:
\[
\langh {\conf q a, 0} \quad = \quad L_{\conf q a} \ \cap \ (H\times\set{a})^*
\]
and thus is itself a classical context-free language (with the finite set $Q\times\set{a}$ of nonterminals
and the finite set $H\times\set{a}$ of terminals); 
in consequence, it has rational Parikh image.

For the induction step we assume rationality of languages $\langh {\conf q a, n}$, and observe that
$\langh {\conf q a, {n+1}}$ is obtained by applying to the language $L_{\conf q a}$ the substitution:
\[
\conf {p} b \quad \mapsto \quad \langh {\conf p b, n}
\qquad
\pair h b \quad \mapsto \quad \pair h b,
\]
where $p\in Q$, $h \in H$, and $b\in\atoms$.
Indeed, intuitively speaking, $L_{\conf q a}$ allows for exactly one register update, while
$\langh {\conf p b, n}$ allows for ${\leq} n$ additional  register updates along every path.
Therefore $\langh {\conf q a, {n+1}}$ has rational Parikh image, as required.
\end{proof}

\medskip

\subsection{Missing items in Section~\ref{sec:anti}}  \label{sec:anti-app}
\label{sec:ap}

\begin{proof}[\bf Proof of Lemma~\ref{lem:minus}]
We transform a rational expression $E$ defining
a language $L\subseteq \Gamma^*$
into a rational expression $\widetilde E$ defining  $L \minu \alpha$.
We proceed by structural induction on $E$.
In case of orbit-finite union the transformation is distributive:
\[
\widetilde{\bigcup_{i\in I} E_i} \quad := \quad \bigcup_{i\in I} \widetilde{E_i}.
\]
In case of sum, the transformation is applied to one of summands:
\[
\widetilde{\ \ E_1 \ + \ E_2 \ \ } \ := \quad 
(\widetilde{E_1} \ + \ E_2) \quad \cup \quad
(E_1 \ + \ \widetilde{E_2}).
\]
In case of iteration, the transformation is applied to a single iteration
(which forces at least one iteration and hence rules out the vacuous generation of the zero vector $\zerovector$ due to $0$ iterations):
\[
\widetilde{\ E^* \ } \quad := \quad
\widetilde{E\ }  E^*.
\]
Finally, the induction base, for a singleton $\set{\beta}$, is given by:
\[
\widetilde{\set{\beta}} \quad := \quad \begin{cases}
\zerovector & \text{ if } \ \beta = \alpha \\
\emptyset & \text{ otherwise}.
\end{cases}
\]
\end{proof}

\medskip

For a word $w = \alpha_1 \ldots \alpha_n \in\Gamma^*$ we denote by $\srces{w}$ the sequence
$\src {\alpha_1} \ldots \src {\alpha_n}$ of sources.
For a finite subset $X\subset \atoms$  and a regular language $K \subseteq X^*$ 
we define:
\[
\apaX K a \ = \  % \acall \ \cap \ \setof{\alpha\in\Sigma}{\src \alpha \in X}^*.
\setof{w\in\Gamma^*}{\srces w \in K, a\notin\trg w}.
\]
%Clearly $\apaX K a \ \subseteq \ \setof{\alpha\in\Sigma}{\src \alpha \in X}^*$.
%
\begin{lemma} \label{lem:X}
For every finite set $X\subset\atoms$ and regular language $K\subseteq X^*$, 
the languages $\apaX K a$ are rational.
\end{lemma}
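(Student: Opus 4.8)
The plan is to reduce everything to the classical Kleene theorem over the finite alphabet $X$, using the substitution machinery of Lemma~\ref{lem:subst}. The source-sequence constraint ``$\srces w\in K$'' is harmless: it is obtained from $K$ by blowing up each source letter $d\in X$ into all $\Gamma$-letters sourced at $d$. The only genuine complication is the constraint ``$a\notin\trg w$'', which concerns the \emph{last} letter of $w$ and therefore cannot be imposed by intersection (rational languages are not closed under intersection with such local conditions). I would handle it by peeling off the last position and passing to a right residual of $K$.

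First I would fix, for each $d\in X$, the orbit-finite (hence rational) language $\kappa_d\subseteq\Gamma$ of one-letter words with source $d$,
\[
\kappa_d \ = \ \set d\times\pow 2\atoms \ = \ \bigcup_{\beta\in\pow 2\atoms}\set{\tuple{d,\beta}},
\]
whose indexing $\beta\mapsto\set{\tuple{d,\beta}}$ is supported by $\set d$; thus $\kappa=(\kappa_d)_{d\in X}$ is a legal family of rational languages over $\Gamma$. For any classical regular $M\subseteq X^*$ one then has $M(\kappa)=\setof{w\in\Gamma^*}{\srces w\in M}$, and this is rational by Lemma~\ref{lem:subst}, since $M$ — being regular over the finite alphabet $X$ — is a rational data language, and each $\kappa_d$ is rational.

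Then I would decompose $\apaX K a$. Every nonempty $w\in\apaX K a$ factors as $w=u\,\alpha$, where $\alpha=\tuple{d,\beta}$ is its last letter, $d=\src\alpha\in X$, and $a\notin\beta=\trg\alpha$; then $\srces u$ lies in the right quotient $K_d:=\setof{u\in X^*}{u\,d\in K}$, which is again classically regular over $X$. Conversely, any such concatenation belongs to $\apaX K a$. Hence
\[
\apaX K a\ =\ \bigcup_{d\in X}\ K_d(\kappa)\,\Big(\bigcup_{\beta\in\pow 2{\atoms-\set a}}\set{\tuple{d,\beta}}\Big)
\]
(and, should the empty word be counted in $\apaX K a$ when $\emptyword\in K$, one simply adds the rational language $\set\emptyword$ to the right-hand side). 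Here each $K_d(\kappa)$ is rational by the previous paragraph; each trailing factor is an orbit-finite union of singletons indexed by the orbit-finite set $\pow 2{\atoms-\set a}$ (the indexing being supported by $\set{a,d}$), hence an orbit-finite — thus rational — language; concatenation of two rational languages is rational; and a finite union of rational languages is rational. This yields rationality of $\apaX K a$.

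The main obstacle will be merely the last-letter bookkeeping: one has to factor the last position out cleanly and invoke closure of classical regular languages over $X$ under right quotient by a letter, rather than trying to intersect. Everything else is a routine combination of Lemma~\ref{lem:subst} with the facts that classical regular languages over a finite subalphabet are rational data languages and that orbit-finite languages are rational.
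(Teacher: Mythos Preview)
Your argument is correct for the definition of $\apaX K a$ exactly as it is displayed in the paper, namely
\[
\apaX K a \ = \ \setof{w\in\Gamma^*}{\srces w \in K,\ a\notin\trg w}.
\]
Peeling off the last letter, passing to the classical right quotient $K_d$, and then substituting each $d\in X$ by the orbit-finite language $\kappa_d$ is a clean route, and Lemma~\ref{lem:subst} does the rest.

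There is, however, a mismatch you should be aware of. The paper's own proof does \emph{not} prove the statement for the displayed definition; it proves it for the intended one, in which $w$ is additionally required to be an anti-path (i.e., $\src{\alpha_{i+1}}\notin\trg{\alpha_i}$ for all $i$). This is also what the two applications of the lemma need: in the proof of Lemma~\ref{lem:belown} one writes $\acallgamma_X=\bigcup_{a\in X}\apaX{aX^*}{a}$, and in the proof of Lemma~\ref{lem:DC} the languages $K_{bc}$ built from $\apaX{b\set{b,c}^*}{b}$ are explicitly described as languages of anti-paths. With only ``$\srces w\in K$ and $a\notin\trg w$'' those equalities are false. So the displayed definition is missing the anti-path clause, and the paper's proof silently uses the right one.

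Comparing approaches: the paper encodes each position by the \emph{pair} $(d_i,d_{i+1})\in(X\cup\set a)^2$ of the current and next source (with $d_{n+1}=a$), checks over this finite alphabet that $d_1\ldots d_n\in K$ and $d_{n+1}=a$, and then substitutes $\tuple{d,e}$ by the orbit-finite language of letters with source $d$ and target avoiding $e$. This single substitution simultaneously enforces the anti-path constraint at every position \emph{and} the final ``$a\notin\trg w$'' constraint. Your decomposition handles only the last position; to match the intended statement you would have to refine $\kappa$ to depend on the pair $(d,\text{next source})$, which is exactly the paper's move. In short: your proof is correct for the stated (weaker) language, but if you want the lemma to do the work it is later asked to do, you must upgrade to the paper's pair-encoding.
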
 
\begin{proof}
Consider the finite set $\Delta = (X\cup\set{a})^{2}$ as an alphabet,
and the regular language $P \subseteq \Delta^*$
of all \emph{$K,a$-paths}, i.e., all nonempty sequences 
\[
\tuple{d_1, d_2}\, \tuple{d_2, d_3} \, \ldots\, \tuple{d_{n}, d_{n+1}} \ \in \ \Delta^*
\]
such that 
$d_1 d_2 \ldots d_n \in K$ and 
$d_{n+1} = a$.
The language $\apaX X a$ is obtained from $P$ by the substitution
\[
\tuple{d,e} \quad \mapsto \quad
\bigcup_{\set{e',e''}\in\pow 2 {\atoms-\set{e}}} \tuple{d,\set{e',e''}}
\]
and is thus rational.
\end{proof}

\medskip

%\subsubsection{Lemma~\ref{lem:D} implies Lemma~\ref{lem:C}} 
%
\begin{proof}[\bf Proof of Lemma~\ref{lem:DC}]

We show that rationality of $\Par \acall$ implies rationality of $\Par \acallgamma$.
To this aim we define, for distinct atoms $b, c\in\atoms$, the language
\begin{align*}
%L_{a, \neq a} \ := \ & \bigcup_{b\in \atoms - \set{a}} (a,b) \\
%L_{\neq a, a} \ := \ & \bigcup_{b\in \atoms - \set{a}} (b,a) \\
K_{b c} \ := \ & \  \apaX {b \set{b,c}^*} b \,\, \tuple{b,\set{b,c}} \quad \cup \quad 
\apaX {b \set{b,c}^*} c \,\, \tuple{c,\set{b,c}} 
\end{align*}
of all anti-paths where the last target is $\set{b,c}$, all sources are in $\set{b,c}$, and the first one is $b$.
Languages $K_{b c}$ are rational, due to Lemma~\ref{lem:X}.
Further, for pairwise distinct atoms $a, b, c$ we define the following rational language
\begin{align*}
%K_{b c} \ := \ & \  \Big(\bigcup_{b',c'\neq b} \tuple{b, \set{b',c'}}\Big)^* \, \tuple{b,\set{b,c}} \\
K_{a \set{b,c}} \ := \ & \  \tuple{a,\set{b,c}} \ \ \cup \ \bigcup_{b',c' \in\atoms{-}\set{b}} \tuple{a,\set{b',c'}} \, K_{b c}. % \ \cup \  (a,c) \, L_b.
\end{align*}
Note that 
the source of the first letter in every word in $K_{a \set{b,c}}$  is $a$, and the target of the last letter
is $\set{b,c}$.
Lemma~\ref{lem:DC} follows once we show the following claim:
\begin{claim}
$\Par \acallgamma$ is obtained from $\Par \acall$ by applying \emph{twice} the substitution
\[
\tuple{d, \set{e,f}} \quad \mapsto \quad K_{d \set{e,f}}.
\]
\end{claim}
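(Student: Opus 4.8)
The statement asserts an equality of sets of data vectors, $\Par\acallgamma=\Par\big((\acall)\kappa\kappa\big)$, where I write $\kappa$ for the substitution $\tuple{d,\set{e,f}}\mapsto K_{d\set{e,f}}$ (defined on the letters $\tuple{d,\set{e,f}}$ with $d\notin\set{e,f}$, and taken to be the identity on the remaining, degenerate, letters $\tuple{d,\set{e,f}}$ with $d\in\set{e,f}$, on which $K_{d\set{e,f}}$ was not defined), and $(\acall)\kappa\kappa$ for its double application. The plan is to prove the two inclusions separately. Granting them, Lemma~\ref{lem:DC} follows at once: $\Par\acall$ is rational by hypothesis and every $K_{d\set{e,f}}$ is rational, so two applications of Lemma~\ref{lem:subst} show that $(\acall)\kappa\kappa$ has rational Parikh image, hence so does $\acallgamma$.

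For $\Par\big((\acall)\kappa\kappa\big)\subseteq\Par\acallgamma$ I would in fact prove the language inclusion $(\acall)\kappa\kappa\subseteq\acallgamma$, i.e.\ that $\kappa$ sends anti-cycles to anti-cycles. Unwinding the definitions of $\apaX K a$ (with Lemma~\ref{lem:X}), of $K_{bc}$, and of $K_{a\set{b,c}}$, every word in $K_{d\set{e,f}}$ is an anti-path whose first source is $d$ and whose last target is $\set{e,f}$. Hence replacing a letter $\tuple{d,\set{e,f}}$ of an anti-cycle by such an anti-path leaves the substituted block an anti-path and preserves, cyclically, the gluing condition $\src\alpha_{i+1}\notin\trg\alpha_i$ between blocks (the source $d$ of the next block and the last target $\set{e,f}$ of the current one are exactly the data the original anti-cycle constrained). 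Thus $(\acall)\kappa\subseteq\acallgamma$, and since $\kappa$ fixes the remaining letters, also $(\acall)\kappa\kappa\subseteq\acallgamma$. This argument is insensitive to how many times $\kappa$ is applied; the point of ``twice'' lies entirely in the converse.

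For the converse $\Par\acallgamma\subseteq\Par\big((\acall)\kappa\kappa\big)$, fix an anti-cycle $w$ and aim to realise $\Par w$ by an $(\acall)\kappa\kappa$-word. The only obstruction to $w$ lying in $\acall$ is the presence of degenerate letters $\tuple{s,\set{s,t}}$ whose source belongs to their own target. I would first rearrange $w$ into a Parikh-equivalent anti-cycle by grouping letters of a common source into contiguous blocks along a suitable cyclic ordering of the source values, in the style of Lemma~\ref{lem:2ap}; inside a block of source $s$ the anti-path condition forces every letter except possibly the last to be non-degenerate, so each block carries at most one trailing degenerate letter, which can be folded into the $\Sigma$-skeleton through the first alternative $\tuple{a,\set{b,c}}$ of $K_{a\set{b,c}}$. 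When a source $s$ carries several distinct degenerate letters, one trailing slot is not enough; here I would pair $s$ cyclically with a neighbouring source and interleave the degenerate $s$-letters with that neighbour's letters exactly in the shape of $K_{bc}$ — this is one application of $\kappa$ — leaving only a residual need to separate those interleaving letters, which is absorbed by a second application of $\kappa$. Finally one must check that, after peeling off these (now depth-$\le 2$) degenerate insertions, the outermost layer is a genuine anti-cycle over $\Sigma$, i.e.\ belongs to $\acall$; this uses that the source graph of $w$ is, up to the few excluded edges counted in Corollary~\ref{cor:2n}, a directed clique, so the required cyclic ordering of skeleton sources and all the neighbours invoked above do exist.

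The hard part is this rearrangement step: proving that two applications of $\kappa$ — equivalently, two nested levels of the $K_{bc}$-template — always suffice. A priori a run of degenerate letters can involve many distinct sources in an intricate pattern, and the real work is to reorganise $w$ Parikh-preservingly so that, per skeleton letter, the degenerate letters fit the two-layer shape that $\kappa\kappa$ produces (one layer of sources drawn from a skeleton letter's target, one nested layer drawn from the freely chosen target of an entry letter), all while keeping the outermost layer a legal $\acall$-anti-cycle. I expect the careful bookkeeping of which atoms are ``excluded'' — cannot follow a given source — to be exactly where the proof does its work.
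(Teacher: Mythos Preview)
Your easy direction is correct and coincides with the paper's one-line justification.

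For the hard direction your route diverges from the paper's and leaves the decisive step open. You want to rearrange a $\Gamma$-anti-cycle by grouping same-source letters into contiguous blocks along some cyclic ordering of the sources (``in the style of Lemma~\ref{lem:2ap}''), and then absorb the degenerate letters into at most two nested $K_{bc}$-layers. But producing a cyclic ordering that visits \emph{all} sources is a Hamiltonian-cycle statement: Lemma~\ref{lem:2ap} only converts a \emph{given} cycle in the source graph into an anti-cycle, and your appeal to Corollary~\ref{cor:2n} (``nearly a clique'') is not by itself enough --- in the paper this requires Theorem~\ref{thm:Hamilton} together with Lemma~\ref{lem:sc} and the non-degeneracy hypotheses of Lemma~\ref{lem:above}, all of which are developed for $\Sigma$-vectors of sufficiently large order. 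You would have to port that machinery to $\Gamma$ and handle small order separately. Even granting a skeleton, you give no argument that two nested $K$-layers always absorb the degenerate letters; you flag this as ``the hard part'' and stop. Your closing expectation that the proof hinges on excluded-edge bookkeeping is in fact wrong for this particular claim.

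The paper's argument is far more direct and uses no source-graph machinery. Given $w\in\acallgamma$, it iteratively contracts degenerate material: cyclically shift so that the last degenerate letter $\tuple{b,\set{b,c}}$ sits at the end; take the maximal suffix $u$ whose sources all lie in $\set{b,c}$ (so $u\in K_{bc}\cup K_{cb}$); by maximality the preceding letter $\tuple{a,\set{b',c'}}$ has $a\notin\set{b,c}$, hence $\tuple{a,\set{b',c'}}\,u\in K_{a\set{b,c}}$; replace this block by the single $\Sigma$-letter $\tuple{a,\set{b,c}}$. The inserted letter is non-degenerate, so each step strictly decreases the number of degenerate letters and moves the working position strictly leftward in the original cyclic order. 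Hence the accumulated cyclic shift over all steps but possibly the last stays within one full revolution of $w$: one application of the substitution undoes all those contractions, and a second undoes the final one. That is precisely why ``twice'' suffices.
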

\noindent
(We consider Parikh images of $\acallgamma$ and $\acall$, instead of the languages themselves,
only because we reason below up to cyclic shifts.)
%
%\begin{proof}
From now on we concentrate on the proof of the claim.
Let $\widetilde \acall$ denote the set of data vectors obtained from $\Par \acall$ by
applying twice the above-defined substitution.
By the very definition, $\widetilde \acall \subseteq \Par \acallgamma$.
For the converse inclusion, we prove that every data vector $v \in \Par \acallgamma$ belongs to $\widetilde \acall$.
%by induction on the number of unwanted letters $\tuple{b', \set{b',c'}}\in \Gamma - \Sigma$ in $v$.

%\parainproof{Base case} 
If $v$ contains no unwanted letters from $\Gamma - \Sigma$ then $v\in \Par \acall$,
and the claim follows due to $\Par \acall \subseteq \widetilde \acall$.

%\parainproof{Induction step}
Otherwise, choose an anti-cycle $w\in\acallgamma$ with $v = \Par w$ and 
consider the last  appearance of an unwanted letter $(b, \set{b,c}) \in \Gamma-\Sigma$ in $w$.
Applying a cyclic shift ($\rightarrow$) we can assume, w.l.o.g., that the letter is the last one in $w$.
Let $u$ be the maximal suffix of $w$
that belongs to $K_{b c}$ (or, symmetrically, to $K_{c b}$):
\[
w \quad = \quad w' \, u.
\]
We observe that $w'\neq \emptyword$; indeed, as $\src u = b \in \trg{u} = \set{b,c}$,
the word $u$ itself is not an anti-cycle.

Let $w' = w'' \tuple{a, \set{b', c'}}$; since $w$ is an anti-chain we have $b\notin \set{b', c'}$, and 
by maximality of $u$ we have $a\notin\set{b,c}$.
Then $u' = \tuple{a, \set{b', c'}} \, u \in K_{a, \set{b,c}}$.
%and  in consequence $\tuple{a', \set{b'', c''}} \, v \, u \in  \klp {b'} {c'} a$
Replace the suffix $u'$ by $\tuple{a,\set{b,c}}$, thus obtaining a data word 
$\widetilde w = w'' \tuple{a,\set{b,c}}$ with smaller number of occurrences of unwanted letters. 
We continue in the same way with $\widetilde w$ untill all occurences of letters from $\Gamma - \Sigma$ are eliminated.
A crucial observation is that during elimination of all letters, except for possibly the very last one, 
the total sum of cyclic shifts  ($\rightarrow$) performed does not exceed the full cyclic shift of $w$.
Therefore, Parikh image of the word obtained by elimination of all unwanted letters except for the last one,
belongs to the result of application the substitution once to $\acall$.
In consequence, the final word belongs to the result of applying the substitution twice, as required.
\end{proof}

\medskip

\begin{proof}[\bf Proof of Lemma~\ref{lem:belown}]

%For a word $w = \alpha_1 \ldots \alpha_n \in\Gamma^*$ we denote by $\srces{w}$ the sequence
%$\src {\alpha_1} \ldots \src {\alpha_n}$ of sources.
For a finite subset $X\subseteq \atoms$ 
% a regular language $K \subset X^*$ 
the language
\[
\acallgamma_X \ = \  \acallgamma \ \cap \ \setof{\alpha\in\Gamma}{\src \alpha \in X}^*.
%\setof{w\in\Gamma^*}{\srces w \in K, a\notin\trg w}
\]
%Clearly $\apaX K a \ \subseteq \ \setof{\alpha\in\Sigma}{\src \alpha \in X}^*$.
%
is rational, due to Lemma~\ref{lem:X}, as it equals
\[
\bigcup_{a\in X} \apaX {a X^*} a,
\]
and hence so is its restriction $\acall_X = \acallgamma_X \cap \Sigma^*$.
The language $\acallpar{<n}$, being the union  of all the rational languages $\acall_X$
for subsets $X\subseteq \atoms$ of cardinality $< n$, is thus rational as well.
\end{proof}

\medskip

\begin{proof}[\bf Proof of Lemma~\ref{lem:nondegrat}]

%Fix $n\in\N$ sufficiently large for Lemma~\ref{lem:above}.
Fix $n\geq 6$.
We define \emph{the kernel} of a data vector $v : \Sigma\to\N$ as the intersection of all targets in $v$:
\[
\kerne v \ = \ \bigcap_{\alpha \in \dom v} \trg \alpha.
\] 
The size of the kernel is $0, 1$ or $2$.
For $X\subseteq \atoms$ of size at most $2$,
let 
\[
\ndegordgre X n \ = \ \setof{v\in\ndegn n}{\kerne v = X}.
\]
As $\ndegn n = \bigcup_{X} \ndegordgre X n$, it is enough to show that the sets
$\ndegordgre X n$ are rational.
	This, in turn, is implied by the following decomposition property
	% \todo{with respect to which order? do not use things thatare not defined.} 
	of sets $\ndegordgre X n $:
\begin{align} \label{eq:n}
\ndegordgre X n \ = \ \ndegordeq X n \ + \ \Par {{\Sigma_X}^*},
\end{align}
where $\ndegordeq X n = \setof{v\in\ndegordgre X n}{\ord v = n}$ and
$\Sigma_X = \setof{\alpha\in\Sigma}{X\subseteq \trg \alpha}$. 
Towards showing the decomposition~\eqref{eq:n} we prove that kernel-preserving extensions 
by one letter $\alpha \in\Sigma$ preserve membership in $\ndegall$: 
\[
v \in \ndegall, \ \kerne v = \kerne {v + \alpha} \implies v + \alpha \in \ndegall;
\]
	and also that 
	there always exists a letter $\alpha$ that one can remove from a vector in $\ndegn {n+1}$,
preserving kernel and membership in $\ndegall$:
%for every
%$v \in \ndegn n$ there is some $\alpha \in \dom v$ such that  
%$\kerne v = \kerne {v - \alpha}$ and
%$v - \alpha \in \ndegall$.
\begin{align*}
v \in \ndegn {n+1} \implies \prettyexists{\alpha \in \dom v\ }{
\ & \kerne v = \kerne {v - \alpha},} \\
\ & v - \alpha \in \ndegall.
\end{align*}

Concerning the first property,
suppose $v \in \ndegall$ and $\kerne v = \kerne {v + \alpha}$.
We thus know that $v$ satisfies conditions (1)--(3)  and that
$d = \src \alpha \notin \kerne v$ since $d\notin \trg \alpha$.
This implies that $v + \alpha$ satisfies (1).
For conditions (2)--(3) we consider two separate cases.
If $d \in \srcs v$ then adding $\alpha$ may only increase in-neighbour sets $\inn \_$
and preceeding-letter sets $\precn {\_}$,
and hence $v+\alpha$ satisfies (2)--(3).
Otherwise, suppose $d\notin \srcs v$ is a fresh source.
We reason by contradiction.
If $v + \alpha$ violates (2) for $d$ and some $e\in\srcs v$, then $v$ necessarily violates (1) due to $\inn e = \emptyset$.
If $v + \alpha$ violates (3) for $d$ and some $e\in\srcs v$, then all $\beta \in \dom v$, except for exactly one, satisfy
$\trg \beta = \set{d,e}$ and hence forcedly $\src \beta \neq e$. 
Therefore there is exactly one $e$-sourced letter in $v$
	and $\inn e = \emptyset$, and hence $v$ violates (1) again.
%	\todo{I think it should be clarified here, I don't see whay $\inn e =\emptyset$.}  
%\todooo{SL: rozwinalem ten paragraf}

We now concentrate on the second property.
Removal of a letter from $v$ may only increase (inclusion-wise) the kernel,
say from $X$ to $X'$, but this only happens if $v(\alpha) = 1$, $X'\not\subseteq \trg \alpha$, 
and $X'\subseteq \trg \beta$ for all $\beta \in\dom v - \set{\alpha}$.
By inspection of possible sizes 1, 2 of $X'$, one deduces that $v$ may contain
at most two such kernel-increasing letters.
This eliminates at most $2$ potential sources $\src \alpha$.

Non-degeneracy can be only violated by vertices in the source graph of in-degree below $2$. 
Therefore non-degeneracy of  $v-\alpha$ is guaranteed if removal of $\alpha$ does not decrease
in-degree of any vertex below $2$, i.e., $\src \alpha$ does not belong to $\inn d$ 
for $d\in\srcs v$ of in-degree  $\ind d \leq 2$.
For sufficiently large $n$, similarly as in Claim~\ref{claim:ind},
there are at most $2$ such vertices $d$ in $\srcs v$.
This eliminates at most $4$ potential sources
$\src \alpha$. 
%\todo{I think it is too big jump, you should explain using definition of non-degeneracy.}
%\todooo{SL: rozwinalem ten paragraf}

In total, at most $6$ potential sources $\src \alpha$ are eliminated.
Therefore, as long as $\ord v>6$, there is $\alpha \in \dom v$ such that
$\kerne v = \kerne {v - \alpha}$ and
$v - \alpha \in \ndegall$.
\end{proof}

%
%\begin{lemma} \label{lem:X}
%For every finite $X\subset\atoms$, the languages $\apaX X a$ are rational.
%\end{lemma} 
%%
%\begin{proof}
%Consider the finite set $\Delta = X^{(2)}$ as an alphabet,
%and the regular language $P \subseteq \Delta^*$
%of all \emph{$X$-paths}, i.e., all nonempty sequences 
%\[
%\tuple{d_1, d_2}\, \tuple{d_2, d_3} \, \ldots\, \tuple{d_{n}, d_{n+1}} \ \in \ \Delta^*
%\]
%such that 
%%$d_1 d_2 \ldots d_n \in K$ and 
%$d_{n+1} = d_1$.
%The language $\apaX X a$ is obtained from $P$ by the substitution
%\[
%\tuple{d,e} \quad \mapsto \quad
%\bigcup_{\set{e',e''}\in\pow 2 {\atoms-\set{d,e}}} \tuple{d,\set{e',e''}}
%\]
%and is thus rational.
%\end{proof}
%

%\input{app}
%\input{effective}
%\input{ATTIC/antipaths}

\end{document}